\documentclass[11pt]{article}
\usepackage[utf8]{inputenc}
\usepackage{amsmath,amssymb,amsthm,amsfonts}
\usepackage{mathtools}
\usepackage{bm}
\usepackage{enumitem}
\usepackage{booktabs}
\usepackage{fullpage}
\usepackage[hidelinks]{hyperref}
\theoremstyle{plain}
\newtheorem{theorem}{Theorem}
\newtheorem{proposition}[theorem]{Proposition}
\newtheorem{lemma}[theorem]{Lemma}

\theoremstyle{definition}
\newtheorem{definition}[theorem]{Definition}
\theoremstyle{remark}
\newtheorem{remark}[theorem]{Remark}
\numberwithin{equation}{section}
\newcommand{\Z}{\mathbb{Z}}
\newcommand{\R}{\mathbb{R}}
\newcommand{\C}{\mathbb{C}}
\newcommand{\E}{\mathbb{E}}
\newcommand{\PP}{\mathbb{P}}
\newcommand{\one}{\mathbf{1}}

\newcommand{\diag}{\operatorname{diag}}
\newcommand{\tr}{\operatorname{tr}}

\newcommand{\dist}{\operatorname{dist}}
\newcommand{\Lat}{\mathcal{L}}
\newcommand{\Sym}{\mathrm{Sym}_n}
\newcommand{\NP}{\mathsf{NP}}
\newcommand{\coNP}{\mathsf{coNP}}
\newcommand{\cfar}{c_{\mathrm{far}}}
\newcommand{\cAR}{c_{\mathrm{AR}}}
\newcommand{\norm}[1]{\lVert #1\rVert}

\title{Improving the Constant in the Aharonov--Regev Theorem}
\author{Shuhong Gao\thanks{AI Use Disclosure: The main idea and framework are
due to the author, and Claude Fable 5 assisted in the research and the writing of the paper.}\\[4pt]
\normalsize School of Mathematical and Statistical Sciences\\
Clemson University, Clemson, SC 29634-0975\\
Email: \texttt{sgao@clemson.edu}}
\date{\today}

\begin{document}
\maketitle
\begin{abstract}
The Aharonov--Regev proof that $\mathrm{GapCVP}_{c\sqrt n}$ lies in $\NP\cap\coNP$ uses a verifier that tests dual lattice vectors; its proof gives $c=100$. We study this verifier through a linear program in which a certificate is modeled as a sample from a probability distribution on the dual lattice. This viewpoint gives an exact analysis of the close-target tests: adding gradient and Hessian checks increases the certified close radius from $\sigma^{-1}/4$ to $\sigma^{-1}\sqrt{3/16}$, and higher derivatives cannot improve it within this framework. With Gaussian certificates for far targets, we obtain $\mathrm{GapCVP}_{c\sqrt n},\mathrm{GapSVP}_{c\sqrt n}\in\NP\cap\coNP$ for every $c>2/(\pi\sqrt3)\approx0.3676$. An appendix gives a finite-bit implementation of the verifier. We also show that the Gaussian certificate is not always optimal and solve the unbounded-support program exactly for targets of order two, where its value is determined by the spectral capacity of an odd dual coset.
\end{abstract}

\section{Introduction}\label{sec:intro}

Let $\Lat=B\Z^n$ be a full-rank lattice in $\R^n$, $t\in\R^n$ a target, and
\[\dist(t,\Lat)=\min_{v\in\Lat}\norm{t-v}.\]
For $\gamma\ge1$ the promise problem $\mathrm{GapCVP}_\gamma$ asks to decide whether $\dist(t,\Lat)\le r$ (YES) or $\dist(t,\Lat)>\gamma r$ (NO), given $(B,t,r)$ with rational entries; $\mathrm{GapSVP}_\gamma$ is the analogous problem for the length $\lambda_1(\Lat)$ of a shortest nonzero lattice vector. Aharonov and Regev~\cite{AR05} proved that for an absolute constant $c$,
\begin{equation}\label{eq:AR}
\mathrm{GapCVP}_{c\sqrt n}\in\NP\cap\coNP\qquad\text{and}\qquad\mathrm{GapSVP}_{c\sqrt n}\in\NP\cap\coNP .
\end{equation}
Their $\coNP$ certificate for a NO instance is a list $W=(w_1,\dots,w_N)$ of dual lattice vectors, and the verifier checks that the matrix $\sum_iw_iw_i^{\top}$ has bounded largest eigenvalue and that the trigonometric sum
\[f_W(t')=\frac1N\sum_{i=1}^N\cos(2\pi\langle w_i,t'\rangle)\]
at the (rescaled) target is below a threshold. The theorem is the basic obstacle to proving $\NP$-hardness of lattice problems at polynomial approximation factors: $\NP$-hardness of $\mathrm{GapCVP}_{c\sqrt n}$ under deterministic many-one reductions, or under Cook reductions that make only queries satisfying the promise, would imply $\NP=\coNP$.

Until recently the constant $c$ was of no practical importance, because the best $\NP$-hardness results for Euclidean $\mathrm{GapCVP}$ were at factor $n^{a/\log\log n}$~\cite{DKRS03}, far below any polynomial. That has changed: Song~\cite{Song26} gives a deterministic polynomial-time reduction from 3SAT establishing $\NP$-hardness of $\mathrm{GapCVP}_{n^{1/2-\epsilon}}$ for every fixed $\epsilon>0$. The gap between what is provably hard and what is in $\NP\cap\coNP$ is therefore now the interval $[n^{1/2-\epsilon},c\sqrt n]$: the exponents agree up to $\epsilon$, and what remains is a factor $n^\epsilon$ on the hardness side and the constant $c$ on the $\NP\cap\coNP$ side. This paper is about the constant. The paper~\cite{AR05} states~\eqref{eq:AR} with an unspecified $c$; reading its proof, one finds that it certifies $c=100$ (close instances are rescaled to distance $1/100$, far instances to distance $\sqrt n$), and to our knowledge no smaller value has been established since.

Our approach is to regard a certificate as a sample from a probability distribution on the dual lattice and the verifier, which sees only the empirical distribution of the sample, as a linear program over such distributions. Recall that a \emph{verifier} for the $\coNP$ side of $\mathrm{GapCVP}_\gamma$ is a deterministic polynomial-time algorithm that receives an instance $(B,t,r)$ together with a string of polynomial length, the \emph{certificate}, and accepts or rejects by checking a finite list of inequalities, its \emph{tests}. It must be \emph{sound}, rejecting every YES instance whatever the certificate, and \emph{complete}, accepting every NO instance for some certificate; the existence of such a verifier is what $\mathrm{GapCVP}_\gamma\in\coNP$ means. All verifiers in this paper first rescale the instance by a factor that makes the YES instances those with $\dist(t',\Lat')\le r_k$, where the \emph{close radius} $r_k$ depends only on $n$ and the parameters of the verifier (the index $k$ is the order of its derivative tests, Section~\ref{sec:verifier}), and then apply tests that do not involve $r$.

\paragraph{Certificates as samples from a probability distribution.}
In the verifier of Aharonov and Regev, and in all verifiers of this paper, a certificate is a list $W=(w_1,\dots,w_N)$ of vectors of the dual lattice $\Lat'^*$ of the rescaled lattice, given by their integer coefficients. Such a list is meant to be a sample from a probability distribution on $\Lat'^*$. Since $\Lat'^*$ is a countable set, a probability distribution on it is given by its masses: a function $P:\Lat'^*\to[0,1]$ with $\sum_{w\in\Lat'^*}P(w)=1$, which assigns to a set $A\subseteq\Lat'^*$ the probability $P(A)=\sum_{w\in A}P(w)$ and to a function $\varphi$ on $\Lat'^*$ the expectation $\E_P\varphi=\sum_{w}P(w)\varphi(w)$. The example that matters is the \emph{discrete Gaussian distribution}
\[D_{\Lat'^*}(w)=\frac{\rho(w)}{\rho(\Lat'^*)},\qquad\rho(w)=e^{-\pi\norm w^2},\quad\rho(\Lat'^*)=\sum_{u\in\Lat'^*}\rho(u).\]
To \emph{sample} $w_1,\dots,w_N$ independently from $P$ means that
\[\PP(w_1=u_1,\dots,w_N=u_N)=\prod_{i=1}^NP(u_i)\qquad\text{for all }u_1,\dots,u_N\in\Lat'^*.\]
The verifier of Aharonov and Regev is designed for a certificate that is such a sample: completeness is proved with $w_1,\dots,w_N$ drawn independently from $D_{\Lat'^*}$, and the tests are that every $w_i$ lies in $\Lat'^*$, the \emph{spectral test} $\frac1N\sum_iw_iw_i^{\top}\preceq\sigma^2I$, and the \emph{threshold test} $f_W(t')<\tau$. Throughout, $\sigma^2$ is the spectral bound; for Gaussian samples the expected value of $\frac1N\sum_iw_iw_i^{\top}$ is at most $\frac1{2\pi}I$ (Lemma~\ref{lem:transference}), so the natural choice is $\sigma^2=(1+o(1))/2\pi$, and every radius below scales as $\sigma^{-1}$.

Soundness, on the other hand, must hold for every list whatever, sampled from some distribution or not, and the verifier applies the same tests to every list. These tests never use the order of the list, and all of them are averages over it: they depend on $W$ only through its \emph{empirical distribution}
\[P_W(w):=\frac{\#\{i:\ w_i=w\}}N\qquad(w\in\Lat'^*),\]
the fraction of the entries of the list equal to $w$, repeated entries being counted with their multiplicity. This is a probability distribution on $\Lat'^*$ in the sense above, with finite support and rational masses, and for every function $\varphi$
\[\E_{P_W}\varphi=\sum_{w\in\Lat'^*}P_W(w)\varphi(w)=\frac1N\sum_{i=1}^N\varphi(w_i).\]
When the list is a sample from $P$, its empirical distribution is close to $P$: for every bounded $\varphi$ the average $\frac1N\sum_i\varphi(w_i)$ concentrates around $\E_P\varphi$, by Hoeffding's inequality, and likewise for matrix-valued $\varphi$ by the matrix Bernstein inequality (Lemmas~\ref{lem:hoeffding} and~\ref{lem:mb}). There are thus two distributions in play: a distribution $P$ from which a certificate may be sampled, and the empirical distribution $P_W$ of the certificate actually received. The verifier sees only $P_W$.

Each test of the verifier is therefore a linear constraint on $P_W$: dual membership is a constraint on the support, the spectral test says $\E_{P_W}ww^{\top}\preceq\sigma^2I$, a linear matrix inequality, and the threshold test bounds the linear functional $\E_{P_W}\cos(2\pi\langle w,t'\rangle)$. We call a verifier all of whose tests are of this kind a \emph{linear verifier} (Definition~\ref{def:linearverifier}); the Aharonov--Regev verifier and all the verifiers of this paper are linear, while a verifier that, say, runs a lattice algorithm on $W$ is not. A linear verifier is described by its \emph{feasible set} $\mathcal F(\Lat',t')$, the convex set of probability distributions on $\Lat'^*$ satisfying its constraints other than the threshold test, and by the \emph{value}
\[\Phi(\Lat',t'):=\inf_{P\in\mathcal F(\Lat',t')}\E_P\cos(2\pi\langle w,t'\rangle)\]
of the resulting linear program. The empirical distribution of an accepted certificate is feasible with objective below $\tau$; conversely, a feasible distribution supported on vectors of polynomial norm with objective below $\tau$ yields an accepted certificate by sampling, at the cost of the slack of concentration (Theorem~\ref{thm:constant}). So, up to that slack, \emph{the verifier accepts the pair $(\Lat',t')$ for some certificate if and only if $\Phi(\Lat',t')<\tau$.} A \emph{prover} is any rule that produces a feasible distribution for the pairs it must certify, efficient or not, since $\coNP$ asks only for existence. The \emph{Gaussian prover}, which is the one of Aharonov and Regev, takes $P=D_{\Lat'^*}$; an \emph{optimal prover} takes distributions whose objective equals or approaches $\Phi(\Lat',t')$. (For the spectral and gradient tests the infimum is attained, since the bounded second moments make the feasible set compact for weak convergence; with the Hessian test it need not be, Theorem~\ref{thm:order2}.)

\paragraph{The close side and the far side.}
The targets $t'$ with $\dist(t',\Lat')\le r_k$ form the \emph{close side}: there the verifier must reject every certificate. Soundness is the statement
\[\Phi(\Lat',t')\ \ge\ \tau\qquad\text{whenever }\dist(t',\Lat')\le r_k,\]
and it is proved by exhibiting, for each such pair, a dual solution of the program with value at least $\tau$: on the close side the dual solution is an \emph{envelope inequality} in the phases of the certificate (below). The targets with $\dist(t',\Lat')>\gamma r_k$ form the \emph{far side}: there some certificate must be accepted. Completeness at factor $\gamma$ is the statement
\[\Phi(\Lat',t')\ <\ \tau\qquad\text{whenever }\dist(t',\Lat')>\gamma r_k,\]
and it is proved by exhibiting a feasible distribution with small objective; what a prover must overcome is the dual solution of the far side, a \emph{quadratic form} $M\succeq0$ with $\cos(2\pi\langle w,t'\rangle)+w^{\top}Mw\ge\lambda$ on $\Lat'^*$. The \emph{far threshold} of a prover (or of the verifier, for an optimal prover) is the least $c'$ such that, at the spectral level $\sigma^2=1/2\pi$, the prover certifies every target at distance greater than $c'\sqrt n$, for all large $n$. The constant $c$ of the verifier, the smallest with $\gamma=c\sqrt n$, is then in the limit of large $n$ the ratio of the far threshold to the close radius at the same spectral level (Theorem~\ref{thm:constant}): the two sides are the same program read at different targets. The duality theory of Section~\ref{sec:duality} applies to every linear verifier with a strictly feasible point (Remark~\ref{rem:linearverifiers}); the exact results of Section~\ref{sec:close} and the identification of the constant in Section~\ref{sec:far} are proved for the verifiers that test derivatives of $f_W$ at the target.

\paragraph{Duality and transference.}
Two dualities meet in this problem. The first is the duality between a lattice and its dual, the subject of \emph{transference theorems}, inequalities between primal and dual invariants such as Banaszczyk's bound $n/2$ on the product of the covering radius of $\Lat$ and $\lambda_1(\Lat^*)$~\cite{Banaszczyk93}. Banaszczyk proves them with the coset mass function
\[f_\Lat(x)=\frac{\rho(\Lat+x)}{\rho(\Lat)}=\E_{w\sim D_{\Lat^*}}\cos(2\pi\langle w,x\rangle),\]
which is at least $e^{-\pi\dist(x,\Lat)^2}$ near the lattice and at most $\beta(c)^n$ at distance $c\sqrt n$ when $c>1/\sqrt{2\pi}$ (Lemma~\ref{lem:facts}), and his transference theorems place $\mathrm{GapCVP}_n$ and $\mathrm{GapSVP}_n$ in $\coNP$; the theorem of Aharonov and Regev improves this factor $n$ to $c\sqrt n$. Their verifier is Banaszczyk's proof made checkable. The expectation over $D_{\Lat^*}$ becomes an average over $N$ samples; the far bound survives sampling, by Hoeffding's inequality, but the near bound rests on the positivity of the Gaussian weights and cannot be checked on a list, and the spectral test together with the Taylor bound $\cos\varphi\ge1-\varphi^2/2$ takes its place; the spectral level $1/2\pi$ is itself a transference fact (Lemma~\ref{lem:transference}).

The second is linear-programming duality, and it makes the connection exact. If $t'=y+e$ with $y\in\Lat'$, then $\cos(2\pi\langle w,t'\rangle)=\cos(2\pi\langle w,e\rangle)$ for $w\in\Lat'^*$, and the inequality $\cos2\pi\theta+4\pi\theta^2\ge\pi/4$, valid for all real $\theta$ (Lemma~\ref{lem:env01}), gives a solution of the dual program (Theorem~\ref{thm:prover-duality}),
\[M=4\pi\,ee^{\top},\qquad\lambda=\frac\pi4,\qquad\text{of value}\qquad\lambda-\sigma^2\tr M=\frac\pi4-4\pi\sigma^2\norm e^2,\]
which is at least $\tau$ exactly when $\norm e$ is at most the close radius. Soundness is thus the statement that \emph{every close target has a dual certificate of rank one, built from its offset}. The Taylor bound gives the certificate $M=2\pi^2ee^{\top}$, $\lambda=1$, which is weaker for small $\tau$ and with $\tau=\frac12$ and $\sigma^2=3$ yields the radius $1/(2\pi\sqrt3)$ behind the constant $100$; the derivative tests add dual multipliers $u\parallel e$ and $U\propto ee^{\top}$ (Remark~\ref{rem:dual-reading}). The dual program admits every $M\succeq0$, not only those of rank one, so it is a convex relaxation of closeness, and completeness at distance $c\sqrt n$ is the statement that \emph{no far target has a dual certificate}: a transference theorem from the phases $\langle w,t'\rangle$ of the dual vectors to the distance of the target. Banaszczyk's Gaussian argument proves it for every $c>1/\sqrt{2\pi}$, because integrating the dual inequality against $D_{\Lat'^*}$ is weak duality with the Gaussian prover as the feasible point. The constant of a verifier is the gap of this relaxation, measured in units of the close radius. At targets of order two the relaxation becomes the spectral capacity of the odd part of the dual lattice, and completeness there is a coset analogue of Banaszczyk's transference theorem in which $\lambda_1(\Lat^*)$ is replaced by this capacity (Section~\ref{sec:order2}).

\subsection{Results}\label{sec:results}

\paragraph{Phases and the phase program.}
On the close side the program reduces to one real variable. For a target $t'$ with closest lattice point $y$ and offset $e=t'-y$, the \emph{phases} of a certificate are the real numbers $\theta_i=\langle w_i,e\rangle$; since $\langle w_i,y\rangle\in\Z$, every test on the derivatives of $f_W$ at $t'$ is a statement about the empirical distribution of the phases: the spectral test implies $\frac1N\sum_i\theta_i^2\le\sigma^2\norm e^2$, the threshold test is $\frac1N\sum_i\cos2\pi\theta_i<\tau$, and the $m$-th derivative of $f_W$ at $t'$ in the direction $e$ is $(2\pi)^m$ times the average of $\theta_i^m\cos(2\pi\theta_i+m\pi/2)$ (Lemma~\ref{lem:phase}). The close side of a verifier that tests the spectral condition, the threshold and the first $k$ derivatives is therefore governed by the \emph{phase program}
\begin{equation}\label{eq:vk-intro}
v_k:=\inf\Big\{\E\theta^2:\ \E\cos2\pi\theta\le0,\ \ \E\big[\theta^m\cos(2\pi\theta+m\pi/2)\big]=0\ \ (1\le m\le k)\Big\},
\end{equation}
the infimum over all probability distributions of a real random variable $\theta$ with $\E|\theta|^{\max(2,k)}<\infty$: a certificate with $f_W(t')\le0$ forces $\sigma^2\norm e^2\ge v_k$, so the close radius is $\sigma^{-1}\sqrt{v_k}$ in the limit of vanishing tolerances. Our first result solves~\eqref{eq:vk-intro} for every $k$; in its statement, and throughout the paper, $\delta_a$ denotes the probability distribution with mass $1$ at the point $a$.

\begin{theorem}[The close side]\label{thm:close}
$v_0=\frac1{16}$, $v_1=\frac18$, and $v_k=\frac3{16}$ for every $k\ge2$. The infimum is attained for $k\le2$, by the distributions $\frac12\delta_{1/4}+\frac12\delta_{-1/4}$, $\frac12\delta_0+\frac12\delta_{1/2}$ and $\frac34\delta_{1/4}+\frac14\delta_{3/4}$ respectively, and is certified by the envelope inequalities, valid for every real $\theta$,
\begin{align}
\theta^2&\ \ge\ \frac1{16}-\frac1{4\pi}\cos2\pi\theta,\label{eq:env0}\\
\theta^2&\ \ge\ \frac18-\frac18\cos2\pi\theta-\frac1\pi\,\theta\sin2\pi\theta,\label{eq:env1}\\
\theta^2&\ \ge\ \frac3{16}-\frac5{8\pi}\cos2\pi\theta-\frac12\,\theta\sin2\pi\theta+\frac2\pi\,\theta^2\cos2\pi\theta.\label{eq:env2}
\end{align}
For $k\ge3$ the infimum is not attained, but for every $k$ and every $\epsilon>0$ there is a probability distribution with finite support in $\frac14\Z$ and rational masses which satisfies every constraint of~\eqref{eq:vk-intro} and has $\E\theta^2\le\frac3{16}+\epsilon$.
\end{theorem}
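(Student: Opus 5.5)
The proof is LP duality on the real line: each envelope inequality gives a lower bound, an explicit distribution meeting the constraints gives a matching upper bound, and the case $k\ge3$ needs a limiting construction.

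\textbf{Lower bounds.} For $k=0,1,2$ take expectations of \eqref{eq:env0}--\eqref{eq:env2} under a feasible $\theta$. The constraints $\E\cos2\pi\theta\le0$, $\E\theta\sin2\pi\theta=0$ (the case $m=1$, since $\cos(2\pi\theta+\pi/2)=-\sin2\pi\theta$) and $\E\theta^2\cos2\pi\theta=0$ (the case $m=2$, up to sign) kill every term except the constant. The coefficient of $\cos2\pi\theta$ is $-\frac1{4\pi}$, $-\frac18$, $-\frac5{8\pi}$ respectively, which is nonpositive, so $\E\cos2\pi\theta\le0$ gives the right direction. Since $v_k$ is nondecreasing in $k$ (more constraints), $v_k\ge\frac3{16}$ for all $k\ge2$.

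To verify each envelope, set $g(\theta)=\theta^2-\text{RHS}$. For \eqref{eq:env0} we have $g\ge0$ with equality at $\theta=\pm\frac14$. This follows from the elementary check $g''=2-\pi\cos2\pi\theta$ together with a study of $g$ on $[0,\frac12]$; for $|\theta|\ge\frac12$ we have $\theta^2\ge\frac14>\frac1{16}+\frac1{4\pi}$. For \eqref{eq:env1} and \eqref{eq:env2}, first note that $g$ is even. For large $|\theta|$ the $\theta^2$ term dominates. For \eqref{eq:env2} we factor $\theta^2(1-\frac2\pi\cos2\pi\theta)$, whose bracket is at least $1-\frac2\pi>0$, so $g\to\infty$. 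This leaves a compact interval, roughly $[0,2]$. There I would show nonnegativity by checking the double zeros: at $0$ and $\frac12$ for \eqref{eq:env1}, and at $\frac14$ and $\frac34$ for \eqref{eq:env2}, where $g=g'=0$. Then use a sign/convexity analysis on subintervals, or, if needed, interval arithmetic on finitely many pieces. This verification is tedious but routine. I expect \eqref{eq:env2} to be the fiddliest, because of the sign changes of $\cos2\pi\theta$ multiplying $\theta^2$.

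\textbf{Attainment for $k\le2$.} Direct computation suffices. For $\frac12\delta_{\pm1/4}$: $\cos=0$ and $\E\theta^2=\frac1{16}$. For $\frac12\delta_0+\frac12\delta_{1/2}$: $\E\cos=\frac12(1-1)=0$, $\theta\sin2\pi\theta=0$ at both atoms, and $\E\theta^2=\frac18$. For $\frac34\delta_{1/4}+\frac14\delta_{3/4}$: $\cos=0$ at both atoms, so the $m=0,2$ constraints hold. The $m=1$ constraint is $\E\theta\sin2\pi\theta=\frac34\cdot\frac14-\frac14\cdot\frac34=0$, and $\E\theta^2=\frac34\cdot\frac1{16}+\frac14\cdot\frac9{16}=\frac3{16}$.

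\textbf{$k\ge3$: upper bound and non-attainment.} For $\theta\in\frac14\Z$ the function $\theta^m\cos(2\pi\theta+m\pi/2)$ vanishes unless $\theta+m/4\in\frac12\Z$, and is otherwise $\pm\theta^m$. So on the lattice $\frac14\Z$ each constraint becomes a polynomial moment condition restricted to a parity class of atoms. Start from the $k=2$ optimizer and perturb it by adding mass $\eta$ spread over a few atoms far out, at points $\pm L$ and $\pm L/4$-type locations in $\frac14\Z$. I would choose the perturbation so that the finitely many linear conditions $m=0,\dots,k$ hold. This is solvable for nonnegative rational masses, since it is a linear system with rational coefficients whose solution depends continuously on parameters. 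The key point is to choose the masses as $\eta\sim L^{-(k+1)}$, so that they correct the odd/even moments of order up to $k$ while contributing $O(\eta L^2)\to0$ to $\E\theta^2$. I expect the main obstacle to be arranging exact equality in all $k$ constraints with nonnegative masses. To get this, I would keep a two-parameter family, mixing the $k=2$ optimizer with $\frac12\delta_0+\frac12\delta_{1/2}$-type atoms to gain slack, and then solve a square triangular system in the moments $L^m$.

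Non-attainment follows from complementary slackness with \eqref{eq:env2}. An optimizer for $k\ge3$ would have value $\frac3{16}$, hence it would be feasible and optimal for $k=2$. Then it must be supported on the zero set of the $g$ of \eqref{eq:env2}, which I expect to be $\{\pm\frac14,\pm\frac34\}$, and it must satisfy $\E\cos2\pi\theta=0$. On these four points the $m=1$ and $m=3$ constraints together with the symmetric structure force a contradiction. Specifically, $\theta^3\cos(2\pi\theta+3\pi/2)=\theta^3\sin2\pi\theta$ gives the condition $\sum p_\theta\theta^3\sin2\pi\theta=0$, alongside $\sum p_\theta\theta\sin2\pi\theta=0$. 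Since $\theta\sin2\pi\theta$ and $\theta^3\sin2\pi\theta$ are even in $\theta$, both conditions reduce to the same two aggregated masses, at $|\theta|=\frac14$ and $|\theta|=\frac34$. The two equations are incompatible with nonnegative masses summing to $1$, because $\frac1{4^3}\ne\frac{9}{4}\cdot\frac1{4}\cdot\frac1{16}\cdot\ldots$ gives different ratios.
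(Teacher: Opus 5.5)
The lower bounds, the three optimizers and the complementary-slackness argument for non-attainment match the paper. The upper bound $v_k\le\frac3{16}$ for $k\ge3$ (Lemma~\ref{lem:faratoms}) is a genuine gap: what you describe does not establish it, and two of your heuristics point the wrong way.

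\emph{The scaling is wrong.} The $k=2$ optimizer has $\E J_3=\frac34\cdot\frac1{64}-\frac14\cdot\frac{27}{64}=-\frac3{32}$, a defect of order one. Mass $\eta$ at distance about $L$ moves $\E J_3$ by about $\eta L^3$, and with $\eta\sim L^{-(k+1)}$ this is $L^{2-k}\to0$, so nothing is corrected.

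\emph{The real difficulty is not addressed.} At the right scale $\eta\sim L^{-3}$, the same atoms move $\E J_i$ by about $L^{i-3}\to\infty$ for odd $i\ge5$. The problem is to cancel these growing terms with \emph{nonnegative} masses. This works only because at odd multiples of $\frac14$ one has $J_i(\theta)=\theta^i\sin2\pi\theta=\pm\theta^i$, so the far atoms must alternate between $\frac14+\Z$ and $\frac34+\Z$. Saying that the solution of a linear system ``depends continuously on parameters'' gives positivity only after you compute the limiting solution and find it strictly positive. That computation is the missing step.

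\emph{Mixing in atoms at $0$ and $\frac12$ buys no slack.} Since $J_2(\frac12)=-\frac14$, such atoms violate the Hessian constraint; consistently, $\frac12\delta_0+\frac12\delta_{1/2}$ has $\E\theta^2=\frac18<v_2$. At integer or half-integer sites, all the even constraints come back into play.

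\emph{What the paper does.} It puts every atom at an odd multiple of $\frac14$, where $\cos2\pi\theta=0$. This kills $\E\cos2\pi\theta$ and every $\E J_{2i}$, leaving only normalization and $\E J_1=\E J_3=\dots=\E J_{2m+1}=0$. It takes sites $s_j=T^{3^{j-1}}+c_j$ with $c_j$ alternating between $\frac14$ and $\frac34$. After substituting $d_j=Y_jz_j/s_j^{2j+1}$, the system tends to a lower bidiagonal matrix with entries $\pm1$, whose solution is $\frac3{32}(1,\dots,1)>0$.

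\emph{Your single-scale picture can also be repaired.} Take masses $L^{-3}e_r$ at sites near $La_r$ with $0<a_1<\dots<a_m$. Let $\varepsilon_r=\pm1$ be the sign of $\sin2\pi\theta$ at the $r$-th site. The limiting equations for $J_3,\dots,J_{2m+1}$ then form a Vandermonde system in the nodes $a_r^2$ for the quantities $\varepsilon_re_ra_r^3$. Its solution is $\frac3{32}$ times the Lagrange weights at $0$, which have signs $(-1)^{r-1}$. Choosing $\varepsilon_r=(-1)^{r-1}$ therefore makes every $e_r>0$.

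Two smaller points. First, the last sentence of your non-attainment argument should read as follows. Let $p,q$ be the masses at $|\theta|=\frac14$ and $|\theta|=\frac34$. Then $\E J_1=0$ gives $p=3q$ and $\E J_3=0$ gives $p=27q$, which is incompatible with $p+q=1$.

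Second, that argument needs the zero set of $h$ to be \emph{exactly} $\{\pm\frac14,\pm\frac34\}$, which you only expect. Proving this is part of the verification of \eqref{eq:env2} that you call routine, but it is the technical heart of the close side. On $[-\frac34,\frac34]$ the maximum of $h$ is only about $0.024$, and $h(0)\approx0.011$. Naive interval enclosures cannot certify nonnegativity on a cell containing a double zero. You would need explicit neighbourhoods of $\pm\frac14,\pm\frac34$ on which $h''>0$ (e.g.\ $h''(\frac14)=6-\frac{\pi^2}2>0$), together with strictly positive lower bounds elsewhere. The paper does this exactly, through a chain of three auxiliary functions (Section~\ref{sec:envproof}).
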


The inequality~\eqref{eq:env2} is tight at $\theta=\pm\frac14,\pm\frac34$, where it has double zeros (Lemma~\ref{lem:env2}). What makes the derivative tests different from every test of $f_W$ at a point is that the phases enter them as real numbers and not modulo $1$: a test of $f_W$ at a point sees $\theta$ only through $\cos2\pi j\theta$, while the gradient sees $\theta\sin2\pi\theta$ and the Hessian $\theta^2\cos2\pi\theta$. The values are realized by lattice certificates: for every $r>\sigma^{-1}\sqrt{v_k}$ and every dimension $n$ there are a lattice, a target at distance exactly $r$, and a certificate passing the spectral test with $f_W(t)=0$ and $\nabla^mf_W(t)=0$ for $1\le m\le k$ (Theorem~\ref{thm:exact}). So no test on the derivatives of $f_W$ at the target, of any order, raises the close radius above $\sigma^{-1}\sqrt{3/16}$.

The second result is the constant that follows for the \emph{Hessian-augmented Aharonov--Regev verifier}, the verifier of~\cite{AR05} with the spectral bound at its Gaussian expected value and two added tests, $\norm{\nabla f_W(t')}\le\eta_1$ and $\norm{\nabla^2f_W(t')}\le\eta_2$ (Section~\ref{sec:verifier}). The extra tests cost nothing for completeness, because for the Gaussian certificate, $N$ samples from the discrete Gaussian on the dual lattice, $f_W$ is close to a ratio of Gaussian masses whose gradient and Hessian are exponentially small at a far target.

\begin{theorem}[Main theorem]\label{thm:main}
For every $c>2/(\pi\sqrt3)\approx0.3676$,
\[\mathrm{GapCVP}_{c\sqrt n},\ \mathrm{GapSVP}_{c\sqrt n}\ \in\ \NP\cap\coNP .\]
The $\coNP$ verifier is the Hessian-augmented Aharonov--Regev verifier; it is sound in every dimension $n\ge8$ and complete in every dimension $n\ge n_0(c)$, an explicit cutoff (Table~\ref{tab:n0}). Its certificate consists of the integer coefficient vectors of $N=100n^2\lceil\log_2n\rceil$ dual lattice vectors, and it evaluates one trigonometric sum, its gradient and Hessian, and checks three positive-semidefiniteness conditions, all in exact rational arithmetic (Appendix~\ref{app:precision}).
\end{theorem}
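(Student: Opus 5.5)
The plan has three parts: a soundness part resting on the envelope inequality \eqref{eq:env2}, a completeness part resting on the Gaussian prover, and a reduction from GapSVP to GapCVP. The $\NP$ side is trivial, since a YES certificate is a lattice point within distance $r$. All the work is on the $\coNP$ verifier.

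\textbf{Soundness.} Rescale so that YES instances satisfy $\dist(t',\Lat')\le r_2$, with $r_2$ slightly below $\sigma^{-1}\sqrt{3/16}$ and $\sigma^2=(1+o(1))/2\pi$. Take an accepted certificate $W$, write $t'=y+e$, and set $\theta_i=\langle w_i,e\rangle$. By Lemma~\ref{lem:phase}, the threshold, gradient and Hessian tests in the direction $e$ bound, up to the tolerances $\tau,\eta_1,\eta_2$, the averages of $\cos2\pi\theta_i$, $\theta_i\sin2\pi\theta_i$ and $\theta_i^2\cos2\pi\theta_i$. The spectral test gives $\frac1N\sum\theta_i^2\le\sigma^2\norm e^2$. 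Averaging \eqref{eq:env2} over the $\theta_i$ gives
\[\sigma^2\norm e^2\ \ge\ \tfrac3{16}-\tfrac5{8\pi}\tau-\tfrac12\eta_1\norm e-\tfrac2\pi\eta_2\norm e^2 .\]
Choosing $\tau,\eta_1,\eta_2=o(1)$ (say $1/n$) contradicts $\norm e\le r_2$. The condition $n\ge8$ should enter only in making these explicit slacks fit.

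\textbf{Completeness.} Let $\dist(t',\Lat')>c\sqrt n\,r_2$ with $c>2/(\pi\sqrt3)$. At $\sigma^2=1/2\pi$ this distance is $c'\sqrt n$ with $c'=c\sqrt{3/16}\sqrt{2\pi}>1/\sqrt{2\pi}$, since $\sqrt{3/16}\cdot\sqrt{2\pi}\cdot 2/(\pi\sqrt3)=1/\sqrt{2\pi}$. This identity is exactly where the constant comes from. Sample $N$ vectors from $D_{\Lat'^*}$. Lemma~\ref{lem:facts} gives $f_{\Lat'}(t')\le\beta(c')^n$. For the gradient and Hessian, differentiate $\rho(\Lat'+t')/\rho(\Lat')$. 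Each derivative is a Gaussian sum weighted by polynomials in $x-v$ of degree at most $2$, which can be bounded by $\mathrm{poly}(n)\beta(c'')^n$ for any $c''$ slightly below $c'$, still above $1/\sqrt{2\pi}$, via Banaszczyk's tail bound. Hoeffding's inequality (Lemma~\ref{lem:hoeffding}) transfers these expectations to the empirical averages, with entrywise bounds via $\norm{w}\le\sqrt n$ w.h.p. The matrix Bernstein inequality (Lemma~\ref{lem:mb}) with Lemma~\ref{lem:transference} gives the spectral test at $(1+o(1))/2\pi$. With $N=100n^2\lceil\log_2 n\rceil$, a union bound shows that some sample passes all tests once $n\ge n_0(c)$, which is explicit from the constants in $\beta$ and the concentration bounds. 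The finite-bit implementation is Appendix~\ref{app:precision}.

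\textbf{GapSVP.} Use the Goldreich--Micciancio--Safra--Seifert reduction: $\lambda_1(\Lat)$ is the minimum over $i$ of $\dist(b_i,\Lat_i)$, where $\Lat_i$ is the sublattice with $b_i$ replaced by $2b_i$. This reduction preserves the factor and uses only promise-respecting queries, so it gives $n$ GapCVP certificates for the $\coNP$ side.

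\textbf{Main obstacle.} I expect the hardest step to be the quantitative control of the gradient and Hessian of the Gaussian coset mass at far targets with explicit constants, because these terms carry polynomial factors in $\norm{x-v}$ that must be absorbed into a slightly worse exponential rate.
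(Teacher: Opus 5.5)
Your architecture is the paper's: soundness by integrating \eqref{eq:env2} against the empirical phase distribution (Proposition~\ref{prop:sound}), completeness with the Gaussian certificate, the means of $f$, $\nabla f$, $\nabla^2 f$ bounded through \eqref{eq:peel} and \eqref{eq:rhosprime}, matrix Bernstein for the spectral test, and the GMSS sublattices for GapSVP. The gap is in passing from the means of the gradient and Hessian to the empirical sums at the prescribed length $N=100n^2\lceil\log_2n\rceil$. Entrywise Hoeffding with $\norm{w}\le\sqrt n$ uses only that the entries of $\sin(2\pi\langle w,t'\rangle)w$ lie in $[-\sqrt n,\sqrt n]$ and those of $\cos(2\pi\langle w,t'\rangle)ww^{\top}$ in $[-n,n]$. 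After the union bound over coordinates and the passage from entries to Euclidean or operator norm, this gives deviation bounds of order $n\sqrt{\log n/N}=\Theta(1)$ for $\nabla f_W(t')$ and $n^2\sqrt{\log n/N}=\Theta(n)$ for $\nabla^2 f_W(t')$. Neither is $o(1)$. With a tolerance $\eta_1$ of constant size, $r_2(n)$ no longer tends to $\sqrt{3\pi/8}$, so the constant $2/(\pi\sqrt3)$ is lost. Your suggested tolerances $\eta_1=\eta_2=1/n$ are too small regardless of the tool. With $N=\Theta(n^2\log n)$ samples, the empirical gradient and Hessian at a typical far target fluctuate by at least order $(n\log n)^{-1/2}$ around their exponentially small means, so the Gaussian certificate fails such tests.

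The paper takes tolerances $\eta_1=n^{-1/2}$ and $\eta_2=4.5n^{-1/2}$ (with $\tau=1/n$, $\delta=0.4n^{-1/2}$). These are still $o(1)$, so the close radius is unaffected in the limit.
\begin{itemize}
\item For the gradient, the second moment suffices: $\E\norm{\sin(2\pi\langle w,t'\rangle)w}^2\le\E\norm w^2\le n/2\pi$ by Lemma~\ref{lem:transference}. Markov's inequality then gives deviation $2\pi\cdot0.08\,n^{-1/2}$ except with probability at most $1/(4.02\log_2n)$ (Lemma~\ref{lem:EBcomplete}(ii)).
\item For the Hessian, a genuinely matrix inequality is needed: the two-sided bound of Lemma~\ref{lem:mb}, applied to samples conditioned on $\norm w\le\sqrt n$, with $R=2n/N$ and variance proxy $(1+\epsilon_n)n/2\pi N$. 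This gives operator-norm deviation $4\pi^2\cdot0.1\,n^{-1/2}$. The bias from conditioning is at most $\kappa_n=6n\beta(1)^n$ (Lemma~\ref{lem:trunc}), and the event $\mathcal E^c$ costs $N\beta(1)^n$ (Lemma~\ref{lem:hessian}).
\end{itemize}
Only with these bounds does the union bound close at the stated $N$, with total failure probability below $0.3527$ for $n\ge n_0(c)$. With a larger polynomial $N$, say $n^6$, entrywise Hoeffding would suffice for the asymptotic membership claim. It would not give the certificate length and the cutoffs of Table~\ref{tab:n0} that the theorem asserts.
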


The cutoff $n_0(c)$ is only a restriction of the proof: in a fixed dimension the closest vector problem is solvable exactly in polynomial time (Kannan~\cite{Kannan87}), so the verifier decides the dimensions below $n_0(c)$ itself. The number $2/(\pi\sqrt3)$ is the ratio of Banaszczyk's far threshold $1/\sqrt{2\pi}$ to the close radius $\sqrt{2\pi\cdot3/16}=\sqrt{3\pi/8}=1.0854$ of Theorem~\ref{thm:close} at the Gaussian spectral level $\sigma^2=1/2\pi$; the constants for the original tests and for the gradient test alone are in Table~\ref{tab:hierarchy}.

The third group of results is the duality theory, which is what makes the two sides one subject.

\begin{theorem}[Duality, informal]\label{thm:duality-intro}
\begin{enumerate}[label=\textup{(\roman*)}]
\item The phase program with tolerances $\tau>0$ on the threshold and $\eta'_m>0$ on the derivative constraints has the same value as its dual, the best lower bound $v-a\tau-\sum_m|b_m|\eta'_m$ over envelope inequalities $\theta^2\ge v-a\cos2\pi\theta-\sum_mb_m\theta^m\cos(2\pi\theta+m\pi/2)$ with $a\ge0$, and the dual is attained \textup{(Theorem~\ref{thm:phase-duality})}.
\item The prover's program at a pair $(\Lat,t)$ with spectral level $\sigma^2$ and tolerances $\eta_1,\eta_2>0$ on $\norm{\nabla f_W(t)}$ and $\norm{\nabla^2f_W(t)}$ has the same value as its dual
\[\max\Big\{\lambda-\sigma^2\tr M-\frac{\eta_1}{2\pi}\norm u-\frac{\eta_2}{4\pi^2}\norm U_*\Big\},\]
the maximum over $M\succeq0$, $u\in\R^n$, $U\in\Sym$ and $\lambda\in\R$ such that
$\cos(2\pi\langle w,t\rangle)+w^{\top}Mw+\langle u,w\rangle\sin(2\pi\langle w,t\rangle)+\cos(2\pi\langle w,t\rangle)\,w^{\top}Uw\ge\lambda$ for every $w\in\Lat^*$ \textup{(Theorem~\ref{thm:prover-duality})}.
\item The constant achievable by the family of Hessian-augmented verifiers with the best possible prover is $\cAR^{(2)}=\sqrt{8/(3\pi)}\;\cfar^{(2)}=0.9213\,\cfar^{(2)}$, where $\cfar^{(2)}$ is the far threshold of the prover's program with the gradient and Hessian constraints \textup{(Theorem~\ref{thm:constant})}.
\end{enumerate}
\end{theorem}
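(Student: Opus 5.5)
Theorem~\ref{thm:duality-intro} is informal and summarizes three results proved later, so the plan is to prove each part by standard conic duality, with the lattice-specific work confined to verifying Slater-type conditions and evaluating the constant in (iii).

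\textbf{Part (i).} Write the phase program with tolerances as a linear program over probability measures $\mu$ on $\R$ with finite $\max(2,k)$-th moment. The objective is $\int\theta^2\,d\mu$ and the constraints are $\int\cos2\pi\theta\,d\mu\le\tau$ and $|\int\theta^m\cos(2\pi\theta+m\pi/2)\,d\mu|\le\eta'_m$. Weak duality is immediate: integrate the envelope inequality against $\mu$ and use $a\ge0$ together with $|b_m|\eta'_m\ge -b_m\int(\cdot)\,d\mu$.

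For strong duality I would use the moment-cone formulation. Consider the convex cone of vectors $\bigl(1,\E\cos2\pi\theta,\E\theta^m\cos(\cdot),\E\theta^2\bigr)$. The positive tolerances put the point $\delta_{1/4}$-type mixtures (for instance $\frac12\delta_{1/4}+\frac12\delta_{-1/4}$, slightly perturbed) in the interior of the constraint region, so Slater's condition holds. Attainment of the dual then follows because the set of optimal dual multipliers is bounded under Slater. The one point needing care is that the constraint functions grow like $|\theta|^k$ while the objective grows only like $\theta^2$. I would handle this by restricting to measures with bounded support $[-R,R]$, applying finite-dimensional LP duality (Rockafellar/Shapiro semi-infinite duality on a compact index set), and letting $R\to\infty$. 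The dual values are monotone in $R$ and bounded by the primal value, and the multipliers stay bounded by the Slater point, so a limiting multiplier exists and gives an envelope valid on all of $\R$.

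\textbf{Part (ii).} The prover's program is the same argument on the countable set $\Lat^*$, with the spectral LMI $\E ww^\top\preceq\sigma^2 I$ and the norm-ball constraints on the gradient $-2\pi\E w\sin(2\pi\langle w,t\rangle)$ and the Hessian $-4\pi^2\E ww^\top\cos(2\pi\langle w,t\rangle)$. Dualizing the LMI gives $M\succeq0$. Dualizing the Euclidean ball gives the term $\eta_1\|u\|/2\pi$, and dualizing the operator-norm ball gives the nuclear norm $\|U\|_*$, with the $2\pi$ and $4\pi^2$ factors absorbed. Weak duality is again pointwise integration. For strong duality I would take as Slater point a Gaussian $D_{s\Lat^*}$-like distribution scaled so that $\E ww^\top\prec\sigma^2 I$ strictly; alternatively, $\delta_0$ mixed with a small amount of mass works, since $\delta_0$ has zero second moment and zero derivatives. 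I would again truncate to finite subsets of $\Lat^*$ of norm at most $R$ and pass to the limit. Compactness for the primal follows from tightness given bounded second moments, and dual attainment from boundedness of the multipliers.

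\textbf{Part (iii).} This is a bookkeeping consequence. Theorem~\ref{thm:close} gives the close radius $\sigma^{-1}\sqrt{3/16}$; at $\sigma^2=1/2\pi$ this equals $\sqrt{3\pi/8}$. By Theorem~\ref{thm:constant}, the constant is the far threshold divided by the close radius, which gives $\cfar^{(2)}/\sqrt{3\pi/8}=\sqrt{8/(3\pi)}\,\cfar^{(2)}$.

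\textbf{Main obstacle.} The hardest step is the strong-duality limit in (i) and (ii). The issue is ensuring that the dual multipliers do not escape to infinity as the truncation is removed, especially for the Hessian constraint, where the primal infimum need not be attained. I would first try to bound the multipliers explicitly by evaluating the dual constraint at a fixed Slater point with slack proportional to the tolerances $\tau$ and $\eta'_m$ (respectively $\eta_1,\eta_2$).
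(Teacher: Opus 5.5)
\textbf{Part (iii) is circular.} You get $\cAR^{(2)}=\cfar^{(2)}/\varrho_2$ ``by Theorem~\ref{thm:constant}'', but that identity is exactly what (iii) asserts. $\cAR^{(2)}$ is defined through verifier families that must accept actual certificates of polynomial bit length. $\cfar^{(2)}$ is defined through distributions on $\{w\in\Lat^*:\norm w\le n^2\}$ (Definition~\ref{def:far}). Relating the two is the whole content. The value $\varrho_2=\sqrt{3\pi/8}$ is the easy part: it is just the limit of $r_2(n)$ in \eqref{eq:rk} as the tolerances vanish.

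What you must supply is a two-sided comparison.
\begin{enumerate}
\item[(a)] If a family achieves $c$, the empirical distribution $P_W$ of an accepting certificate satisfies, by \eqref{eq:empirical}, exactly the constraints of Definition~\ref{def:far}. So every $c''>c\varrho_2$ is $2$-far-certifiable, and $\cfar^{(2)}\le\varrho_2\cAR^{(2)}$.
\item[(b)] Conversely, from a far-certifying $P$ you must build a certificate. Sample, say, $N=n^8$ vectors from $P$ and show that with positive probability all tests pass with tolerances enlarged by $O(1/n)$:
\begin{itemize}
\item the threshold test, by Hoeffding;
\item the spectral and Hessian tests, by matrix Bernstein with range $2n^4/N$ and variance proxy $n^4/N$ (this is exactly where the support bound $\norm w\le n^2$ is used);
\item the gradient test, by Markov on the second moment.
\end{itemize}
The enlarged tolerances still tend to $0$, so $r_2(n)\to\varrho_2$ and the family achieves every $c>c''/\varrho_2$.
\end{enumerate}
Neither direction appears in your proposal.

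\textbf{Your Slater point in (i) fails for $k\ge1$.} The function $J_1(\theta)=\theta\sin2\pi\theta$ is even, so $\tfrac12\delta_{1/4}+\tfrac12\delta_{-1/4}$ has $\E J_1=\tfrac14$. No small perturbation brings $|\E J_1|$ below an arbitrary $\eta'_1>0$. A point that works for all small tolerances needs $\E\cos2\pi\theta<\tau$ and $\E J_m\approx0$ for every $m\le k$. For $k\le2$ the extremal distributions of Theorem~\ref{thm:close} do this. For $k\ge3$ you need the far-atoms construction of Lemma~\ref{lem:faratoms}, which is genuine work.

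Separately, your remark in (ii) that primal compactness follows from tightness is false once the Hessian constraint is present (cf.\ Theorem~\ref{thm:order2}(ii)). You don't actually need it.

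\textbf{Your route versus the paper's.} With a valid Slater point, your truncation scheme for (i) and (ii) does go through. Integrating the dual inequality against the Slater point bounds the multipliers uniformly in $R$. A limit of envelopes valid on $[-R,R]$ (resp.\ on $\Lat^*\cap\{\norm w\le R\}$) is then valid everywhere, with dual value at least the primal value.

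It is, however, a detour. The paper never truncates. Because there are finitely many constraint functionals, it separates the moment set of finitely supported measures (plus slack cones) from an open box directly in $\R^{2+2k}$ (resp.\ $\R\times\Sym\times\R^n\times\Sym$). It rules out a zero multiplier on the objective using the strictly feasible point ($\delta_0$ in (ii)). It then reads off the envelope by evaluating the separating functional at point masses. The polynomial growth of $J_m$ therefore never enters.
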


On the far side Banaszczyk's bound certifies the Gaussian prover beyond $1/\sqrt{2\pi}$, and we show that it is not always optimal: on Construction-A lattices of random binary codes, at a deep hole of the code at distance greater than $0.2735\sqrt n$, the coset Gaussian mass is at least half the lattice mass, so the Gaussian certificate fails, while a \emph{parity certificate} built from short dual vectors with odd pairing against the target passes every test (Section~\ref{sec:codelattices}). The exact threshold of the Gaussian prover remains unknown. Finally we solve the unbounded-support prover's program at targets of order two, $2t\in\Lat$, which include these deep holes. There $\cos(2\pi\langle w,t\rangle)=\chi(w)\in\{\pm1\}$ splits $\Lat^*$ into even and odd vectors, and the program is decided by one geometric quantity, the \emph{spectral capacity} of the odd coset,
\[\mu(\Lat,t):=\sup\Big\{\nu(O):\ \nu\ge0\ \text{a finitely supported measure on $O$ with}\ \sum_{w\in O}\nu(w)\,ww^{\top}\preceq I\Big\},\]
which by Lemma~\ref{lem:capacity} equals $\min\{\tr M:\ M\succeq0,\ w^{\top}Mw\ge1\ \text{for all }w\in O\}$.

\begin{theorem}[Targets of order two]\label{thm:order2-intro}
Let $2t\in\Lat$, $t\notin\Lat$. The value of the prover's program with the spectral constraint alone is $1-2\min(1,\sigma^2\mu(\Lat,t))$, and with the exact Hessian constraint added its infimum is $1-2\min(1,\sigma^2\mu(\Lat,t)/2)$. The latter value is approached by distributions on odd dual vectors, sufficiently large even multiples of them, and the zero vector; if the capacity is attained, so is the program value when $\sigma^2\mu<2$. At and above $\sigma^2\mu=2$, the infimum is $-1$ and is not attained. Combined with a random-lattice construction of targets of order two whose odd dual vectors are all long (Lemma~\ref{lem:randomorder2}), this gives $\cfar^{(2)}\ge1/(e\sqrt{2\pi})=0.1468$ and
\[0.1352\ \le\ \cAR^{(2)}\ \le\ \frac{2}{\pi\sqrt3}=0.3676 .\]
\end{theorem}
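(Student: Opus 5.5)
The plan is to exploit the fact that at a target of order two every trigonometric quantity in the prover's program becomes a sign. Since $2t\in\Lat$, for $w\in\Lat^*$ we have $\langle w,t\rangle\in\frac12\Z$. Hence $\cos(2\pi\langle w,t\rangle)=\chi(w)\in\{\pm1\}$ and $\sin(2\pi\langle w,t\rangle)=0$. Write $\Lat^*=E\sqcup O$ for the even and odd vectors; $O\neq\emptyset$ because $t\notin\Lat$. For a distribution $P$ the objective is $\E_P\chi=1-2P(O)$. The gradient $\nabla f_P(t)=-2\pi\E_P w\sin(2\pi\langle w,t\rangle)$ vanishes identically. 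The Hessian is $-4\pi^2\E_P\chi(w)ww^{\top}$. So the whole problem becomes: maximize $P(O)$ subject to the spectral constraint and, in the second case, to $\E_P[ww^{\top};O]=\E_P[ww^{\top};E]$.

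\emph{Spectral constraint alone.} For the upper bound, let $\nu=P|_O$. Then $\sum_O\nu(w)ww^{\top}\preceq\sigma^2I$, and rescaling by $\sigma^{-2}$ and using the definition of $\mu$ gives $P(O)\le\sigma^2\mu$. For infinite support, first truncate $\nu$ to finite subsets, or use the dual form of Lemma~\ref{lem:capacity}: with $M$ optimal, $P(O)\le\E_P[w^{\top}Mw]\le\sigma^2\tr M$. Trivially also $P(O)\le1$. For the matching construction, take a near-optimal finitely supported $\nu$, scale it to $\sigma^2\nu$ (truncated to total mass $\min(1,\sigma^2\mu)$), and put the remaining mass at $0\in E$, which costs nothing in the spectral test. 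This gives the value $1-2\min(1,\sigma^2\mu)$.

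\emph{With the Hessian constraint.} The exact Hessian constraint forces $\E_Pww^{\top}=2\E_P[ww^{\top};O]\preceq\sigma^2I$. The same argument then gives $P(O)\le\sigma^2\mu/2$, hence value $\ge1-2\min(1,\sigma^2\mu/2)$. For the approach, take odd mass $\nu'=\frac{\sigma^2}2\nu$ and balance its second moment with even vectors of vanishing mass. The key observation is that for $w\in O$ and $k\ge1$ the vector $2kw$ is even and nonzero. Placing mass $\nu'(w)/(4k^2)$ at $2kw$ reproduces the second moment $\nu'(w)ww^{\top}$ at total cost $\nu'(O)/(4k^2)\to0$; the leftover mass goes to $0$.

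If the capacity is attained and $\sigma^2\mu<2$, the odd mass $\sigma^2\mu/2<1$ leaves room for a fixed large $k$, so the value is attained. If $\sigma^2\mu\ge2$, value $-1$ would need $P(O)=1$ with a nonzero odd second moment (since $0\notin O$). That moment would have to be matched by even mass, which is impossible, so the infimum $-1$ is not attained.

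\emph{Constants.} The upper bound $\cAR^{(2)}\le2/(\pi\sqrt3)$ follows from Theorem~\ref{thm:main}, since the Gaussian prover is feasible. For the lower bound, I would invoke Lemma~\ref{lem:randomorder2}, which gives order-two targets at distance about $\frac1{e\sqrt{2\pi}}\sqrt n$ whose odd dual vectors all have length $\ge\ell$. Taking $M=\ell^{-2}I$ in Lemma~\ref{lem:capacity} gives $\mu\le n/\ell^2$, so that at $\sigma^2=1/2\pi$ the quantity $\sigma^2\mu/2$ is small. The program value then stays near $1$, above $\tau$, and no prover certifies these targets; this gives $\cfar^{(2)}\ge1/(e\sqrt{2\pi})$. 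Finally, Theorem~\ref{thm:constant} gives $\cAR^{(2)}=\sqrt{8/(3\pi)}\,\cfar^{(2)}\ge0.9213\cdot0.1468\approx0.1352$.

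The main obstacle is matching the parameters of Lemma~\ref{lem:randomorder2}: the length bound on odd vectors has to be large enough relative to $\sqrt n$ that $\sigma^2\mu/2$ stays below the slack in $\tau$ uniformly for large $n$. This is where the constant $1/e$ must come from.
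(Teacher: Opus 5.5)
Your treatment of the program itself is the paper's proof of Theorem~\ref{thm:order2}(i),(ii),(iv). This covers the reduction to maximizing $P(O)$, the bounds $P(O)\le\sigma^2\mu$ and $P(O)\le\sigma^2\mu/2$, the balancing mass $\nu'(w)/(4k^2)$ at $2kw$, and the non-attainment argument at $\sigma^2\mu\ge2$. One small fix: when $\sigma^2\mu\ge2$ you must scale $\nu$ down so that the odd mass stays below $\min(1,\sigma^2\mu/2)$, as you did in the spectral case. The genuine gap is in the derivation of $\cfar^{(2)}\ge1/(e\sqrt{2\pi})$, and your own last paragraph flags it.

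First, Lemma~\ref{lem:randomorder2} does not give a target at distance $\frac{\sqrt n}{e\sqrt{2\pi}}$. It gives one lower bound $r=(1-\epsilon)\sqrt{n/2\pi e}$ on both $\dist(t,L)$ and $\ell=\min_{w\in O}\norm w$. At that scale, $\mu\le n/\ell^2\approx2\pi e$ gives $\sigma^2\mu/2>1$, which is useless.

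The missing idea is that the product $\dist(t,L)\cdot\ell\ge r^2=(1-\epsilon)^2n/(2\pi e)$ is invariant under $L\mapsto\lambda L$. You rescale so that $\ell=(1+\epsilon)\sqrt{n/2\pi}$, and then $\dist\ge r^2/\ell=\frac{(1-\epsilon)^2}{1+\epsilon}\cdot\frac{\sqrt n}{e\sqrt{2\pi}}$. The $1/e$ comes from $r^2$, that is, from the volume bound in Siegel's theorem.

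Second, your quantitative claim is wrong. At this scale $\sigma^2\mu/2\le\frac1{2(1+\epsilon)^2}$ is not small, and the value bound is only $1-(1+\epsilon)^{-2}$, just above $0$, not near $1$. The requirement is $\sigma^2\mu<1-\tau_n$, not that $\sigma^2\mu/2$ lie below the slack in $\tau$. If you insisted on $\sigma^2\mu/2\to0$, you would need $\ell/\sqrt n\to\infty$. Then $r^2/\ell=o(\sqrt n)$, and you would get no constant at all.

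Third, $\cfar^{(2)}$ is defined with tolerances, $\Sigma(P)\preceq(1+\delta_n)\sigma^2I$ and $\norm{D_2(P)}\le\eta_{2,n}$. So the exact-Hessian value in (ii) does not apply directly; you need the robust bound of Theorem~\ref{thm:order2}(iii). From $2\Sigma_O\preceq((1+\delta_n)\sigma^2+\eta_{2,n})I$ you get $P(O)\ell^2\le\tr\Sigma_O\le\frac n2((1+\delta_n)\sigma^2+\eta_{2,n})$. Hence $\E_P\cos(2\pi\langle w,t\rangle)\ge1-\frac{1+\delta_n+2\pi\eta_{2,n}}{(1+\epsilon)^2}>\tau_n$ for large $n$, whatever slacks tend to $0$.

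Letting $\epsilon\to0$ gives $\cfar^{(2)}\ge1/(e\sqrt{2\pi})$. Theorem~\ref{thm:constant} then converts this to $\cAR^{(2)}\ge0.1352$, as you say.
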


The Hessian test is not free on the far side: at a target of order two it forces the prover to balance the second moments of its even and odd vectors, which doubles the capacity it needs; for the gradient-only verifier the same obstruction gives only $\cAR^{(1)}\ge0.1171$. The parity prover survives the Hessian test after this balancing (Proposition~\ref{prop:parity}).

\subsection{Organization}

Section~\ref{sec:prelim} collects the facts about Gaussian sums and concentration that we use, with proofs. Section~\ref{sec:verifier} defines the verifiers with derivative tests, the prover's program and the phase program, and proves the elementary relations between them. Section~\ref{sec:duality} proves the two duality theorems. Section~\ref{sec:close} proves Theorem~\ref{thm:close}, including the proof of the envelope~\eqref{eq:env2}, and the realization theorem. Section~\ref{sec:complete} proves the completeness lemma for the Hessian test and assembles Theorem~\ref{thm:main}. Section~\ref{sec:far} treats the far side: the Gaussian prover, the identity $\cAR^{(2)}=\sqrt{8/(3\pi)}\,\cfar^{(2)}$, targets of order two, the code lattices on which the Gaussian prover fails, and the obstructions from random lattices. Appendix~\ref{app:precision} gives the finite-bit verifier.

\section{Preliminaries}\label{sec:prelim}

\subsection{Lattices, Gaussian sums and the coset mass function}

For a lattice $\Lat=B\Z^n$ its dual is $\Lat^*=\{w:\langle w,v\rangle\in\Z\ \forall v\in\Lat\}=B^{-\top}\Z^n$, so that $(\Lat^*)^*=\Lat$. The Fourier transform of an integrable $f:\R^n\to\C$ is
\[\hat f(w)=\int_{\R^n}f(x)\,e^{-2\pi i\langle x,w\rangle}\,dx .\]
For $s>0$ let
\[\rho_s(x)=e^{-\pi\norm x^2/s^2},\qquad\rho=\rho_1,\qquad\rho_s(A)=\sum_{a\in A}\rho_s(a)\ \text{ for a countable set }A,\]
so that $\hat\rho=\rho$ and $\widehat{\rho_s}=s^n\rho_{1/s}$. As in Section~\ref{sec:intro}, the discrete Gaussian probability distribution $D_\Lat$ on $\Lat$ assigns to $v\in\Lat$ the probability $\rho(v)/\rho(\Lat)$. The Poisson summation formula, for $f=\rho_s$,
\begin{equation}\label{eq:poisson}
\rho_s(\Lat+x)=\frac{s^n}{\det\Lat}\sum_{w\in\Lat^*}\rho_{1/s}(w)\cos(2\pi\langle w,x\rangle),
\end{equation}
converges absolutely together with all its termwise derivatives in $x$, locally uniformly. We use throughout the \emph{coset mass function}
\begin{equation}\label{eq:f}
f_\Lat(x):=\frac{\rho(\Lat+x)}{\rho(\Lat)}=\E_{w\sim D_{\Lat^*}}\big[\cos(2\pi\langle w,x\rangle)\big],
\end{equation}
the second expression being~\eqref{eq:poisson} at $s=1$ divided by the same identity at $x=0$. Differentiating~\eqref{eq:f} termwise,
\begin{equation}\label{eq:fderiv}
\begin{aligned}
\nabla f_\Lat(x)&=-\frac{2\pi}{\rho(\Lat)}\sum_{y\in\Lat+x}y\,e^{-\pi\norm y^2}=-2\pi\,\E\big[\sin(2\pi\langle w,x\rangle)\,w\big],\\
\nabla^2f_\Lat(x)&=\frac1{\rho(\Lat)}\sum_{y\in\Lat+x}\big(4\pi^2yy^{\top}-2\pi I\big)e^{-\pi\norm y^2}=-4\pi^2\,\E\big[\cos(2\pi\langle w,x\rangle)\,ww^{\top}\big],
\end{aligned}
\end{equation}
with $w\sim D_{\Lat^*}$ in the expectations. For a real symmetric matrix $A$ we write $s_{\max}(A)$ for its largest eigenvalue, $\norm A$ for its operator norm (the largest absolute value of an eigenvalue), $\norm A_*$ for its trace norm (the sum of the absolute values of the eigenvalues), and $A\preceq B$ when $B-A$ is positive semidefinite; $\Sym$ is the space of real symmetric $n\times n$ matrices with the inner product $\langle A,B\rangle=\tr(AB)$. Numerical constants are given to four decimal places; a decimal next to an exact expression is that expression rounded. We write $\mathcal B(x,r)$ for the open Euclidean ball.

\begin{lemma}\label{lem:facts}
Let $\Lat\subseteq\R^n$ be a lattice and $x,u\in\R^n$.
\begin{enumerate}[label=\textup{(\roman*)}]
\item $e^{-\pi\dist(x,\Lat)^2}\le f_\Lat(x)\le1$.
\item For $0<t\le1$: $\rho(t\Lat)\le t^{-n}\rho(\Lat)$ and $\rho(t(\Lat+x))\le\rho(t\Lat)$.
\item With $w\sim D_{\Lat^*}$: $\E e^{2\pi\langle w,u\rangle}=e^{\pi\norm u^2}\rho(\Lat^*-u)/\rho(\Lat^*)\le e^{\pi\norm u^2}$.
\item \textup{(Banaszczyk~\cite{Banaszczyk93})} For $c\ge1/\sqrt{2\pi}$,
\[\rho\big((\Lat+u)\setminus\mathcal B(0,c\sqrt n)\big)\le\beta(c)^n\rho(\Lat),\qquad\beta(c):=c\sqrt{2\pi e}\,e^{-\pi c^2},\]
where $\beta(1/\sqrt{2\pi})=1$, $\beta$ is strictly decreasing on $[1/\sqrt{2\pi},\infty)$, and $\beta(1)=0.1786$. In particular $f_\Lat(x)\le\beta(c)^n$ whenever $\dist(x,\Lat)\ge c\sqrt n$.
\end{enumerate}
\end{lemma}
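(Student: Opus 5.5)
The plan is to prove the four parts in order, using only the definition of $f_\Lat$ and the Poisson formula~\eqref{eq:poisson}.

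\emph{Part (i).} The upper bound is immediate from the second expression in~\eqref{eq:f}, since $\cos\le1$. For the lower bound, pair $y$ with $-y$ in $\rho(\Lat+x)=\sum_{v\in\Lat}e^{-\pi\norm{v+x}^2}$, which gives
\[\rho(\Lat+x)=e^{-\pi\norm x^2}\sum_{v\in\Lat}e^{-\pi\norm v^2}\cosh(2\pi\langle v,x\rangle)\ \ge\ e^{-\pi\norm x^2}\rho(\Lat).\]
Since $f_\Lat$ is $\Lat$-periodic, replace $x$ by $x-y$ for a closest lattice point $y$.

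\emph{Part (ii).} Apply~\eqref{eq:poisson} with $s=1/t$:
\[\rho(t(\Lat+x))=\rho_{1/t}(\Lat+x)=\frac{t^{-n}}{\det\Lat}\sum_{w\in\Lat^*}\rho_t(w)\cos(2\pi\langle w,x\rangle).\]
The right side is largest at $x=0$, which gives $\rho(t(\Lat+x))\le\rho(t\Lat)$. For the first inequality, compare the $x=0$ expansions at scales $1/t$ and $1$. The dual sums satisfy $\rho_t(w)\le\rho(w)$ termwise because $t\le1$, so
\[\rho(t\Lat)=t^{-n}\det(\Lat)^{-1}\rho_t(\Lat^*)\le t^{-n}\det(\Lat)^{-1}\rho(\Lat^*)=t^{-n}\rho(\Lat).\]

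\emph{Part (iii).} Complete the square: $e^{-\pi\norm w^2+2\pi\langle w,u\rangle}=e^{\pi\norm u^2}e^{-\pi\norm{w-u}^2}$. Summing over $\Lat^*$ and dividing by $\rho(\Lat^*)$ gives the identity, and the inequality then follows from (i) applied to $\Lat^*$.

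\emph{Part (iv), the main step.} This is Banaszczyk's tail bound, proved by the exponential-tilt (Chernoff) argument in the scaled form. For $0<t\le1$ and every $y$ with $\norm y\ge c\sqrt n$,
\[e^{-\pi\norm y^2}=e^{-\pi t^2\norm y^2}e^{-\pi(1-t^2)\norm y^2}\le e^{-\pi t^2\norm y^2}e^{-\pi(1-t^2)c^2n}.\]
Summing over $y\in\Lat+u$ and using (ii) twice,
\[\rho\big((\Lat+u)\setminus\mathcal B(0,c\sqrt n)\big)\le e^{-\pi(1-t^2)c^2n}\rho(t(\Lat+u))\le t^{-n}e^{-\pi(1-t^2)c^2n}\rho(\Lat).\]
Optimize over $t$: the derivative of $-n\log t+\pi t^2c^2n$ vanishes at $t^2=1/(2\pi c^2)$, which is admissible exactly when $c\ge1/\sqrt{2\pi}$. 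This value gives the bound $\big(c\sqrt{2\pi}\,e^{1/2}e^{-\pi c^2}\big)^n=\beta(c)^n$. The facts about $\beta$ are elementary:
\begin{itemize}[label={}]
\item $\beta(1/\sqrt{2\pi})=\sqrt e\,e^{-1/2}=1$;
\item $(\log\beta)'(c)=1/c-2\pi c<0$ for $c>1/\sqrt{2\pi}$, so $\beta$ is strictly decreasing there;
\item $\beta(1)=\sqrt{2\pi e}\,e^{-\pi}\approx0.1786$.
\end{itemize}

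For the final claim, assume $\dist(x,\Lat)\ge c\sqrt n$. Then every point of $\Lat+x$ has norm at least $c\sqrt n$, because $\norm{v+x}=\norm{x-(-v)}\ge\dist(x,\Lat)$. The whole coset therefore lies outside the open ball, and $f_\Lat(x)=\rho(\Lat+x)/\rho(\Lat)\le\beta(c)^n$.

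The only step needing care is the two-fold use of (ii) inside (iv), together with checking that the optimal $t$ satisfies $t\le1$. Everything else is a direct calculation.
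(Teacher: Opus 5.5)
Your proposal is correct and follows the paper's proof essentially step by step: pairing $v$ with $-v$ for (i), Poisson summation at scales $1/t$ and $1$ with $\rho_t\le\rho$ termwise for (ii), completing the square for (iii), and the scaled tilt bound with (ii) used twice and $t^2=1/(2\pi c^2)$ for (iv). Your explicit checks of the properties of $\beta$ and of $\norm{v+x}\ge\dist(x,\Lat)$ fill in details the paper leaves implicit.
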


\begin{proof}
(i) The upper bound is~\eqref{eq:f}. For the lower bound let $e=x-v_0$ with $v_0\in\Lat$ closest to $x$; pairing $v$ with $-v$,
\[\rho(\Lat+x)=\sum_{v\in\Lat}e^{-\pi\norm{v+e}^2}=e^{-\pi\norm e^2}\sum_{v\in\Lat}e^{-\pi\norm v^2}\cosh(2\pi\langle v,e\rangle)\ \ge\ e^{-\pi\norm e^2}\rho(\Lat).\]
(ii) Since $\rho(t\Lat)=\rho_{1/t}(\Lat)$, \eqref{eq:poisson} at $x=0$ with $s=1/t$ and with $s=1$ gives $\rho(t\Lat)/\rho(\Lat)=t^{-n}\rho_t(\Lat^*)/\rho(\Lat^*)\le t^{-n}$, as $\rho_t\le\rho$ termwise for $t\le1$; the second inequality is (i) for $t\Lat$. (iii) Completing the square, $\sum_{w\in\Lat^*}e^{-\pi\norm w^2+2\pi\langle w,u\rangle}=e^{\pi\norm u^2}\rho(\Lat^*-u)$, and $\rho(\Lat^*-u)\le\rho(\Lat^*)$ by (i).

(iv) For $t\in(0,1]$ and $\norm y\ge c\sqrt n$, $e^{-\pi\norm y^2}\le e^{-\pi(1-t^2)c^2n}e^{-\pi t^2\norm y^2}$; summing over $y\in(\Lat+u)\setminus\mathcal B(0,c\sqrt n)$ and applying (ii) twice,
\[\rho\big((\Lat+u)\setminus\mathcal B(0,c\sqrt n)\big)\le e^{-\pi(1-t^2)c^2n}\rho\big(t(\Lat+u)\big)\le e^{-\pi(1-t^2)c^2n}t^{-n}\rho(\Lat),\]
which is minimized at $t^2=1/2\pi c^2\le1$ with value $\beta(c)^n$. The last claim holds because all of $\Lat+x$ lies outside $\mathcal B(0,c\sqrt n)$.
\end{proof}

The next identity is the one place in this section where the Gaussians on $\Lat$ and on $\Lat^*$ are compared directly. It is a transference identity: it relates the second moments of the two discrete Gaussians in every direction, and it is what fixes the spectral bound of the verifier at $1/2\pi$.

\begin{lemma}[Second-moment transference]\label{lem:transference}
Let $\Lat\subseteq\R^n$ be a lattice, $w$ from the discrete Gaussian probability distribution $D_{\Lat^*}$ and $v$ from $D_\Lat$. Then for every $e\in\R^n$,
\[\E\langle w,e\rangle^2+\E\langle v,e\rangle^2=\frac{\norm e^2}{2\pi}.\]
In particular $\E\langle w,e\rangle^2<\norm e^2/2\pi$ for every $e\ne0$, and
\[\E\,ww^{\top}\preceq\frac1{2\pi}I,\qquad\E\norm w^2\le\frac n{2\pi}.\]
\end{lemma}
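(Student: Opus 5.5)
The plan is to differentiate the Poisson identity twice, once on each side, and compare. Start from~\eqref{eq:poisson} with $x=0$ and scale parameter $s$: for every $s>0$,
\[\rho_s(\Lat)=\frac{s^n}{\det\Lat}\,\rho_{1/s}(\Lat^*).\]
This identity alone only yields the trace (the case of an isotropic direction). To get a statement in every direction $e$, I will instead use an anisotropic deformation. Replace $\Lat$ by $A\Lat$ for a one-parameter family $A_\epsilon=I+\epsilon\,\hat e\hat e^{\top}$, where $\hat e=e/\norm e$. Then $(A_\epsilon\Lat)^*=A_\epsilon^{-\top}\Lat^*$, and $\det(A_\epsilon\Lat)=(1+\epsilon)\det\Lat$. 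Poisson at $s=1$ gives
\[\rho(A_\epsilon\Lat)=\frac{1}{(1+\epsilon)\det\Lat}\,\rho(A_\epsilon^{-1}\Lat^*).\]

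Next, take logarithms and differentiate at $\epsilon=0$; termwise differentiation is justified by the absolute, locally uniform convergence noted after~\eqref{eq:poisson}. On the left, $\frac{d}{d\epsilon}\norm{A_\epsilon v}^2\big|_0=2\langle v,\hat e\rangle^2$, so the log-derivative is $-2\pi\,\E\langle v,\hat e\rangle^2$ with $v\sim D_\Lat$. On the right, $A_\epsilon^{-1}=I-\epsilon\hat e\hat e^{\top}+O(\epsilon^2)$, which gives $+2\pi\,\E\langle w,\hat e\rangle^2$, together with $-1$ coming from the determinant factor. Equating the two sides,
\[-2\pi\E\langle v,\hat e\rangle^2=-1+2\pi\E\langle w,\hat e\rangle^2 .\]
Multiplying by $\norm e^2/2\pi$ gives the stated identity. (The case $e=0$ is trivial.)

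For the consequences, the strict inequality holds because $\E\langle v,e\rangle^2>0$ when $e\ne0$. Indeed, $\Lat$ is full rank, so some $v\in\Lat$ has $\langle v,e\rangle\ne0$, and that $v$ has positive mass under $D_\Lat$. The matrix bound $\E ww^{\top}\preceq\frac1{2\pi}I$ is then the identity read as a quadratic form in $e$. Taking the trace gives $\E\norm w^2\le n/2\pi$.

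The only delicate point is the bookkeeping of signs and of the determinant factor. An alternative that avoids the deformation is to apply~\eqref{eq:fderiv} at $x=0$: the Hessian of $f_\Lat$ at $0$ is, by the primal formula, $\E[4\pi^2vv^{\top}]-2\pi I$, and by the dual formula it is $-4\pi^2\E[ww^{\top}]$. Equating these gives $4\pi^2(\E vv^{\top}+\E ww^{\top})=2\pi I$, which is exactly the claim. This second route uses only results already stated in the paper, so I would adopt it as the actual proof and keep the deformation argument as a cross-check.
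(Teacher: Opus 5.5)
Your adopted route is correct and is essentially the paper's proof: the paper compares the second derivatives at $s=0$ of $\phi(s)=f_\Lat(se)$, computed once from the dual cosine expansion and once from the primal coset sum. That is exactly the directional form of equating the two Hessian formulas in~\eqref{eq:fderiv} at $x=0$, and your strict-inequality, matrix and trace consequences match the paper's. One small caveat about the deformation cross-check: it differentiates in a parameter other than $x$, so the remark after~\eqref{eq:poisson} that you cite does not literally justify it, although Gaussian tails make that justification routine.
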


\begin{proof}
Consider $\phi(s):=f_\Lat(se)$ for real $s$. By~\eqref{eq:f}, $\phi(s)=\E\cos(2\pi s\langle w,e\rangle)$. On the other hand, expanding $\norm{v+se}^2$,
\[\phi(s)=\frac{\rho(\Lat+se)}{\rho(\Lat)}=e^{-\pi s^2\norm e^2}\,\E\big[e^{-2\pi s\langle v,e\rangle}\big].\]
Both expressions may be differentiated twice under the expectation at $s=0$: the first because $\E\langle w,e\rangle^2<\infty$ by Lemma~\ref{lem:facts}(iii), the second because $\E e^{\theta|\langle v,e\rangle|}<\infty$ for every $\theta$ by Lemma~\ref{lem:facts}(iii) applied to $\Lat^*$ in place of $\Lat$. Comparing the second derivatives at $s=0$ gives
\[-4\pi^2\E\langle w,e\rangle^2=-2\pi\norm e^2+4\pi^2\E\langle v,e\rangle^2,\]
which is the identity. If $e\ne0$, some $v\in\Lat$ has $\langle v,e\rangle\ne0$, since $\Lat$ has full rank, and $D_\Lat$ gives it positive probability, so $\E\langle v,e\rangle^2>0$ and the inequality is strict. The matrix inequality is the identity read as a statement about the quadratic form $e\mapsto e^{\top}(\E ww^{\top})e$, and taking $e$ in an orthonormal basis and summing gives $\E\norm w^2\le n/2\pi$.
\end{proof}

The inequality $\E\langle w,e\rangle^2\le\norm e^2/2\pi$ is known: it is the second-moment bound used in~\cite{AR05}, due to Banaszczyk~\cite{Banaszczyk93}. The identity shows exactly what is lost, namely the second moment of the primal Gaussian in the direction $e$; the loss is small precisely in the directions in which $\Lat$ is stretched, so the bound $1/2\pi$ cannot be improved in general.

We also record the estimate behind the gradient and Hessian bounds: for $s'=(1-1/n)^{-1/2}$ and every $y\in\R^n$,
\begin{equation}\label{eq:peel}
\norm y^2e^{-\pi\norm y^2}=\norm y^2e^{-\pi\norm y^2/n}\,\rho_{s'}(y)\le\frac{n}{\pi e}\,\rho_{s'}(y),\qquad
\norm y\,e^{-\pi\norm y^2}\le\sqrt{\frac{n}{2\pi e}}\,\rho_{s'}(y),
\end{equation}
since $a\mapsto ae^{-\pi a/n}$ has maximum $n/\pi e$ on $a\ge0$ and $a\mapsto ae^{-\pi a^2/n}$ has maximum $\sqrt{n/2\pi e}$. If $\dist(x,\Lat)\ge c''\sqrt n$ with $c''\sqrt{1-1/n}\ge1/\sqrt{2\pi}$, then every point of $(\Lat+x)/s'$ has norm at least $c''\sqrt{1-1/n}\,\sqrt n$, and Lemma~\ref{lem:facts}(iv) for the lattice $\Lat/s'$ followed by Lemma~\ref{lem:facts}(ii) gives
\begin{equation}\label{eq:rhosprime}
\frac{\rho_{s'}(\Lat+x)}{\rho(\Lat)}\ \le\ \big(s'\beta(c''/s')\big)^n\ =\ e^{\pi c''^2}\beta(c'')^n .
\end{equation}

\subsection{Concentration}

\begin{lemma}[Hoeffding~\cite{Hoeffding63}]\label{lem:hoeffding}
If $X_1,\dots,X_N$ are independent real random variables with values in $[-1,1]$, then for every $\zeta>0$,
$\PP\big(|\frac1N\sum_i(X_i-\E X_i)|\ge\zeta\big)\le2e^{-N\zeta^2/2}$.
\end{lemma}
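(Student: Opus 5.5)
We prove Hoeffding's inequality by the exponential-moment (Chernoff) method, combined with Hoeffding's lemma for a single bounded variable. Write $Y_i=X_i-\E X_i$, so that $\E Y_i=0$, and let $S=\sum_iY_i$. By symmetry ($-X_i$ also takes values in $[-1,1]$) it suffices to bound $\PP(S\ge N\zeta)$ by $e^{-N\zeta^2/2}$; the union bound over the two tails then supplies the factor $2$.

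For every $s>0$, Markov's inequality applied to $e^{sS}$ and independence give
\[\PP(S\ge N\zeta)\le e^{-sN\zeta}\,\E e^{sS}=e^{-sN\zeta}\prod_{i=1}^N\E e^{sY_i}.\]

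The key step is Hoeffding's lemma: a mean-zero variable $Y$ with values in an interval $[a,b]$ satisfies $\E e^{sY}\le e^{s^2(b-a)^2/8}$. Here $Y_i=X_i-\E X_i$ lies in $[-1-\E X_i,\,1-\E X_i]$, an interval of length $2$, so $\E e^{sY_i}\le e^{s^2/2}$. To prove the lemma we use convexity of $y\mapsto e^{sy}$:
\[e^{sy}\le\frac{b-y}{b-a}e^{sa}+\frac{y-a}{b-a}e^{sb}\qquad\text{for } y\in[a,b].\]
Taking expectations and using $\E Y=0$ gives $\E e^{sY}\le e^{L(h)}$, where $h=s(b-a)$, $p=-a/(b-a)\in[0,1]$ (note $a\le0\le b$ because the mean is $0$), and
\[L(h)=-hp+\log\big(1-p+pe^{h}\big).\]
One checks that $L(0)=L'(0)=0$. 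Moreover $L''(h)=q(1-q)\le\tfrac14$, where $q=pe^h/(1-p+pe^h)$. Taylor's theorem then gives $L(h)\le h^2/8$, which is the lemma.

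Combining the two steps,
\[\PP(S\ge N\zeta)\le\exp\big(-sN\zeta+Ns^2/2\big).\]
Choosing $s=\zeta$ yields $e^{-N\zeta^2/2}$, and doubling for the lower tail gives the stated bound $2e^{-N\zeta^2/2}$. The only delicate point is the curvature bound $L''\le\tfrac14$. It is simply the inequality $q(1-q)\le\tfrac14$, where $q(1-q)$ is the variance of a Bernoulli variable with parameter $q$, so no real obstacle remains.
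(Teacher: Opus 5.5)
Your proof is correct: the Chernoff bound, combined with Hoeffding's lemma for a mean-zero variable on an interval of length $2$, gives $\E e^{sY_i}\le e^{s^2/2}$, and the choice $s=\zeta$ yields exactly $e^{-N\zeta^2/2}$ for each tail. The paper does not prove this lemma but cites Hoeffding's 1963 paper, and your argument is the standard proof behind that citation, so there is nothing further to compare.
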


\begin{lemma}[Matrix Bernstein; Tropp~\cite{Tropp12}]\label{lem:mb}
Let $X_1,\dots,X_N$ be independent random matrices in $\Sym$ with $\E X_i=0$ and $\norm{X_i}\le R$ almost surely, and let $\norm{\sum_i\E X_i^2}\le v$. Then for $u\ge0$,
\[\PP\Big(s_{\max}\Big(\sum_iX_i\Big)\ge u\Big)\le n\exp\Big(-\frac{u^2/2}{v+Ru/3}\Big),\qquad
\PP\Big(\Big\lVert\sum_iX_i\Big\rVert\ge u\Big)\le2n\exp\Big(-\frac{u^2/2}{v+Ru/3}\Big).\]
\end{lemma}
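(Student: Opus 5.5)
This lemma is quoted from Tropp~\cite{Tropp12}, and I would reprove it by the matrix Laplace-transform method. Write $S=\sum_iX_i$ and fix $\theta>0$. Since $s_{\max}(S)\ge u$ implies $\tr e^{\theta S}\ge e^{\theta s_{\max}(S)}\ge e^{\theta u}$, Markov's inequality gives
\[\PP\big(s_{\max}(S)\ge u\big)\le e^{-\theta u}\,\E\tr e^{\theta S}.\]
The key input is Lieb's concavity theorem: for fixed $H\in\Sym$, the map $A\mapsto\tr\exp(H+\log A)$ is concave on positive definite matrices. I would apply it with Jensen's inequality, conditioning on one summand at a time and using independence, to obtain Tropp's subadditivity of matrix cumulant generating functions,
\[\E\tr\exp\Big(\sum_i\theta X_i\Big)\le\tr\exp\Big(\sum_i\log\E e^{\theta X_i}\Big).\]
This is the step I expect to be the real obstacle. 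Without Lieb's theorem one only has Golden--Thompson, which handles two summands; iterating it loses a factor and yields the weaker Ahlswede--Winter bound. So I would invoke Lieb's theorem as a black box rather than reprove it.

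Next comes the one-matrix bound. Put $g(\theta)=(e^{\theta R}-\theta R-1)/R^2$. The scalar function $x\mapsto(e^{x}-x-1)/x^2$ is increasing, and each $X_i$ has spectrum in $[-R,R]$. Applying this function through the functional calculus therefore gives $e^{\theta X_i}\preceq I+\theta X_i+g(\theta)X_i^2$. Taking expectations and using $\E X_i=0$ together with $1+a\le e^a$,
\[\E e^{\theta X_i}\preceq I+g(\theta)\,\E X_i^2\preceq\exp\big(g(\theta)\E X_i^2\big).\]
Because the logarithm is operator monotone, this gives $\log\E e^{\theta X_i}\preceq g(\theta)\E X_i^2$. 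Since $\tr\exp$ is monotone with respect to $\preceq$, we get
\[\tr\exp\Big(\sum_i\log\E e^{\theta X_i}\Big)\le\tr\exp\Big(g(\theta)\sum_i\E X_i^2\Big)\le n\,e^{g(\theta)v},\]
where the last step bounds each of the $n$ eigenvalues using $\norm{\sum_i\E X_i^2}\le v$.

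To finish, I would use the series comparison $k!\ge2\cdot3^{k-2}$ for $k\ge2$, which gives $g(\theta)\le\frac{\theta^2/2}{1-\theta R/3}$ for $0<\theta<3/R$. The tail is then at most $n\exp\big(-\theta u+\frac{\theta^2v/2}{1-\theta R/3}\big)$. Choosing $\theta=u/(v+Ru/3)$, which lies in $(0,3/R)$, makes the exponent equal to $-\frac{u^2/2}{v+Ru/3}$, and this is the first inequality. For the operator-norm bound I would apply the first inequality also to the family $-X_1,\dots,-X_N$, which satisfies the same hypotheses, to control $s_{\max}(-S)$. A union bound over the two events $s_{\max}(S)\ge u$ and $s_{\max}(-S)\ge u$ then supplies the factor $2n$.
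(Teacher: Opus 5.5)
Your proposal is correct and matches the paper: the paper cites Tropp for the one-sided bound, and the proof in that reference is exactly your argument (Markov's inequality on $\tr e^{\theta S}$, Lieb-based subadditivity of matrix cumulant generating functions, $\log\E e^{\theta X_i}\preceq g(\theta)\E X_i^2$, and $\theta=u/(v+Ru/3)$); the paper then gets the operator-norm bound, as you do, by applying the first inequality to $\pm X_i$ with a union bound. The only loose end is the degenerate cases, where your $\theta$ falls outside the open interval $(0,3/R)$: for $u=0$ the right side is $n\ge1$, and for $v=0$ every $X_i=0$ almost surely, so the inequality holds trivially.
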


The second bound follows from the first applied to $\pm X_i$. Finally we need a truncation estimate for the discrete Gaussian.

\begin{lemma}[Truncation]\label{lem:trunc}
Let $n\ge2$, $w\sim D_\Lambda$ for a lattice $\Lambda\subseteq\R^n$, and $R\ge1$. Then
\[\E\big[\norm w^2\,\one_{\norm w>R\sqrt n}\big]\ \le\ \frac{n}{\pi e}\,e^{\pi R^2}\beta(R)^n,\qquad\text{in particular}\qquad\E\big[\norm w^2\,\one_{\norm w>\sqrt n}\big]\le\frac{e^{\pi-1}}{\pi}\,n\,\beta(1)^n .\]
Consequently, with $p:=\PP(\norm w>\sqrt n)\le\beta(1)^n$, if $\tilde w$ denotes $w$ conditioned on $\norm w\le\sqrt n$ and $g:\Lambda\to[-1,1]$ is any function, then for $n\ge8$,
\[\big\lVert\E[g(\tilde w)\tilde w\tilde w^{\top}]-\E[g(w)ww^{\top}]\big\rVert\le\kappa_n:=6n\beta(1)^n,\qquad\text{and}\qquad\E[\tilde w\tilde w^{\top}]\preceq\frac{1+\epsilon_n}{2\pi}I,\quad\epsilon_n:=2\beta(1)^n .\]
\end{lemma}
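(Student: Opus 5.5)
For the first estimate I will use the peeling bound \eqref{eq:peel} and then Banaszczyk's tail bound. Set $s'=(1-1/n)^{-1/2}$. By \eqref{eq:peel}, for every $w\in\Lambda$ we have $\norm w^2\rho(w)\le\frac{n}{\pi e}\rho_{s'}(w)$. Hence
\[\E[\norm w^2\one_{\norm w>R\sqrt n}]\le\frac{n}{\pi e}\,\frac{\rho_{s'}(\Lambda\setminus\mathcal B(0,R\sqrt n))}{\rho(\Lambda)}.\]
Rescaling, $\rho_{s'}(\Lambda\setminus\mathcal B(0,R\sqrt n))=\rho\big((\Lambda/s')\setminus\mathcal B(0,(R/s')\sqrt n)\big)$.

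Lemma~\ref{lem:facts}(iv) applies to $\Lambda/s'$ with radius $R/s'$ provided $R/s'\ge1/\sqrt{2\pi}$. This holds because $R\ge1$ and $n\ge2$. It bounds the numerator by $\beta(R/s')^n\rho(\Lambda/s')$. Lemma~\ref{lem:facts}(ii) with $t=1/s'$ gives $\rho(\Lambda/s')\le s'^n\rho(\Lambda)$. So the ratio is at most $(s'\beta(R/s'))^n$, and as in \eqref{eq:rhosprime} this equals
\[s'^n\,(R/s')^n(2\pi e)^{n/2}e^{-\pi R^2(1-1/n)n}=e^{\pi R^2}\beta(R)^n.\]
Taking $R=1$ gives the particular case $\frac{e^{\pi-1}}{\pi}n\beta(1)^n$. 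The bound $p\le\beta(1)^n$ is Lemma~\ref{lem:facts}(iv) with $u=0$ and $c=1$.

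For the comparison, write $A=\E[g(w)ww^{\top}\one_{\norm w\le\sqrt n}]$ and $T=\E[g(w)ww^{\top}\one_{\norm w>\sqrt n}]$. Then
\[\E[g(\tilde w)\tilde w\tilde w^{\top}]=\frac{A}{1-p},\qquad \E[g(w)ww^{\top}]=A+T,\]
so the difference is $\frac{p}{1-p}A-T$. Since $|g|\le1$, Lemma~\ref{lem:transference} gives $\norm A\le\E\norm w^2\le\frac{n}{2\pi}$, and the first part gives $\norm T\le\E[\norm w^2\one_{\norm w>\sqrt n}]\le\frac{e^{\pi-1}}{\pi}n\beta(1)^n$. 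Together,
\[\norm{\cdot}\le\Big(\frac{1}{2\pi(1-p)}+\frac{e^{\pi-1}}{\pi}\Big)n\beta(1)^n.\]
Numerically $e^{\pi-1}/\pi\approx2.70$. For $n\ge8$, $p\le0.1786^8$ is negligible, so the total coefficient is about $2.86<6$, which gives $\kappa_n$.

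For the last claim, $\E[\tilde w\tilde w^{\top}]=\E[ww^{\top}\one_{\norm w\le\sqrt n}]/(1-p)\preceq\frac{1}{2\pi(1-p)}I$, using Lemma~\ref{lem:transference} and positivity of the truncated-away part. Since $p\le\beta(1)^n\le\frac12$, we get $1/(1-p)\le1+2p\le1+2\beta(1)^n$.

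The main obstacle is the constant bookkeeping in the peeling step: one must check that $s'\beta(R/s')$ collapses exactly to $e^{\pi R^2/n}\beta(R)$, and that the hypotheses of Lemma~\ref{lem:facts}(iv) hold at radius $R/s'$. The rest is routine.
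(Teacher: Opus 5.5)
Your proposal is correct and follows essentially the same route as the paper. It peels off $\norm w^2$ with \eqref{eq:peel}, applies Lemma~\ref{lem:facts}(iv) to $\Lambda/s'$ at radius $R\sqrt{1-1/n}\ge1/\sqrt{2\pi}$ together with Lemma~\ref{lem:facts}(ii) to get $(s'\beta(R/s'))^n=e^{\pi R^2}\beta(R)^n$, writes the conditioned second moment as $A/(1-p)$, and then uses Lemma~\ref{lem:transference} and $(1-p)^{-1}\le1+2p$.

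The only difference is cosmetic. You bound $\frac p{1-p}A-T$, whereas the paper bounds the algebraically equal $\bigl(p\,\E[g(w)ww^{\top}]-T\bigr)/(1-p)$; this gives you a marginally smaller constant ($\approx2.86$), but the estimate is otherwise the same.
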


\begin{proof}
By~\eqref{eq:peel}, $\sum_{\norm w>R\sqrt n}\norm w^2\rho(w)\le\frac n{\pi e}\rho_{s'}\big(\Lambda\setminus\mathcal B(0,R\sqrt n)\big)$. The set $(\Lambda\setminus\mathcal B(0,R\sqrt n))/s'$ consists of points of the lattice $\Lambda/s'$ of norm at least $R\sqrt{1-1/n}\,\sqrt n$, and $R\sqrt{1-1/n}\ge1/\sqrt{2\pi}$ for $n\ge2$, so Lemma~\ref{lem:facts}(iv) and (ii) give
\[\rho_{s'}\big(\Lambda\setminus\mathcal B(0,R\sqrt n)\big)\le\beta\big(R\sqrt{1-1/n}\big)^n\rho(\Lambda/s')\le(1-1/n)^{n/2}e^{\pi R^2}\beta(R)^n\cdot(1-1/n)^{-n/2}\rho(\Lambda),\]
which equals $e^{\pi R^2}\beta(R)^n\rho(\Lambda)$,
using $\beta(R\sqrt{1-1/n})^n=(1-1/n)^{n/2}R^n(2\pi e)^{n/2}e^{-\pi R^2(n-1)}$. Dividing by $\rho(\Lambda)$ gives the first claim. For the second,
\[\E[g(\tilde w)\tilde w\tilde w^{\top}]-\E[g(w)ww^{\top}]=\frac{p\,\E[g(w)ww^{\top}]-\E[g(w)ww^{\top}\one_{\norm w>\sqrt n}]}{1-p},\]
of norm at most $(p\,n/2\pi+2.7097\,n\beta(1)^n)/(1-p)\le6n\beta(1)^n$ by Lemma~\ref{lem:transference} and $p\le\beta(1)^n\le1.1\cdot10^{-6}$. The last claim is $\E[\tilde w\tilde w^{\top}]\preceq\E[ww^{\top}]/(1-p)$ and $(1-p)^{-1}\le1+2p$.
\end{proof}

\section{Verifiers with derivative tests, and the prover's program}\label{sec:verifier}

\subsection{The verifiers}

Membership of $\mathrm{GapCVP}_\gamma$ in $\NP$ is trivial for every $\gamma\ge1$; the content of~\eqref{eq:AR} is the $\coNP$ part. We consider the following family of verifiers, indexed by $k\in\{0,1,2\}$ and parameters $\tau\in(0,\frac12]$, $\sigma^2>0$, $\eta_1,\eta_2\ge0$ (the restriction $\tau\le\frac12$ keeps the radii below well defined; we shall use $\tau=1/n$). Define the \emph{close radii}
\begin{equation}\label{eq:rk}
\begin{aligned}
r_0&:=\frac1\sigma\sqrt{\frac1{16}-\frac{\tau}{4\pi}},\\
r_1&:=\text{the positive root of }\ \sigma^2r^2+\frac{\eta_1}{2\pi^2}\,r=\frac{1-\tau}8,\\
r_2&:=\text{the positive root of }\ \Big(\sigma^2+\frac{\eta_2}{2\pi^3}\Big)r^2+\frac{\eta_1}{4\pi}\,r=\frac3{16}-\frac{5\tau}{8\pi},
\end{aligned}
\end{equation}
so that $r_0\to\frac1{4\sigma}$, $r_1\to\frac1{\sigma\sqrt8}$ and $r_2\to\frac1\sigma\sqrt{\frac3{16}}$ as $\tau,\eta_1,\eta_2\to0$. These radii are sound for every $\tau$ (Proposition~\ref{prop:sound}) but sharp only in this limit, which is the regime we use. For instance, without derivative tests the exact radius at threshold $\tau\in[0,1)$ is $\arccos(\tau)/(2\pi\sigma)$: the function $\phi(\tau)=(\arccos\tau/2\pi)^2$ is convex and decreasing on $[-1,1]$, and $\theta^2\ge\phi(\cos2\pi\theta)$, so $\E\theta^2\ge\phi(\E\cos2\pi\theta)$ by Jensen's inequality, with equality for a point mass; $\sigma^2r_0^2=\frac1{16}-\frac\tau{4\pi}$ is the tangent of $\phi$ at $\tau=0$. Given an instance $(B,t,r)$ of $\mathrm{GapCVP}_\gamma$ in dimension $n$, the verifier $\mathcal V^{(k)}=\mathcal V^{(k)}_{\tau,\sigma,\eta}$ rescales it by $r_k/r$, obtaining $\Lat'=(r_k/r)\Lat(B)$ and $t'=(r_k/r)t$, so that a YES instance has $\dist(t',\Lat')\le r_k$ and a NO instance has $\dist(t',\Lat')>\gamma r_k$. A certificate is a list $Z=(z_1,\dots,z_N)$ of integer vectors, from which the verifier forms $w_i:=(r/r_k)B^{-\top}z_i\in\Lat'^*$ and $W=(w_1,\dots,w_N)$. With
\begin{equation}\label{eq:fW}
\begin{gathered}
f_W(y)=\frac1N\sum_{i=1}^N\cos(2\pi\langle w_i,y\rangle),\qquad
\nabla f_W(y)=-\frac{2\pi}N\sum_{i=1}^N\sin(2\pi\langle w_i,y\rangle)\,w_i,\\
\nabla^2f_W(y)=-\frac{4\pi^2}N\sum_{i=1}^N\cos(2\pi\langle w_i,y\rangle)\,w_iw_i^{\top},
\end{gathered}
\end{equation}
the verifier accepts if and only if
\begin{enumerate}[label=\textup{(\alph*)},itemsep=1pt]
\item every $w_i$ lies in $\Lat'^*$ (which holds by construction);
\item $s_{\max}\big(\sum_{i=1}^Nw_iw_i^{\top}\big)\le\sigma^2N$;
\item $f_W(t')<\tau$;
\item (if $k\ge1$) $\norm{\nabla f_W(t')}\le\eta_1$;
\item (if $k=2$) $\norm{\nabla^2f_W(t')}\le\eta_2$.
\end{enumerate}
$\mathcal V^{(0)}$ is the verifier of~\cite{AR05} with its two parameters left free (there $\tau=\frac12$, $\sigma^2=3$, and certificates are rational vectors with (a) checked explicitly); (d) is the \emph{gradient test} and (e) the \emph{Hessian test}, and $\mathcal V^{(2)}$ is the \emph{Hessian-augmented} verifier of Theorem~\ref{thm:main}. Appendix~\ref{app:precision} implements these tests in exact rational arithmetic; there the phase $\langle w_i,t'\rangle=z_i^{\top}B^{-1}t$ is rational and is reduced modulo $1$ exactly. The Gaussian certificate, with which completeness is proved, consists of $N$ independent samples from $D_{\Lat'^*}$; by~\eqref{eq:f} and~\eqref{eq:fderiv}, $\E f_W=f_{\Lat'}$, $\E\nabla f_W=\nabla f_{\Lat'}$ and $\E\nabla^2f_W=\nabla^2f_{\Lat'}$ at every point, and all three are exponentially small at a far target.

\subsection{The prover's program}

Let $\Lat\subseteq\R^n$ be a lattice and $t\in\R^n$. For a probability distribution $P$ on $\Lat^*$ with $\E_P\norm w^2<\infty$ put
\begin{equation}\label{eq:D}
\begin{gathered}
\Sigma(P):=\E_Pww^{\top},\qquad D_1(P):=\E_P\big[\sin(2\pi\langle w,t\rangle)\,w\big]\in\R^n,\\
D_2(P):=\E_P\big[\cos(2\pi\langle w,t\rangle)\,ww^{\top}\big]\in\Sym,
\end{gathered}
\end{equation}
for a target $t$ fixed throughout. For a list $W$ of vectors of $\Lat^*$ let $P_W$ be its empirical distribution, defined in Section~\ref{sec:intro}. Then by~\eqref{eq:fW}
\begin{equation}\label{eq:empirical}
\begin{gathered}
\frac1N\sum_iw_iw_i^{\top}=\Sigma(P_W),\qquad f_W(t)=\E_{P_W}\cos(2\pi\langle w,t\rangle),\\
\nabla f_W(t)=-2\pi D_1(P_W),\qquad\nabla^2f_W(t)=-4\pi^2D_2(P_W)
\end{gathered}
\end{equation}
(the last two at the fixed target $t$ of~\eqref{eq:D}), so the tests (b), (d), (e) are the constraints $\Sigma(P_W)\preceq\sigma^2I$, $\norm{D_1(P_W)}\le\eta_1/2\pi$, $\norm{D_2(P_W)}\le\eta_2/4\pi^2$ on $P_W$, and (c) is a bound on the objective below. This is an instance of the following general notion.

\begin{definition}[Linear verifiers]\label{def:linearverifier}
Let $\varphi_1,\dots,\varphi_J$ be functions on $\R^n\times\R^n$, each $\varphi_j(w,t)$ taking values in $\R$ or in the symmetric $d_j\times d_j$ matrices for some $d_j\le\mathrm{poly}(n)$, polynomially bounded in $\norm w$ and computable in polynomial time, and let $\alpha_1,\dots,\alpha_J$ be thresholds (numbers or symmetric matrices), $S\subseteq\R^n$ a closed set containing a ball around $0$, and $\tau>0$. The \emph{linear verifier} $\mathcal V(\varphi,\alpha,S,\tau)$ accepts a certificate $W$ for the rescaled pair $(\Lat',t')$ if and only if every $w_i$ lies in $\Lat'^*\cap S$, $\E_{P_W}\varphi_j(w,t')\le\alpha_j$ (respectively $\preceq\alpha_j$) for $j=1,\dots,J$, and $\E_{P_W}\cos(2\pi\langle w,t'\rangle)<\tau$. Its \emph{feasible set} at $(\Lat',t')$ is
\[\begin{aligned}
\mathcal F(\Lat',t'):=\big\{P\ :\ &P\text{ is a probability distribution on }\Lat'^*\cap S,\\
&\E_P\norm{\varphi_j(w,t')}<\infty,\quad
\E_P\varphi_j(w,t')\le\alpha_j\quad(j=1,\dots,J)\big\},
\end{aligned}\]
and its \emph{value} is $\Phi(\Lat',t'):=\inf_{P\in\mathcal F(\Lat',t')}\E_P\cos(2\pi\langle w,t'\rangle)$. A \emph{prover} for the verifier is a map assigning to every pair $(\Lat',t')$ with $\dist(t',\Lat')$ beyond a prescribed distance a feasible distribution. A family of provers is \emph{optimal} if at every such pair the objectives of its distributions converge to $\Phi(\Lat',t')$; the infimum need not be attained (Theorem~\ref{thm:order2}(ii)).
\end{definition}

The verifiers $\mathcal V^{(k)}$ are the linear verifiers with $\varphi_1=ww^{\top}$, $\alpha_1=\sigma^2I$, and, for $k\ge1$, the constraints $\norm{D_1}\le\eta_1/2\pi$ and $\norm{D_2}\le\eta_2/4\pi^2$, which are linear matrix inequalities in $P_W$: $\norm{D_1}\le\eta$ is $\begin{psmallmatrix}\eta I&D_1\\ D_1^{\top}&\eta\end{psmallmatrix}\succeq0$ by a Schur complement, and $\norm{D_2}\le\eta$ is $-\eta I\preceq D_2\preceq\eta I$; $S=\R^n$, or $S=\{\norm w\le n^2\}$ when the norm test of Section~\ref{sec:farthresholds} is added. Line tests $f_W(jt')\ge-\epsilon_j$ are linear verifiers as well. A verifier that computes something else from $W$, for instance a reduced basis of the lattice it generates, is not.

\begin{definition}[The prover's program]\label{def:program}
For $k\in\{0,1,2\}$, $\sigma^2>0$ and tolerances $\eta_1,\eta_2\ge0$, the \emph{prover's program} at $(\Lat,t)$ is
\begin{equation}\label{eq:Phi}
\begin{aligned}
\Phi^{(k)}_\sigma(\Lat,t;\eta):=\inf\Big\{\E_P\cos(2\pi\langle w,t\rangle)\ :\ &P\in\mathcal P(\Lat^*),\ \Sigma(P)\preceq\sigma^2I,\\
&\norm{D_1(P)}\le\frac{\eta_1}{2\pi}\ \text{if }k\ge1,\ \ \norm{D_2(P)}\le\frac{\eta_2}{4\pi^2}\ \text{if }k=2\Big\},
\end{aligned}
\end{equation}
where $\mathcal P(\Lat^*)$ is the set of probability distributions on $\Lat^*$ with finite second moments. We write $\Phi^{(k)}_\sigma(\Lat,t)$ when $\eta=0$.
\end{definition}

The program is feasible ($P=\delta_0$) and its value lies in $[-1,1]$. Conversely any feasible $P$ with finite support and rational masses is the empirical distribution of a list, and a general feasible $P$ is approximated by such lists with the concentration of Section~\ref{sec:prelim} (this is made precise in Theorem~\ref{thm:constant}). So, up to the slack in the parameters, \emph{the verifier $\mathcal V^{(k)}$ has an accepting certificate for $(\Lat',t')$ if and only if $\Phi^{(k)}_\sigma(\Lat',t';\eta)<\tau$}.

\subsection{Phases and the phase program}

Fix $t=y+e$ with $y\in\Lat$. For $w\in\Lat^*$, $\langle w,y\rangle\in\Z$, so every trigonometric function of $2\pi\langle w,t\rangle$ equals the same function of the \emph{phase} $\theta=\langle w,e\rangle$, a real number. For $m\ge1$ let
\begin{equation}\label{eq:Jm}
J_m(\theta):=\begin{cases}\theta^m\sin2\pi\theta,&m\text{ odd},\\ \theta^m\cos2\pi\theta,&m\text{ even},\end{cases}
\qquad\text{so that}\qquad\theta^m\cos\big(2\pi\theta+\tfrac{m\pi}2\big)=\pm J_m(\theta).
\end{equation}

\begin{lemma}[Phase reduction]\label{lem:phase}
Let $t=y+e$ with $y\in\Lat$, let $P$ be a probability distribution on $\Lat^*$ with finite second moments, and let $\theta=\langle w,e\rangle$ with $w\sim P$. Then
\[\E_P\cos(2\pi\langle w,t\rangle)=\E\cos2\pi\theta,\qquad e^{\top}\Sigma(P)e=\E\theta^2,\]
\[\langle D_1(P),e\rangle=\E J_1(\theta),\qquad e^{\top}D_2(P)e=\E J_2(\theta).\]
More generally, for a list $W$ and every $m\ge1$, the $m$-th directional derivative of $f_W$ at $t$ along $e$ is $(2\pi)^m\frac1N\sum_i\theta_i^m\cos(2\pi\theta_i+m\pi/2)=\pm(2\pi)^m\E_{P_W}J_m(\theta)$.
\end{lemma}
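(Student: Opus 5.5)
The whole argument rests on one observation: since $y\in\Lat$ and $w\in\Lat^*$, we have $\langle w,y\rangle\in\Z$. Hence $\langle w,t\rangle=\langle w,y\rangle+\theta$, and every $1$-periodic function of $\langle w,t\rangle$ equals the same function of $\theta=\langle w,e\rangle$. In particular
\[\cos(2\pi\langle w,t\rangle)=\cos2\pi\theta,\qquad \sin(2\pi\langle w,t\rangle)=\sin2\pi\theta\qquad\text{pointwise on }\Lat^*.\]
I will use this pointwise identity, then take expectations under $P$. Integrability holds throughout: $|\cos|,|\sin|\le1$, and $|\theta|\le\norm w\norm e$ with $\E_P\norm w^2<\infty$, so every expectation below is finite and linearity applies.

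The four identities then follow one at a time.
\begin{itemize}
\item The objective: $\E_P\cos(2\pi\langle w,t\rangle)=\E\cos2\pi\theta$ by the pointwise identity.
\item The second moment: $e^{\top}\Sigma(P)e=\E_P(e^{\top}w)^2=\E\theta^2$, using only linearity.
\item The gradient term: $\langle D_1(P),e\rangle=\E_P[\sin(2\pi\langle w,t\rangle)\langle w,e\rangle]=\E[\theta\sin2\pi\theta]=\E J_1(\theta)$.
\item The Hessian term: $e^{\top}D_2(P)e=\E_P[\cos(2\pi\langle w,t\rangle)(e^{\top}w)^2]=\E[\theta^2\cos2\pi\theta]=\E J_2(\theta)$.
\end{itemize}

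For the general derivative statement, fix a list $W$ and set $g(s)=f_W(t+se)=\frac1N\sum_i\cos(2\pi\langle w_i,t\rangle+2\pi s\theta_i)$. This is a finite sum of smooth functions, so it can be differentiated termwise. The $m$-th directional derivative of $f_W$ at $t$ along $e$ is $g^{(m)}(0)$. Using $\frac{d^m}{ds^m}\cos(a+bs)=b^m\cos(a+bs+m\pi/2)$ with $b=2\pi\theta_i$, we get
\[g^{(m)}(0)=\frac{(2\pi)^m}{N}\sum_i\theta_i^m\cos\big(2\pi\langle w_i,t\rangle+m\pi/2\big),\]
and periodicity replaces $\langle w_i,t\rangle$ by $\theta_i$. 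The sign bookkeeping in \eqref{eq:Jm} is then a four-case check on $m\bmod4$:
\[\cos\big(x+\tfrac{m\pi}2\big)\in\{\cos x,\ -\sin x,\ -\cos x,\ \sin x\}\quad\text{for }m\equiv0,1,2,3\pmod 4.\]
So each term equals $\pm J_m(\theta_i)$, with the sign depending only on $m$. Averaging over $i$ gives $\pm(2\pi)^m\E_{P_W}J_m(\theta)$, since $\E_{P_W}\varphi=\frac1N\sum_i\varphi(w_i)$.

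No step is a genuine obstacle. The only points requiring care are justifying the finite expectations from the finite second moment (done above) and keeping the sign convention in \eqref{eq:Jm} consistent.
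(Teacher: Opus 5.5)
Your proof is correct and follows the paper's own argument: the integrality $\langle w,y\rangle\in\Z$ reduces every quantity to the phase $\theta$, the four identities are the definitions~\eqref{eq:D} evaluated along $e$, and the derivative formula is the chain rule applied to $s\mapsto f_W(t+se)$. Your integrability check and the case split on $m\bmod 4$ spell out details that the paper leaves implicit.
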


\begin{proof}
$\langle w,t\rangle=\langle w,y\rangle+\theta$ with $\langle w,y\rangle\in\Z$; the rest is~\eqref{eq:D} and the chain rule.
\end{proof}

\begin{definition}[The phase program]\label{def:phase}
For $k\ge0$, $\tau\ge0$ and tolerances $\eta'=(\eta'_1,\dots,\eta'_k)$ with $\eta'_m\ge0$, let
\begin{equation}\label{eq:vk}
v_k(\tau;\eta'):=\inf\Big\{\E\theta^2:\ \E\cos2\pi\theta\le\tau,\ \ |\E J_m(\theta)|\le\eta'_m\ \ (1\le m\le k)\Big\},
\end{equation}
the infimum over probability distributions of a real random variable $\theta$ with $\E|\theta|^{\max(2,k)}<\infty$, and $v_k:=v_k(0;0)$, which is~\eqref{eq:vk-intro}.
\end{definition}

The function $v_k(\tau;\eta')$ is nonincreasing in $\tau$ and in each $\eta'_m$ and nondecreasing in $k$. Its role is the following general form of soundness.

\begin{proposition}[Soundness in LP form]\label{prop:sound-general}
Let $W$ pass tests \textup{(a)}, \textup{(b)} (which do not depend on $k$) and, for $1\le m\le k$, satisfy $\sup_{\norm u=1}|\nabla^mf_W(t')[u,\dots,u]|\le\eta_m$ (which for $m=1,2$ is tests \textup{(d)}, \textup{(e)}). Write $t'=y+e$ with $y\in\Lat'$ a lattice point closest to $t'$, so that $\norm e=\dist(t',\Lat')$. If $f_W(t')<\tau$ then
\[\sigma^2\norm e^2\ \ge\ v_k\Big(\tau;\ \Big(\frac{\eta_m\norm e^m}{(2\pi)^m}\Big)_{1\le m\le k}\Big).\]
\end{proposition}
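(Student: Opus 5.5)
The plan is to take the empirical phase distribution of the certificate as a feasible point of the phase program~\eqref{eq:vk}. Let $\theta_i=\langle w_i,e\rangle$ and let $\theta$ be uniform on the multiset $\{\theta_1,\dots,\theta_N\}$, so that $\theta=\langle w,e\rangle$ with $w\sim P_W$. Since $P_W$ has finite support, every moment of $\theta$ is finite. It then suffices to show two things: that $\theta$ satisfies the constraints of~\eqref{eq:vk} with $\tau$ and the tolerances $\eta'_m=\eta_m\norm e^m/(2\pi)^m$, and that $\E\theta^2\le\sigma^2\norm e^2$. The conclusion then follows from the definition of $v_k(\tau;\eta')$ as an infimum.

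The constraints come from Lemma~\ref{lem:phase} applied to $P=P_W$. First, $\E\cos2\pi\theta=\E_{P_W}\cos(2\pi\langle w,t'\rangle)=f_W(t')<\tau$, which gives the threshold constraint, with strict inequality to spare. Second, $\E\theta^2=e^{\top}\Sigma(P_W)e=\frac1N\sum_i\langle w_i,e\rangle^2$. By test (b) this is at most $\sigma^2\norm e^2$, since $e^{\top}Ae\le s_{\max}(A)\norm e^2$.

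For the derivative constraints I would use the last statement of Lemma~\ref{lem:phase}. The $m$-th directional derivative of $f_W$ at $t'$ along $e$ equals $\pm(2\pi)^m\E J_m(\theta)$. This derivative is also $\nabla^mf_W(t')[e,\dots,e]=\norm e^m\,\nabla^mf_W(t')[u,\dots,u]$ with $u=e/\norm e$, by homogeneity of the $m$-linear form. If $e=0$ the conclusion reads $0\ge v_k(\dots)$, and this is witnessed by the point mass $\delta_0$: it gives $\E\cos=1$, which is not feasible when $\tau\le1$. So the case $e=0$ needs separate care. For $e=0$ every $\theta_i=0$, hence $f_W(t')=1\ge\tau$, which contradicts the hypothesis, and the case cannot occur. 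Assuming $e\ne0$, the hypothesis gives $(2\pi)^m|\E J_m(\theta)|\le\eta_m\norm e^m$, which is exactly $|\E J_m(\theta)|\le\eta'_m$.

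The one point needing care is the identification of the directional derivative with the symmetric multilinear form evaluated on the unit vector. This is routine: differentiate $s\mapsto f_W(t'+se)$ $m$ times at $s=0$ termwise, as in the proof of Lemma~\ref{lem:phase}. For $m=1,2$ one checks directly against~\eqref{eq:fW} that $|\langle\nabla f_W(t'),u\rangle|\le\norm{\nabla f_W(t')}$ and $|u^{\top}\nabla^2f_W(t')u|\le\norm{\nabla^2f_W(t')}$. This shows that tests (d) and (e) imply the stated hypotheses. Apart from this, the proof is pure bookkeeping, and I expect no real obstacle.
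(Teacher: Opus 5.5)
Your proposal is correct and is essentially the paper's proof. Both take the empirical distribution of the phases $\langle w_i,e\rangle$ as a feasible point of the phase program, with the constraints read off from Lemma~\ref{lem:phase}, test (b) and homogeneity of the $m$-linear form. The separate treatment of $e=0$ is unnecessary: there the bound $|\nabla^mf_W(t')[e,\dots,e]|\le\eta_m\norm e^m$ holds trivially, and the hypothesis $f_W(t')<\tau$ by itself already makes the empirical distribution feasible, so the uniform argument covers that case.
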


\begin{proof}
Let $\theta$ have the empirical distribution of the phases $\theta_i=\langle w_i,e\rangle$. By Lemma~\ref{lem:phase} and (a), (b),
\[\E\cos2\pi\theta=f_W(t')<\tau,\qquad\E\theta^2=e^{\top}\Sigma(P_W)e\le\sigma^2\norm e^2,\]
\[|\E J_m(\theta)|=\frac{|\nabla^mf_W(t')[e,\dots,e]|}{(2\pi)^m}\le\frac{\eta_m\norm e^m}{(2\pi)^m}\qquad(1\le m\le k).\] So $\theta$ is feasible for the phase program with these tolerances, and $\E\theta^2$ is at least its value.
\end{proof}

\subsection{Far thresholds and the constant}\label{sec:farthresholds}

The far side is governed by the same program at far targets. We normalize $\sigma^2=1/2\pi$, which costs nothing (Remark~\ref{rem:normalization}).

\begin{definition}[Far thresholds]\label{def:far}
Let $\sigma^2=1/2\pi$ and $k\in\{0,1,2\}$. A number $c>0$ is \emph{$k$-far-certifiable} if there are sequences $\tau_n,\delta_n,\eta_{1,n},\eta_{2,n}\to0$ of positive numbers such that for all sufficiently large $n$, every lattice $\Lat\subseteq\R^n$ and every $t$ with $\dist(t,\Lat)\ge c\sqrt n$ admit a probability distribution $P$ on $\{w\in\Lat^*:\norm w\le n^2\}$ with
\[\Sigma(P)\preceq(1+\delta_n)\sigma^2I,\qquad\E_P\cos(2\pi\langle w,t\rangle)\le\tau_n,\]
\[\norm{D_1(P)}\le\eta_{1,n}\ \ (k\ge1),\qquad\norm{D_2(P)}\le\eta_{2,n}\ \ (k=2).\]
The \emph{far threshold} $\cfar^{(k)}$ is the infimum of the $k$-far-certifiable $c$. Clearly $\cfar^{(0)}\le\cfar^{(1)}\le\cfar^{(2)}$.
\end{definition}

The support bound $\norm w\le n^2$ is a normalization (the scale of Gaussian samples is $\norm w\approx\sqrt n$) that makes the passage from distributions to certificates of polynomial bit length routine; the results below use it only through Theorem~\ref{thm:constant}, whose proof works verbatim for any polynomial bound $R_n\ge n^2$ with $N_n=n^4R_n^2$ samples, so nothing depends on the particular choice. Consider a family $\mathcal V_n=\mathcal V^{(k)}_{\tau_n,\sigma_n,\eta_n}$ with $\tau_n\to0$, $\sigma_n^2=(1+\delta_n)/2\pi$ with $\delta_n\to0$, $\eta_{1,n},\eta_{2,n}\to0$, reading certificates of $N_n$ vectors of norm at most $n^2$, with $N_n$ polynomially bounded (the verifier checks the norm bound; the verifier of Theorem~\ref{thm:main} does not, but adding the check changes nothing there, since its Gaussian certificates have $\norm{w_i}\le\sqrt n$ on the event $\mathcal E$ of Lemma~\ref{lem:EBcomplete}); by~\eqref{eq:rk} its close radius $r_k(n)$ tends to
\begin{equation}\label{eq:rhok}
\varrho_0:=\sqrt{\pi/8}=0.6267,\qquad\varrho_1:=\sqrt{\pi/4}=0.8862,\qquad\varrho_2:=\sqrt{3\pi/8}=1.0854 .
\end{equation}
The family \emph{achieves} $c$ if for all sufficiently large $n$, every lattice $\Lat\subseteq\R^n$ and every $t$ with $\dist(t,\Lat)>c\,r_k(n)\sqrt n$ have an accepting certificate of polynomial bit length. Let $\cAR^{(k)}$ be the infimum of the $c$ achieved by some such family. A family that achieves $c$ and is sound (Section~\ref{sec:close}) places $\mathrm{GapCVP}_{c\sqrt n}$ in $\coNP$ for all large $n$, exactly as in the proof of Theorem~\ref{thm:main}. Theorem~\ref{thm:constant} below proves
\begin{equation}\label{eq:cAR}
\cAR^{(k)}=\frac{\cfar^{(k)}}{\varrho_k}\qquad(k=0,1,2),\qquad\text{in particular}\qquad\cAR^{(2)}=\sqrt{\frac{8}{3\pi}}\;\cfar^{(2)}=0.9213\,\cfar^{(2)} .
\end{equation}

\section{Duality}\label{sec:duality}

Both programs are linear programs over probability measures with finitely many linear or conic constraints. In both, the dual variables have a concrete meaning, and we prove strong duality by a separation argument in finite dimension; the only tool is the following standard fact (\cite[Theorem~11.3]{Rockafellar70}): if $C_1,C_2\subseteq\R^d$ are nonempty, convex and disjoint, and $C_2$ has nonempty interior, then there is a nonzero linear functional $\ell$ on $\R^d$ and $\alpha\in\R$ with $\ell\le\alpha$ on $C_1$ and $\ell\ge\alpha$ on $C_2$.

\subsection{The phase program and envelope inequalities}

\begin{definition}[Envelopes]\label{def:envelope}
An \emph{envelope of order $k$} is a triple $(v,a,b)$ with $v\in\R$, $a\ge0$, $b=(b_1,\dots,b_k)\in\R^k$ such that
\begin{equation}\label{eq:envelope}
\theta^2\ \ge\ v-a\cos2\pi\theta-\sum_{m=1}^kb_mJ_m(\theta)\qquad\text{for every }\theta\in\R .
\end{equation}
\end{definition}

\begin{theorem}[Duality for the phase program]\label{thm:phase-duality}
Let $k\ge0$, $\tau\ge0$ and $\eta'\in\R^k_{\ge0}$.
\begin{enumerate}[label=\textup{(\roman*)}]
\item \textup{(Weak duality)} For every envelope $(v,a,b)$ of order $k$, $v_k(\tau;\eta')\ge v-a\tau-\sum_m|b_m|\eta'_m$.
\item \textup{(Strong duality)} If the phase program has a finitely supported feasible distribution $Q_0$ satisfying all its constraints strictly, in particular if $\tau>0$ and $\eta'_m>0$ for all $m$ (then the distributions of Lemma~\ref{lem:faratoms}, which have $\E\cos2\pi\theta=0$ and $\E J_m=0$ for all $m\le k$, qualify), then
\[v_k(\tau;\eta')=\max\Big\{v-a\tau-\sum_m|b_m|\eta'_m:\ (v,a,b)\ \text{an envelope of order }k\Big\},\]
and the maximum is attained.
\end{enumerate}
\end{theorem}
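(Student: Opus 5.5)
Weak duality is immediate: for $\theta$ feasible, integrate the envelope inequality~\eqref{eq:envelope} against the law of $\theta$. All terms are integrable because $|J_m(\theta)|\le|\theta|^m$ and $\E|\theta|^{\max(2,k)}<\infty$. This gives $\E\theta^2\ge v-a\E\cos2\pi\theta-\sum_mb_m\E J_m\ge v-a\tau-\sum_m|b_m|\eta'_m$, using $a\ge0$, $\E\cos2\pi\theta\le\tau$ and $|\E J_m|\le\eta'_m$. Taking the infimum over feasible $\theta$ proves (i).

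For strong duality I would pass to finite dimension through the moment map
\[\Psi(\theta)=\big(\theta^2,\cos2\pi\theta,J_1(\theta),\dots,J_k(\theta)\big)\in\R^{k+2},\qquad C_1=\big\{\E\Psi(\theta):\ \theta\ \text{finitely supported}\big\}=\operatorname{conv}\Psi(\R).\]
If $v^*:=v_k(\tau;\eta')=-\infty$ there is nothing to show, but in fact $v^*\ge0$. For $\epsilon>0$ consider the convex set
\[C_2=\big\{(s,c,j):\ s<v^*,\ c<\tau,\ |j_m|<\eta'_m\big\},\]
which is open. Taking $\epsilon=0$ here, $C_1\cap C_2=\emptyset$, because a finitely supported distribution with moments in $C_2$ would be feasible with $\E\theta^2<v^*$. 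The separation theorem then gives a nonzero $(\ell_0,\ell_c,\ell_1,\dots,\ell_k)$ and $\alpha$ such that
\[\ell_0s+\ell_cc+\sum_m\ell_mj_m\ge\alpha\ \text{on }C_1,\qquad \ell_0s+\ell_cc+\sum_m\ell_mj_m\le\alpha\ \text{on }C_2.\]
Since $C_2$ is unbounded below in $s$ and $c$, the second inequality forces $\ell_0\ge0$ and $\ell_c\ge0$.

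The key step is to show $\ell_0>0$, and this is where the strictly feasible $Q_0$ enters. Suppose $\ell_0=0$. The point $\E_{Q_0}\Psi$ lies in $C_1$, and its last $k+1$ coordinates lie in the open box $\{c<\tau,\ |j_m|<\eta'_m\}$. The nonzero functional $(\ell_c,\ell_m)$ is then $\ge\alpha$ at this point but $\le\alpha$ on the whole open box, which contains a neighbourhood of the point, so it would have to vanish. This contradiction gives $\ell_0>0$, and we normalize $\ell_0=1$.

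Set $a=\ell_c\ge0$ and $b_m=\ell_m$. The inequality on $C_1$, evaluated at point masses, reads $\theta^2+a\cos2\pi\theta+\sum b_mJ_m(\theta)\ge\alpha$ for every real $\theta$, so $(\alpha,a,b)$ is an envelope. The inequality on $C_2$, taking the supremum over its closure, gives
\[v^*+a\tau+\sum_m|b_m|\eta'_m\le\alpha,\qquad\text{that is,}\qquad \alpha-a\tau-\sum_m|b_m|\eta'_m\ge v^*.\]
With (i), this shows the maximum is attained, with value $v^*$.

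One subtlety remains. $v^*$ is defined as an infimum over all distributions with finite moments, not only finitely supported ones. This is harmless, because weak duality already bounds every feasible distribution, and the finitely supported infimum is therefore at least $v^*$; I would phrase the separation with the finitely supported infimum $\tilde v\ge v^*$ and conclude $\tilde v=\text{max}=v^*$ via (i). The main obstacle I expect is exactly the Slater step $\ell_0>0$, together with making sure $Q_0$ exists. The latter is the cited Lemma~\ref{lem:faratoms}: distributions with $\E\cos=0$ and $\E J_m=0$ satisfy the strict constraints whenever $\tau,\eta'_m>0$.
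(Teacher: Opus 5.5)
Your proof is correct and follows essentially the same route as the paper. You separate the convex set of moment vectors of finitely supported distributions from the open set of ``better than optimal'' constraint values, use the strictly feasible $Q_0$ to rule out a zero coefficient on $\E\theta^2$, and read the envelope off point masses.

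The only real difference is that you separate at the level $v^*$ itself rather than at $v^*-\epsilon$. This is legitimate, since the sets are still disjoint and $C_2$ is open, and it gives attainment of the maximum directly, so you do not need the paper's separate compactness argument bounding $a+\sum_m|b_m|$ through the slack of $Q_0$.
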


\begin{proof}
(i) Take expectations in~\eqref{eq:envelope} with respect to a feasible distribution: $\E\theta^2\ge v-a\E\cos2\pi\theta-\sum_mb_m\E J_m\ge v-a\tau-\sum_m|b_m|\eta'_m$, using $a\ge0$, $\E\cos2\pi\theta\le\tau$ and $|\E J_m|\le\eta'_m$.

(ii) Let $v^*=v_k(\tau;\eta')$, which is finite ($0\le v^*\le\E_{Q_0}\theta^2$). Consider in $\R^{2+2k}$ the set
\[K:=\Big\{\Big(\E_Q\theta^2+r_0,\ \E_Q\cos2\pi\theta+r,\ \E_QJ_1-\eta'_1+r_1^+,\ -\E_QJ_1-\eta'_1+r_1^-,\ \dots\Big):\]
\[\hspace{8em}Q\ \text{finitely supported},\ r_0,r,r_m^\pm\ge0\Big\},\]
where the coordinates after the first two list, for each $m$, the pairs $\E_QJ_m-\eta'_m+r_m^+$ and $-\E_QJ_m-\eta'_m+r_m^-$. The set $K$ is convex, being the image of a convex set under an affine map plus a convex cone. Fix $\epsilon>0$ and let $C_2:=\{(x_0,x,y^\pm_1,\dots):\ x_0<v^*-\epsilon,\ x<\tau,\ y_m^\pm<0\}$, an open convex set. If a point of $K$ lay in $C_2$, the corresponding $Q$ would be feasible with $\E_Q\theta^2<v^*-\epsilon$, contradicting the definition of $v^*$. So $K\cap C_2=\emptyset$, and there is a nonzero $(\alpha_0,\alpha,\beta^\pm_m)$ and $\gamma$ with
\[\alpha_0x_0+\alpha x+\sum_m(\beta_m^+y_m^++\beta_m^-y_m^-)\ \ge\ \gamma\ \text{ on }K,\qquad\le\gamma\ \text{ on }C_2 .\]
Since $C_2$ is unbounded below in every coordinate, $\alpha_0,\alpha,\beta_m^\pm\ge0$, and $\gamma\ge\sup_{C_2}=\alpha_0(v^*-\epsilon)+\alpha\tau$. On $K$ take $Q=\delta_\theta$ and all $r=0$:
\begin{equation}\label{eq:sep}
\alpha_0\theta^2+\alpha\cos2\pi\theta+\sum_m(\beta_m^+-\beta_m^-)J_m(\theta)-\sum_m(\beta_m^++\beta_m^-)\eta'_m\ \ge\ \alpha_0(v^*-\epsilon)+\alpha\tau\qquad\forall\theta\in\R .
\end{equation}
If $\alpha_0=0$, then integrating~\eqref{eq:sep} against the strictly feasible $Q_0$ gives $\alpha(\E_{Q_0}\cos2\pi\theta-\tau)+\sum_m\big(\beta_m^+(\E_{Q_0}J_m-\eta'_m)-\beta_m^-(\E_{Q_0}J_m+\eta'_m)\big)\ge0$, where every bracket is strictly negative; hence $\alpha=\beta_m^\pm=0$, contradicting that the functional is nonzero. So $\alpha_0>0$; dividing by it and writing $a=\alpha/\alpha_0$, $b_m=(\beta_m^+-\beta_m^-)/\alpha_0$, $v:=v^*-\epsilon+a\tau+\sum_m(\beta_m^++\beta_m^-)\eta'_m/\alpha_0$, inequality~\eqref{eq:sep} says that $(v,a,b)$ is an envelope, and its dual value is $v-a\tau-\sum|b_m|\eta'_m\ge v^*-\epsilon$ because $|b_m|\le(\beta_m^++\beta_m^-)/\alpha_0$. With (i), the supremum of the dual values is $v^*$. It is attained: if an envelope has dual value $\ge v^*-1$, then integrating~\eqref{eq:envelope} against $Q_0$, whose constraints have slack $s>0$, gives
\[\E_{Q_0}\theta^2\ \ge\ v-a(\tau-s)-\sum_m|b_m|(\eta'_m-s)\ \ge\ v^*-1+s\Big(a+\sum_m|b_m|\Big),\]
so $a+\sum_m|b_m|\le(\E_{Q_0}\theta^2-v^*+1)/s$, and $v^*-1\le v\le a$ (take $\theta=0$ in~\eqref{eq:envelope}); such envelopes lie in a compact set, which is closed since~\eqref{eq:envelope} is a pointwise condition, and the dual value is continuous on it.
\end{proof}

The dual variables are the multipliers of the three kinds of tests: $a$ for the threshold, $b_1$ for the gradient, $b_2$ for the Hessian, and so on. In Section~\ref{sec:close} we exhibit, for $k=0,1,2$, an envelope and a feasible distribution with the same value; by (i) this proves the value and shows that no slack is lost, without appeal to (ii). Part (ii) says that this is always possible in principle, i.e. that \emph{every} sound close radius of a verifier with derivative tests is certified by an envelope inequality, which is the sense in which the envelope method is complete.

\subsection{The prover's program and quadratic certificates}

\begin{theorem}[Duality for the prover's program]\label{thm:prover-duality}
Let $\Lat\subseteq\R^n$ be a lattice, $t\in\R^n$, $\sigma^2>0$, and $\eta_1,\eta_2>0$. Write $c(w)=\cos(2\pi\langle w,t\rangle)$ and $s(w)=\sin(2\pi\langle w,t\rangle)$. Then
\begin{equation}\label{eq:prover-dual}
\Phi^{(2)}_\sigma(\Lat,t;\eta)=\max\Big\{\lambda-\sigma^2\tr M-\frac{\eta_1}{2\pi}\norm u-\frac{\eta_2}{4\pi^2}\norm U_*\Big\},
\end{equation}
the maximum over all $\lambda\in\R$, $M\in\Sym$ with $M\succeq0$, $u\in\R^n$ and $U\in\Sym$ such that
\begin{equation}\label{eq:dualfeas}
c(w)+w^{\top}Mw+\langle u,w\rangle\,s(w)+c(w)\,w^{\top}Uw\ \ge\ \lambda\qquad\text{for every }w\in\Lat^*,
\end{equation}
and the maximum is attained. The same holds for $\Phi^{(1)}_\sigma$ with $U=0$ and for $\Phi^{(0)}_\sigma$ with $u=0$, $U=0$ (in which case $\eta$ plays no role).
\end{theorem}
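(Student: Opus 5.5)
The plan follows the proof of Theorem~\ref{thm:phase-duality}: weak duality by integrating the dual inequality, then strong duality by separation in a finite-dimensional space of moments, with attainment from a compactness argument built on a strictly feasible point.

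\emph{Weak duality.} Let $P$ be feasible for~\eqref{eq:Phi} and let $(\lambda,M,u,U)$ satisfy~\eqref{eq:dualfeas}. Every term of~\eqref{eq:dualfeas} is $O(1+\norm w^2)$, so it is $P$-integrable, and taking expectations gives
\[\E_Pc+\tr(M\Sigma(P))+\langle u,D_1(P)\rangle+\tr(UD_2(P))\ \ge\ \lambda .\]
We bound the three extra terms:
\begin{itemize}
\item $\tr(M\Sigma)\le\sigma^2\tr M$, since $M\succeq0$ and $\Sigma\preceq\sigma^2I$;
\item $\langle u,D_1\rangle\le\norm u\,\eta_1/2\pi$, by Cauchy--Schwarz;
\item $\tr(UD_2)\le\norm U_*\norm{D_2}\le\norm U_*\,\eta_2/4\pi^2$, by duality of the trace and operator norms.
\end{itemize}
This yields $\E_Pc\ge\lambda-\sigma^2\tr M-\frac{\eta_1}{2\pi}\norm u-\frac{\eta_2}{4\pi^2}\norm U_*$.

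\emph{Strong duality by separation.} Let $\Phi^*$ denote the value, which lies in $[-1,1]$. We work in $\R\times\Sym\times\R^n\times\Sym$ and set
\[K=\big\{(\E_Qc+r,\ \Sigma(Q)+R,\ D_1(Q),\ D_2(Q))\big\},\]
where $Q$ ranges over finitely supported distributions on $\Lat^*$, $r\ge0$ and $R\succeq0$. This set is convex. For $\epsilon>0$ we define the convex set
\[C_2=\{x_0<\Phi^*-\epsilon,\ S\prec\sigma^2I,\ \norm{d_1}<\eta_1/2\pi,\ \norm{d_2}<\eta_2/4\pi^2\},\]
which is open and nonempty.

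The step needing care is showing $K\cap C_2=\emptyset$. A point in the intersection gives a finitely supported $Q$ that is feasible with objective below $\Phi^*-\epsilon$. Finitely supported distributions are among those in $\mathcal P(\Lat^*)$, so this contradicts the definition of $\Phi^*$. The restriction to finitely supported $Q$ therefore loses nothing in this direction.

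Separation gives $(\alpha_0,A,a,B)\ne0$ and $\gamma$ such that the functional is $\ge\gamma$ on $K$ and $\le\gamma$ on $C_2$. We then read off signs and norms:
\begin{itemize}
\item Since $r$ and $R$ are free on $K$, we get $\alpha_0\ge0$ and $A\succeq0$.
\item The supremum over $C_2$ is $\alpha_0(\Phi^*-\epsilon)+\sigma^2\tr A+\frac{\eta_1}{2\pi}\norm a+\frac{\eta_2}{4\pi^2}\norm B_*$, using norm duality for the $d_1$ and $d_2$ parts after changing sign.
\item Testing $K$ on $Q=\delta_w$ gives the pointwise inequality $\alpha_0c(w)+w^{\top}Aw+\langle a,w\rangle s(w)+c(w)w^{\top}Bw\ge\gamma$.
\end{itemize}

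Next we rule out $\alpha_0=0$. The point $\delta_0$ is strictly feasible: it has $\Sigma=0\prec\sigma^2I$, $D_1=D_2=0$, and $\eta_1,\eta_2>0$. If $\alpha_0=0$, integrating against $\delta_0$ gives $0\ge\gamma\ge\sigma^2\tr A+\dots$, which forces $A=0$, $a=0$ and $B=0$, contradicting that the functional is nonzero. So $\alpha_0>0$. We normalize by setting $M=A/\alpha_0$, $u=a/\alpha_0$, $U=B/\alpha_0$ and $\lambda=\gamma/\alpha_0$. This yields a dual feasible point with value $\ge\Phi^*-\epsilon$, and letting $\epsilon\to0$ gives equality of the supremum with $\Phi^*$.

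\emph{Attainment.} Evaluating~\eqref{eq:dualfeas} at $w=0$ gives $\lambda\le1$. A dual point with value $\ge\Phi^*-1\ge-2$ then satisfies
\[\sigma^2\tr M+\frac{\eta_1}{2\pi}\norm u+\frac{\eta_2}{4\pi^2}\norm U_*\le 3 .\]
Together with $\lambda\ge-2$, this confines such dual points to a compact set. The feasibility condition~\eqref{eq:dualfeas} is closed, being a family of pointwise inequalities, and the dual value is continuous, so the maximum is attained.

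The cases $k=1$ and $k=0$ follow by deleting the corresponding coordinates, which gives $U=0$, respectively $u=U=0$. The main obstacle is the sign and norm bookkeeping in the supremum over $C_2$, in particular getting the trace-norm and operator-norm duality right, together with ruling out $\alpha_0=0$. Both are handled through $\delta_0$, which is exactly where the hypothesis $\eta_1,\eta_2>0$ is used.
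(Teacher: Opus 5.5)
Your proposal is correct and follows essentially the same route as the paper. That route is weak duality by integrating the dual inequality, then strong duality by separating the moment set $K$ of finitely supported distributions from the open set $C_2$ of strictly feasible moment vectors. The case $\alpha_0=0$ is excluded by evaluating at $\delta_0$, and attainment follows by compactness of the dual points of value at least $\Phi-1$. The only cosmetic differences are harmless. You obtain $\alpha_0\ge0$ and $A\succeq0$ from the recession directions of $K$ rather than of $C_2$. You also set $\lambda=\gamma/\alpha_0$ instead of the explicit lower bound for $\gamma$.
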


\begin{proof}
\emph{Weak duality.} Let $P$ be feasible and $(\lambda,M,u,U)$ satisfy~\eqref{eq:dualfeas}. Integrating~\eqref{eq:dualfeas} against $P$,
\[\E_Pc(w)\ \ge\ \lambda-\tr\big(M\Sigma(P)\big)-\langle u,D_1(P)\rangle-\tr\big(UD_2(P)\big)\ \ge\ \lambda-\sigma^2\tr M-\frac{\eta_1}{2\pi}\norm u-\frac{\eta_2}{4\pi^2}\norm U_*,\]
because $\tr(M\Sigma(P))\le\sigma^2\tr M$ for $M\succeq0$ and $\Sigma(P)\preceq\sigma^2I$, $|\langle u,D_1\rangle|\le\norm u\norm{D_1}$, and $|\tr(UD_2)|\le\norm U_*\norm{D_2}$.

\emph{Strong duality.} Let $\Phi=\Phi^{(2)}_\sigma(\Lat,t;\eta)$ and $\epsilon>0$. In the space $\R\times\Sym\times\R^n\times\Sym$ consider the convex sets
\begin{align*}
K&:=\big\{(\E_Pc+r,\ \Sigma(P)+S,\ D_1(P),\ D_2(P)):\ P\ \text{finitely supported on }\Lat^*,\ r\ge0,\ S\succeq0\big\},\\
C_2&:=\big\{(x,B,y,Y):\ x<\Phi-\epsilon,\ B\prec\sigma^2I,\ \norm y<\eta_1/2\pi,\ \norm Y<\eta_2/4\pi^2\big\}.
\end{align*}
$C_2$ is open and convex, and $K\cap C_2=\emptyset$ (a point of $K$ in $C_2$ would come from a feasible $P$ with $\E_Pc<\Phi-\epsilon$). Separation gives $(\alpha,M,u,U)\ne0$ and $\gamma$ with
\[\alpha x+\tr(MB)+\langle u,y\rangle+\tr(UY)\le\gamma\ \text{ on }C_2,\qquad\ge\gamma\ \text{ on }K.\]
Letting $x\to-\infty$ gives $\alpha\ge0$; letting $B\to-\infty$ in a direction $-vv^{\top}$ gives $v^{\top}Mv\ge0$, i.e. $M\succeq0$; and taking suprema over $C_2$,
\[\gamma\ \ge\ \alpha(\Phi-\epsilon)+\sigma^2\tr M+\frac{\eta_1}{2\pi}\norm u+\frac{\eta_2}{4\pi^2}\norm U_* .\]
If $\alpha=0$, evaluating the functional at the point of $K$ given by $P=\delta_0$, $r=0$, $S=0$ gives $0\ge\gamma\ge\sigma^2\tr M+\frac{\eta_1}{2\pi}\norm u+\frac{\eta_2}{4\pi^2}\norm U_*$, whence $M=0$, $u=0$, $U=0$ (as $\eta_1,\eta_2>0$ and $M\succeq0$), a contradiction. So $\alpha>0$; normalize $\alpha=1$, and evaluate at $P=\delta_w$, $r=0$, $S=0$:
\[c(w)+w^{\top}Mw+\langle u,w\rangle s(w)+c(w)w^{\top}Uw\ \ge\ \gamma\ \ge\ \Phi-\epsilon+\sigma^2\tr M+\frac{\eta_1}{2\pi}\norm u+\frac{\eta_2}{4\pi^2}\norm U_*=:\lambda .\]
Thus $(\lambda,M,u,U)$ satisfies~\eqref{eq:dualfeas} with dual value $\Phi-\epsilon$. Together with weak duality, the supremum of the dual values is $\Phi$. Attainment: dual feasible points with value $\ge\Phi-1$ have $\lambda\le1$ (take $w=0$ in~\eqref{eq:dualfeas}) and hence $\sigma^2\tr M+\frac{\eta_1}{2\pi}\norm u+\frac{\eta_2}{4\pi^2}\norm U_*\le2-\Phi\le3$, a compact set, on which the constraints~\eqref{eq:dualfeas} are closed.
\end{proof}

The separation used only finitely supported distributions. Running it with $\Phi$ replaced by the infimum $\Phi_{\mathrm{fin}}\ge\Phi$ over finitely supported feasible $P$ gives dual values at least $\Phi_{\mathrm{fin}}-\epsilon$, so by weak duality $\Phi_{\mathrm{fin}}=\Phi$; the same holds for the phase program in Theorem~\ref{thm:phase-duality}(ii).

\begin{remark}[The class of linear verifiers]\label{rem:linearverifiers}
Nothing in the proof of Theorem~\ref{thm:prover-duality} uses the particular test functions: weak duality needs only that the $\varphi_j$ be integrable, which is part of Definition~\ref{def:linearverifier}, the separation uses only finitely supported distributions, and the case $\alpha=0$ is excluded by a strictly feasible point. So for a linear verifier $\mathcal V(\varphi,\alpha,S,\tau)$ with a finitely supported feasible distribution $Q_0$ satisfying every constraint strictly (such as $\delta_0$ when every constraint has slack at $w=0$), the same argument gives
\[\Phi(\Lat,t)=\max\Big\{\lambda-\sum_j\langle\alpha_j,Y_j\rangle:\ Y_j\succeq0,\ \ \cos(2\pi\langle w,t\rangle)+\sum_j\langle\varphi_j(w,t),Y_j\rangle\ge\lambda\ \text{for all }w\in\Lat^*\cap S\Big\},\]
with scalar multipliers $Y_j\ge0$ for scalar constraints; the maximum is attained, because integrating the dual inequality against $Q_0$ bounds $\lambda$ and the $\tr Y_j$ on the dual solutions of value at least $\Phi-1$. The exact values of Section~\ref{sec:close} and the identification of the constant in Theorem~\ref{thm:constant} are proved for the verifiers $\mathcal V^{(k)}$; their proofs use the phase reduction (Lemma~\ref{lem:phase}) and the scalings of Lemma~\ref{lem:realize}(iv) and Remark~\ref{rem:normalization}, which a general test function need not share.
\end{remark}

\begin{remark}[Reading the dual]\label{rem:dual-reading}
A dual feasible point of value at least $\tau$ proves that no certificate is accepted: in the spectral part it is a quadratic form $M\succeq0$ of small trace such that $\cos(2\pi\langle w,t\rangle)$ is bounded below by $\lambda-w^{\top}Mw$ on the whole dual lattice, i.e. the dual vectors with $\langle w,t\rangle$ far from an integer lie outside the ellipsoid $\{w^{\top}Mw\le\lambda\}$. On the close side the dual solutions have rank one. If $t=y+e$ with $y\in\Lat$ and $(v,a,b)$ is an envelope of order $k\le2$ with $a>0$ (Definition~\ref{def:envelope}), then dividing~\eqref{eq:envelope} by $a$ and putting $\theta=\langle w,e\rangle$ shows that
\[\lambda=\frac va,\qquad M=\frac1a\,ee^{\top},\qquad u=\frac{b_1}a\,e,\qquad U=\frac{b_2}a\,ee^{\top}\]
satisfies~\eqref{eq:dualfeas}, with value $\big(v-\sigma^2\norm e^2-\frac{\eta_1|b_1|}{2\pi}\norm e-\frac{\eta_2|b_2|}{4\pi^2}\norm e^2\big)/a$. For the envelopes of Lemmas~\ref{lem:env01} and~\ref{lem:env2} this value is at least $\tau$ exactly when $\norm e\le r_k$, with $r_k$ as in~\eqref{eq:rk}; so Proposition~\ref{prop:sound} is weak duality for these rank-one solutions, and Theorems~\ref{thm:phase-duality} and~\ref{thm:exact} show that, for derivative tests, restricting the dual to rank one along the offset loses nothing in the close radius.
\end{remark}

\begin{remark}[Normalization]\label{rem:normalization}
Scaling a lattice by $\lambda>0$ scales $\Lat^*$ by $\lambda^{-1}$, distances by $\lambda$, $\Sigma(P)$ by $\lambda^{-2}$, and leaves $\E_Pc$, the direction of $D_1$ and the sign pattern of $D_2$ unchanged, while $\norm{D_1}$ and $\norm{D_2}$ scale by $\lambda^{-1}$, $\lambda^{-2}$; the close radii~\eqref{eq:rk} scale by $\sigma^{-1}$. The ratio of far threshold to close radius is therefore the same at every spectral level, and $\sigma^2=1/2\pi$ in Definition~\ref{def:far} is no loss of generality, provided the tolerances and the support bound are rescaled along with $\sigma$.
\end{remark}

\section{The close side: exact solution of the phase program}\label{sec:close}

\subsection{Three envelopes}

\begin{lemma}[Envelopes of order $0$ and $1$]\label{lem:env01}
The inequalities~\eqref{eq:env0} and~\eqref{eq:env1} hold for every real $\theta$, with equality exactly at $\theta=\pm\frac14$ in the first and at $\theta\in\{0,\pm\frac12\}$ in the second.
\end{lemma}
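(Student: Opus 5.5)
Both inequalities involve even functions of $\theta$, so it suffices to treat $\theta\ge0$. The plan is a one-variable calculus argument for each: locate the critical points, show the difference vanishes exactly at the claimed points, and check that it is positive elsewhere.

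\emph{Envelope \eqref{eq:env0}.} Put $g(\theta)=\theta^2-\frac1{16}+\frac1{4\pi}\cos2\pi\theta$, so that $g'(\theta)=2\theta-\frac12\sin2\pi\theta=\frac12(4\theta-\sin2\pi\theta)$.
\begin{itemize}
\item On $(0,\frac14)$ the concave function $\sin2\pi\theta$ lies strictly above its chord $4\theta$ joining $(0,0)$ to $(\frac14,1)$. Hence $g'<0$ there.
\item For $\theta>\frac14$ we have $4\theta>1\ge\sin2\pi\theta$. Hence $g'>0$ there.
\end{itemize}
So $g$ decreases strictly on $[0,\frac14]$ and increases strictly on $[\frac14,\infty)$. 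Its global minimum over $\theta\ge0$ is therefore $g(\frac14)=\frac1{16}-\frac1{16}+0=0$, and equality holds only at $\theta=\pm\frac14$.

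\emph{Envelope \eqref{eq:env1}.} Substitute $x=\pi\theta\ge0$ and use $\cos2\pi\theta=1-2\sin^2x$ and $\theta\sin2\pi\theta=\frac{x}{\pi}\sin2x$. The difference of the two sides, multiplied by $\pi^2$, becomes
\[F(x)=x^2+x\sin2x-\frac{\pi^2}4\sin^2x .\]
The goal is to show $F\ge0$, with equality exactly at $x\in\{0,\pi/2\}$. I would split into three ranges.
\begin{itemize}
\item \emph{Large $x$.} For $x\ge\pi$ the crude bound $F\ge x^2-x-\pi^2/4$ is positive, since this quadratic is increasing for $x\ge\frac12$ and equals $\frac34\pi^2-\pi>0$ at $x=\pi$.
\item \emph{The endpoints $x=0$ and $x=\pi$.} Directly, $F(0)=0$ and $F(\pi)=\pi^2>0$.
\item \emph{The range $0<x<\pi$.} Here $\sin x>0$, and dividing by $\sin^2x$ gives
\[\frac{F(x)}{\sin^2x}=x^2+x^2\cot^2x+2x\cot x-\frac{\pi^2}4=q(x)-1-\frac{\pi^2}4,\qquad q(x):=x^2+(1+x\cot x)^2 .\]
So the claim on this range is $q(x)\ge1+\pi^2/4$, with equality only at $x=\pi/2$, where $\cot x=0$.
\end{itemize}

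To prove the bound on $q$, I would show that $q$ is strictly decreasing on $(0,\pi/2]$ and strictly increasing on $[\pi/2,\pi)$. Write $h(x)=1+x\cot x$, so that $h'(x)=\cot x-x/\sin^2x$ and $q'=2x+2hh'$.
\begin{itemize}
\item Near $\pi/2$: $h(\pi/2)=1$ and $h'(\pi/2)=-\pi/2$, so $q'(\pi/2)=\pi-\pi=0$, consistent with a minimum there.
\item On $(0,\pi/2)$: the sign of $q'$ should follow by writing $q'(x)\sin^2x$ as an explicit trigonometric polynomial in $x$ and bounding it with $\sin x\le x$ and $\tan x\ge x$. I expect a second derivative check near $\pi/2$ to be needed.
\item On $(\pi/2,\pi)$: $h$ decreases to $-\infty$, so $h'<0$ and $h<1$. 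Moreover $x^2\ge\pi^2/4$ grows, and once $h\le0$, that is $x\cot x\le-1$, we get $q\ge x^2+0$. The remaining piece must still be compared with $1+\pi^2/4$.
\end{itemize}

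As a fallback if these monotonicity claims prove messy, I would expand $F$ around its two zeros instead. One zero is $x=0$, where $F=(2-\pi^2/4)x^2+O(x^4)>0$ because $\pi^2/4<2.47$. The other is $x=\pi/2$, where $F$ should have a double zero. Away from these zeros, the inequality on the compact range could be certified by interval estimates.

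I expect the sign analysis of $q'$ on $(0,\pi)$ to be the main obstacle. The inequality is tight to second order at $x=\pi/2$, so crude bounds fail in a neighbourhood of that point. The fallback also requires verifying that $x=\pi/2$ is a genuine double zero of $F$.
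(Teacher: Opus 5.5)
You treat \eqref{eq:env0} exactly as the paper does. For \eqref{eq:env1}, several pieces are correct: the reduction to $F(x)=x^2+x\sin2x-\frac{\pi^2}4\sin^2x$, the tail bound for $x\ge\pi$, and the identity $F=\sin^2x\,\bigl(q-1-\frac{\pi^2}4\bigr)$ on $(0,\pi)$.

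The gap is that the proof stops where the content of the lemma lies, namely showing $F>0$ on $(0,\pi)\setminus\{\frac\pi2\}$. On $(0,\frac\pi2)$ you only say the sign of $q'$ ``should follow''. On $(\frac\pi2,\pi)$ you handle only the range where $1+x\cot x\le0$, and even there you still owe $x^2\ge1+\frac{\pi^2}4$. The interval just to the right of $\frac\pi2$, where the inequality is tight to second order, is left untreated. The fallback is not a proof as stated, since interval estimates cannot certify a neighbourhood of an exact zero without a local argument. Its expansion is also wrong: $F(x)=\bigl(3-\frac{\pi^2}4\bigr)x^2+O(x^4)$, whereas the coefficient $2-\frac{\pi^2}4$ you wrote is negative and would make $F<0$ near $0$.

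The missing step is a short computation, and it removes the difficulty you expected at $\frac\pi2$. With $h=1+x\cot x$ one has $h'\sin^2x=\sin x\cos x-x$. In $q'\sin^2x=2x\sin^2x+2h\,(\sin x\cos x-x)$ all terms but two cancel:
\[q'(x)=\frac{2\cot x\,\bigl(\sin^2x-x^2\bigr)}{\sin^2x}\qquad(0<x<\pi).\]
Since $|\sin x|<x$ for $x>0$, $q'$ has the sign of $-\cot x$. So $q$ strictly decreases on $(0,\frac\pi2]$ and strictly increases on $[\frac\pi2,\pi)$, giving $q>q(\frac\pi2)=1+\frac{\pi^2}4$ elsewhere. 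No second-derivative check is needed. Your plan to write $q'\sin^2x$ explicitly and use $\sin x\le x$ was the right instinct.

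Completed this way, your route genuinely differs from the paper's and is a little lighter. The paper does not divide by $\sin^2$. It works with $G(\varphi)=4F(\varphi/2)$ and factors
\[G'(\varphi)=4\cos^2\frac\varphi2\,\Bigl(\varphi-\kappa_0\tan\frac\varphi2\Bigr),\qquad \kappa_0=\frac{\pi^2}4-1.\]
It then uses that $\varphi-\kappa_0\tan\frac\varphi2$ is concave on $(0,\pi)$ with positive slope at $0$ (because $\kappa_0<2$), so $G$ rises and then falls between its zeros at $0$ and $\pi$. It also notes $G'>0$ on $(\pi,2\pi)$, because $\tan\frac\varphi2<0$ there. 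Dividing by $\sin^2x$ first, as you do, turns those two zeros into a single interior minimum of $q$. That minimum is located by the sign of $\cot x$ and needs only $|\sin x|<x$.
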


\begin{proof}
Let $g(\theta)=\theta^2-\frac1{16}+\frac1{4\pi}\cos2\pi\theta$, an even function with $g(\frac14)=0$ and $g'(\theta)=2\theta-\frac12\sin2\pi\theta$. On $[0,\frac14]$ the concave function $\sin2\pi\theta$ lies above its chord $4\theta$, so $g'\le0$; for $\theta\ge\frac14$, $2\theta\ge\frac12\ge\frac12\sin2\pi\theta$, so $g'\ge0$, with equality only where $\theta=\frac14$ and $\sin2\pi\theta=1$. Hence $g$ decreases on $[0,\frac14]$ to $0$ and increases strictly afterwards, and $g\ge0$ with equality only at $\pm\frac14$.

For the second inequality put $\varphi=2\pi\theta$; multiplied by $4\pi^2$, the claim reads $G(\varphi)\ge0$ for $G(\varphi)=\varphi^2-\frac{\pi^2}2+\frac{\pi^2}2\cos\varphi+2\varphi\sin\varphi$, an even function with $G(0)=G(\pi)=0$ and
\[G'(\varphi)=2\varphi(1+\cos\varphi)-\Big(\frac{\pi^2}2-2\Big)\sin\varphi=4\cos^2\frac\varphi2\Big(\varphi-\kappa_0\tan\frac\varphi2\Big),\qquad\kappa_0=\frac{\pi^2}4-1=1.4674 .\]
On $(0,\pi)$ the sign of $G'$ is that of $h_1(\varphi)=\varphi-\kappa_0\tan\frac\varphi2$, which vanishes at $0$, has derivative $1-\frac{\kappa_0}2\sec^2\frac\varphi2$ positive at $0$ (as $\kappa_0<2$) and decreasing, and tends to $-\infty$ at $\pi$; so $h_1$ increases from $0$ and then decreases through a unique zero, $G$ increases from $G(0)=0$ and then decreases to $G(\pi)=0$, and $G>0$ on $(0,\pi)$. On $(\pi,2\pi)$, $\tan\frac\varphi2<0$, so $G'>0$ and $G>G(\pi)=0$. For $\varphi\ge2\pi$, $G(\varphi)\ge\varphi^2-\pi^2-2\varphi>0$.
\end{proof}

\begin{lemma}[The Hessian envelope]\label{lem:env2}
The inequality~\eqref{eq:env2} holds for every real $\theta$; that is, the difference of its two sides,
\begin{equation}\label{eq:h}
h(\theta):=\theta^2-\frac3{16}+\frac5{8\pi}\cos2\pi\theta+\frac12\,\theta\sin2\pi\theta-\frac2\pi\,\theta^2\cos2\pi\theta,
\end{equation}
is nonnegative on $\R$, with equality if and only if $\theta\in\{\pm\frac14,\pm\frac34\}$, where $h$ has double zeros.
\end{lemma}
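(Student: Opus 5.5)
By evenness of $h$ it suffices to treat $\theta\ge0$. I would first dispose of large $\theta$ with a crude bound, then treat a compact interval by working with $h$ as a quadratic in $\theta$ with trigonometric coefficients. The double zeros force a finer local argument near $\frac14$ and $\frac34$.

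\emph{Verification at the zeros, and large $\theta$.} At $\theta=\frac14$ we have $\cos2\pi\theta=0$ and $\sin2\pi\theta=1$, so $h(\frac14)=\frac1{16}-\frac3{16}+\frac18=0$. At $\theta=\frac34$ we have $\sin2\pi\theta=-1$, so $h(\frac34)=\frac9{16}-\frac3{16}-\frac38=0$. Differentiating,
\[
h'(\theta)=2\theta-\frac54\sin2\pi\theta+\frac12\sin2\pi\theta+\pi\theta\cos2\pi\theta-\frac4\pi\theta\cos2\pi\theta+4\theta^2\sin2\pi\theta .
\]
At $\frac14$ this is $\frac12-\frac34+\frac14=0$, and at $\frac34$ it is $\frac32+\frac34-\frac94=0$. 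A computation of $h''$ at these points should then show it is positive, so both are strict local minima with value $0$, which gives the double zeros. For large $\theta$, write
\[
h(\theta)=A\theta^2+B\theta+C,\qquad A=1-\tfrac2\pi\cos2\pi\theta\ge1-\tfrac2\pi,\quad |B|\le\tfrac12,\quad C\ge-\tfrac3{16}-\tfrac5{8\pi}.
\]
This gives $h(\theta)\ge(1-\frac2\pi)\theta^2-\frac12\theta-0.39$, which is positive once $\theta\ge\theta_0$ with $\theta_0\approx1.9$. It therefore remains to prove $h>0$ on $[0,\theta_0]$ away from $\frac14$ and $\frac34$.

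\emph{The compact range.} I would use the substitution $\varphi=2\pi\theta$ and multiply by $4\pi^2$, as in Lemma~\ref{lem:env01}, obtaining an explicit function $H(\varphi)$ on $[0,2\pi\theta_0]$. A natural attempt is to factor $H'$ in the same way as $G'$ in Lemma~\ref{lem:env01}, isolating the factor that vanishes at $\varphi=\pi/2$ and $\varphi=3\pi/2$, and then to show that $H$ decreases on $[0,\pi/2]$, increases then decreases on $[\pi/2,3\pi/2]$ (staying positive in between), and increases afterwards, at least up to the point where the crude bound takes over. The endpoints are easy: $h(0)=-\frac3{16}+\frac5{8\pi}>0$ and $h(\frac12)=\frac14-\frac3{16}-\frac5{8\pi}+\frac1{2\pi}=\frac1{16}-\frac1{8\pi}>0$.

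\emph{Main obstacle and fallback.} I expect the hard step to be establishing the sign pattern of $H'$. Because of the $\varphi^2\cos\varphi$ term, $H'$ is not a single clean product, and a monotonicity argument may need to be split further. If that fails, the fallback is a rigorous subdivision. On subintervals that avoid small neighbourhoods of $\frac14$ and $\frac34$, lower-bound $h$ from its value at a grid point minus a Lipschitz bound, using $|h'|\le2\theta+\frac34+(\pi+\frac4\pi)\theta+4\theta^2$, which is explicit on $[0,\theta_0]$. On the neighbourhoods themselves, use a second-order Taylor bound: $h$ has value $0$ and derivative $0$ at the centre, and a lower bound $h''\ge c>0$ on a small interval follows from the positive value of $h''$ at the zero together with an explicit bound on $|h'''|$. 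This makes $h\ge\frac c2(\theta-\theta_*)^2$ there. Choosing the neighbourhoods wide enough that the Lipschitz grid elsewhere needs only finitely many points completes the proof. Equality then holds exactly at $\pm\frac14$ and $\pm\frac34$, since every other subinterval yields a strictly positive lower bound.
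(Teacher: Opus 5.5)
Your proposal is sound in outline, but on the hard part it takes a genuinely different route from the paper.

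\textbf{The paper's route.} It treats $\theta\ge1$ exactly: it writes $\theta=j+u$ and splits on $u\in[0,\frac12]$, $(\frac12,\frac34]$, $(\frac34,1)$, which gives $h>0.05$ there. So only $\varphi=2\pi\theta\in[0,2\pi]$ remains. On that interval it gets the sign pattern of $H'$ exactly through the chain $\psi=H'/\sin\varphi$, $q=\psi'\sin^2\varphi$, $q'=B\sin\varphi$. It uses that $B'$ is concave then convex on $[0,\pi]$ and positive on $[\pi,2\pi]$, plus explicit sign checks at $0,\frac\pi2,\pi,\frac{3\pi}2,2\pi$.

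\textbf{Your first route.} Your ``natural attempt'' aims at exactly this sign pattern. However, the useful move is division by $\sin\varphi$, so that $\psi$ vanishes at $\frac\pi2$ and $\frac{3\pi}2$, not a $\cos^2(\varphi/2)$ factorization as in Lemma~\ref{lem:env01}. Controlling $\psi$ then takes the further functions $q$ and $B$ and the concavity pattern of $B'$. Since you leave all this open, your argument really rests on the fallback.

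\textbf{Your fallback.} It is valid. At the zeros $h(\theta_*)=h'(\theta_*)=0$, with $h''(\frac14)=6-\frac{\pi^2}2>0$ and $h''(\frac34)=\frac{3\pi^2}2-10>0$, so the Taylor neighbourhoods work. Since $h$ is in fact positive elsewhere, a fine enough certified grid must succeed. It is also more robust than the paper's argument: it applies unchanged to any envelope with nondegenerate double zeros, for instance the line-test envelopes that Remark~\ref{rem:otherinstances} treats only numerically.

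Its cost is that it is a computer-assisted interval computation whose constants you have not fixed. Since $h\le0.0241$ on $[0,\frac34]$ and your Lipschitz bound there is about $8$, the grid must be fine near the edges of the Taylor neighbourhoods. The spacing needed is between $10^{-4}$ and $10^{-3}$, depending on how wide you make them; $h''$ is positive on roughly $[0.14,0.40]$ and $[0.67,1.2]$, where you must also certify $h''\ge c$. That means hundreds to thousands of rigorously enclosed evaluations. The paper's route buys a proof that can be checked by hand.

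\textbf{A false claim in your first route.} $h$ does not increase on $[\frac34,\theta_0]$. It increases only up to about $\theta=1.5$, where $h\approx3.30$, and then decreases: $h(\frac74)=2$ and $h(1.92)\approx1.15$. So the monotone picture holds only on $[0,1]$. To pursue that route you need a tail bound that starts at $\theta=1$, like the paper's Step~1. Note also that your crude bound $(1-\frac2\pi)\theta^2-\frac12\theta-0.39$ becomes positive only at $\theta\approx1.93$, so take $\theta_0=2$ rather than $1.9$.
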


The multipliers in~\eqref{eq:h} are forced. Let $(v,a,b_1,b_2)$ be any envelope of order two with $v=\frac3{16}$, and $g(\theta)\ge0$ the difference of the two sides of~\eqref{eq:envelope}. The distribution $\frac34\delta_{1/4}+\frac14\delta_{3/4}$ has $\E\theta^2=\frac3{16}$ and $\E\cos2\pi\theta=\E J_1=\E J_2=0$, so $\E g(\theta)=0$ under it; hence $g$ vanishes at $\frac14$ and $\frac34$, which are then minima of $g$, so $g'$ vanishes there too, and these four conditions determine $(a,b_1,b_2)=(\frac5{8\pi},\frac12,-\frac2\pi)$. The content of the lemma is that this envelope is valid on all of $\R$. The function $h$ is small on $[-\frac34,\frac34]$, where its maximum is $0.0241$, so crude estimates do not suffice there; the proof below is exact, by a chain of three auxiliary functions whose monotonicity is established from their derivatives.

\subsection{Proof of the Hessian envelope}\label{sec:envproof}

Since $h$ is even, it suffices to treat $\theta\ge0$. We write $\varphi=2\pi\theta$ throughout.

\emph{Step 1: $\theta\ge1$.} Write $\theta=j+u$ with $j\ge1$ an integer and $u\in[0,1)$; then $\cos\varphi=\cos2\pi u$ and $\sin\varphi=\sin2\pi u$. Put $a=\frac5{8\pi}=0.1989$. If $u\in[0,\frac12]$, then $\sin\varphi\ge0$ and $\frac2\pi\theta^2-a>0$, so
\[h(\theta)\ \ge\ \theta^2-\frac3{16}-\Big(\frac2\pi\theta^2-a\Big)\cos\varphi\ \ge\ \Big(1-\frac2\pi\Big)\theta^2+a-\frac3{16}\ \ge\ 1-\frac2\pi+\frac5{8\pi}-\frac3{16}\ =\ 0.3748\ >\ 0 .\]
If $u\in(\frac12,\frac34]$, then $\sin\varphi\le0$ and $\cos\varphi\le0$, so $\frac12\theta\sin\varphi\ge-\frac\theta2$, $a\cos\varphi\ge-a$ and $-\frac2\pi\theta^2\cos\varphi\ge0$; hence, as $\theta\ge\frac32$,
\[h(\theta)\ \ge\ \theta^2-\frac\theta2-a-\frac3{16}\ \ge\ \frac94-\frac34-a-\frac3{16}\ >\ 1 .\]
If $u\in(\frac34,1)$, then $\sin\varphi<0$ and $\cos\varphi>0$, so $a\cos\varphi\ge0$ and $-\frac2\pi\theta^2\cos\varphi\ge-\frac2\pi\theta^2$; hence, as $\theta\ge\frac74$ and the right side below increases in $\theta$,
\[h(\theta)\ \ge\ \Big(1-\frac2\pi\Big)\theta^2-\frac\theta2-\frac3{16}\ \ge\ \frac{49}{16}\Big(1-\frac2\pi\Big)-\frac78-\frac3{16}\ =\ \frac{32-98/\pi}{16}\ =\ 0.0504\ >\ 0 .\]

\emph{Step 2: the function $H$ on $[0,2\pi]$.} For $\theta\in[0,1]$ put
\begin{equation}\label{eq:H}
H(\varphi):=16\pi^2\,h\Big(\frac\varphi{2\pi}\Big)=4\varphi^2-\frac8\pi\varphi^2\cos\varphi+4\pi\varphi\sin\varphi+10\pi\cos\varphi-3\pi^2,\qquad\varphi\in[0,2\pi].
\end{equation}
One checks $H(\frac\pi2)=\pi^2+2\pi^2-3\pi^2=0$ and $H(\frac{3\pi}2)=9\pi^2-6\pi^2-3\pi^2=0$. We must show $H\ge0$ on $[0,2\pi]$ with no other zeros. Let
\[\kappa':=4\pi-\frac{16}\pi=7.4734,\qquad b:=\frac8\pi=2.5465,\]
and record the signs we shall use:
\begin{equation}\label{eq:signs}
\begin{gathered}
8-\kappa'=0.5266>0,\qquad 8-4b=-2.1859<0,\\
6b-2\kappa'=\frac{80}\pi-8\pi=0.3320\in(0,8),\qquad 10b-2\kappa'=\frac{112}\pi-8\pi=10.5180>0 .
\end{gathered}
\end{equation}
Differentiating~\eqref{eq:H},
\begin{equation}\label{eq:Hp}
H'(\varphi)=8\varphi+\kappa'\varphi\cos\varphi+(b\varphi^2-6\pi)\sin\varphi,
\end{equation}
and $H'(\frac\pi2)=4\pi+(2\pi-6\pi)=0$, $H'(\frac{3\pi}2)=12\pi-(18\pi-6\pi)=0$, as it must be at double zeros. We shall determine the sign of $H'$ on $(0,2\pi)$ through three auxiliary functions. On $(0,\pi)\cup(\pi,2\pi)$ put
\begin{equation}\label{eq:psi}
\psi(\varphi):=\frac{H'(\varphi)}{\sin\varphi}=\frac{8\varphi}{\sin\varphi}+\kappa'\varphi\cot\varphi+b\varphi^2-6\pi,\qquad\text{so}\qquad\psi\Big(\frac\pi2\Big)=\psi\Big(\frac{3\pi}2\Big)=0,
\end{equation}
and compute, using $(\varphi/\sin\varphi)'=(\sin\varphi-\varphi\cos\varphi)/\sin^2\varphi$ and $(\varphi\cot\varphi)'=(\sin\varphi\cos\varphi-\varphi)/\sin^2\varphi$,
\begin{equation}\label{eq:q}
\psi'(\varphi)\sin^2\varphi=q(\varphi):=8(\sin\varphi-\varphi\cos\varphi)+\kappa'(\sin\varphi\cos\varphi-\varphi)+2b\varphi\sin^2\varphi .
\end{equation}
Then, since $(\sin\varphi\cos\varphi)'=\cos^2\varphi-\sin^2\varphi=1-2\sin^2\varphi$,
\begin{equation}\label{eq:qp}
q'(\varphi)=8\varphi\sin\varphi-2\kappa'\sin^2\varphi+2b\sin^2\varphi+4b\varphi\sin\varphi\cos\varphi=\sin\varphi\cdot B(\varphi),
\end{equation}
where
\begin{equation}\label{eq:B}
B(\varphi):=\varphi(8+4b\cos\varphi)-2(\kappa'-b)\sin\varphi,
\end{equation}
and finally
\begin{equation}\label{eq:Bp}
B'(\varphi)=8+(6b-2\kappa')\cos\varphi-4b\varphi\sin\varphi,\qquad B''(\varphi)=-(10b-2\kappa')\sin\varphi-4b\varphi\cos\varphi .
\end{equation}
The values we need are
\begin{equation}\label{eq:values}
\begin{gathered}
B(0)=0,\qquad B(\pi)=\pi(8-4b)<0,\qquad B(2\pi)=2\pi(8+4b)>0,\\
B'(0)=8+(6b-2\kappa')>0,\qquad B'\Big(\frac\pi2\Big)=8-2\pi b=-8,\qquad B'(\pi)=8-(6b-2\kappa')>0,\\
q(0)=0,\qquad q(\pi)=\pi(8-\kappa')>0,\qquad q(2\pi)=-2\pi(8+\kappa')<0,\\
q\Big(\frac{3\pi}2\Big)=-8-\frac{3\pi}2\kappa'+3\pi b=40-6\pi^2<0 .
\end{gathered}
\end{equation}

\emph{Step 3: the sign of $H'$ on $(0,\pi)$.} We show $q>0$ on $(0,\pi]$. First, $B''<0$ on $(0,\frac\pi2]$, both terms in~\eqref{eq:Bp} being nonpositive and not both zero. On $(\frac\pi2,\pi)$ write $\varphi=\pi-\varphi'$ with $\varphi'\in(0,\frac\pi2)$; then
\[B''(\varphi)=\cos\varphi'\Big[4b(\pi-\varphi')-(10b-2\kappa')\tan\varphi'\Big],\]
and the bracket is strictly decreasing in $\varphi'$, from $4b\pi>0$ to $-\infty$. Hence there is $\varphi^{**}\in(\frac\pi2,\pi)$ with $B''<0$ on $(0,\varphi^{**})$ and $B''>0$ on $(\varphi^{**},\pi)$: $B'$ is strictly concave on $[0,\varphi^{**}]$ and strictly convex on $[\varphi^{**},\pi]$. By~\eqref{eq:values}, $B'(0)>0>B'(\frac\pi2)$, so the concave function $B'$ has exactly one zero $\varphi_a\in(0,\frac\pi2)$ on $[0,\varphi^{**}]$ and is negative on $[\frac\pi2,\varphi^{**}]$ (a concave function that is positive at $0$ and negative at $\frac\pi2$ is negative beyond $\frac\pi2$); and since $B'(\pi)>0$, the convex function $B'$ has exactly one zero $\varphi_c$ on $[\varphi^{**},\pi]$. Thus $B$ increases on $[0,\varphi_a]$, decreases on $[\varphi_a,\varphi_c]$ and increases on $[\varphi_c,\pi]$. As $B(0)=0$ and $B(\pi)<0$, we get $B(\varphi_c)<B(\pi)<0$, and $B$ has exactly one zero $\varphi_b\in(\varphi_a,\varphi_c)$ in $(0,\pi]$, with $B>0$ on $(0,\varphi_b)$ and $B<0$ on $(\varphi_b,\pi]$. By~\eqref{eq:qp}, $q$ increases on $[0,\varphi_b]$ and decreases on $[\varphi_b,\pi]$; since $q(0)=0$ and $q(\pi)>0$, $q>0$ on $(0,\pi]$. By~\eqref{eq:q}, $\psi$ is strictly increasing on $(0,\pi)$, and $\psi(\frac\pi2)=0$; as $\sin\varphi>0$ there, $H'=\psi\sin\varphi$ is negative on $(0,\frac\pi2)$ and positive on $(\frac\pi2,\pi)$.

\emph{Step 4: the sign of $H'$ on $(\pi,2\pi)$.} Here $\sin\varphi\le0$, so by~\eqref{eq:Bp} and~\eqref{eq:signs}, $B'\ge8-(6b-2\kappa')>0$: $B$ is strictly increasing on $[\pi,2\pi]$, from $B(\pi)<0$ to $B(2\pi)>0$, with a unique zero $\varphi_B$. By~\eqref{eq:qp}, $q'=B\sin\varphi$ is positive on $(\pi,\varphi_B)$ and negative on $(\varphi_B,2\pi)$. Since $q(\pi)>0$, $q>0$ on $[\pi,\varphi_B]$; since $q(\frac{3\pi}2)<0$, we have $\varphi_B<\frac{3\pi}2$, and $q$ has exactly one zero $\varphi_q\in(\varphi_B,\frac{3\pi}2)$ in $(\pi,2\pi)$, with $q>0$ before it and $q<0$ after it. By~\eqref{eq:q}, $\psi$ increases on $(\pi,\varphi_q]$ and decreases on $[\varphi_q,2\pi)$. At the ends,
\[\psi(\varphi)=\frac{\varphi(8+\kappa'\cos\varphi)+(b\varphi^2-6\pi)\sin\varphi}{\sin\varphi}\ \longrightarrow\ -\infty\qquad(\varphi\to\pi^+\text{ and }\varphi\to2\pi^-),\]
because the numerator tends to $\pi(8-\kappa')>0$, respectively $2\pi(8+\kappa')>0$, while $\sin\varphi\to0^-$. Since $\psi(\frac{3\pi}2)=0$ and $\varphi_q<\frac{3\pi}2$, $\psi$ is decreasing on $[\frac{3\pi}2,2\pi)$, so $\psi(\varphi_q)>0$ and $\psi<0$ on $(\frac{3\pi}2,2\pi)$; and $\psi$ increases on $(\pi,\varphi_q]$ from $-\infty$ to $\psi(\varphi_q)>0$, so it has exactly one zero $\varphi_1\in(\pi,\varphi_q)$ there, with $\psi<0$ on $(\pi,\varphi_1)$ and $\psi>0$ on $(\varphi_1,\frac{3\pi}2)$. Multiplying by $\sin\varphi<0$: $H'>0$ on $(\pi,\varphi_1)$, $H'<0$ on $(\varphi_1,\frac{3\pi}2)$, $H'>0$ on $(\frac{3\pi}2,2\pi)$.

\emph{Step 5: conclusion.} By Steps 3 and 4, $H$ decreases on $[0,\frac\pi2]$ to $H(\frac\pi2)=0$, increases on $[\frac\pi2,\varphi_1]$, decreases on $[\varphi_1,\frac{3\pi}2]$ to $H(\frac{3\pi}2)=0$, and increases on $[\frac{3\pi}2,2\pi]$. Hence $H\ge0$ on $[0,2\pi]$, with equality exactly at $\frac\pi2$ and $\frac{3\pi}2$, i.e. $h\ge0$ on $[0,1]$ with equality exactly at $\frac14,\frac34$; with Step~1 and evenness, $h\ge0$ on $\R$ with equality exactly at $\pm\frac14,\pm\frac34$. The zeros are double because $H'(\frac\pi2)=H'(\frac{3\pi}2)=0$ while $h''(\frac14)=6-\frac{\pi^2}2>0$ and $h''(\frac34)=\frac{3\pi^2}2-10>0$. \qed

\medskip
Numerically, $\varphi_1=3.3903$ (i.e. $\theta=0.5396$, where $h=0.0241$), and $h(0)=\frac5{8\pi}-\frac3{16}=0.0114$, $h(\frac12)=\frac1{16}-\frac5{8\pi}+\frac1{2\pi}=0.0227$.

\subsection{Proof of Theorem~\ref{thm:close}}

\begin{proof}[Proof of Theorem~\ref{thm:close}, the values]
\emph{Lower bounds.} By Theorem~\ref{thm:phase-duality}(i) with $\tau=0$, $\eta'=0$: Lemma~\ref{lem:env01} gives the envelopes $(\frac1{16},\frac1{4\pi},\emptyset)$ of order $0$ and $(\frac18,\frac18,\frac1\pi)$ of order $1$, and Lemma~\ref{lem:env2} gives the envelope $(\frac3{16},\frac5{8\pi},(\frac12,-\frac2\pi))$ of order $2$, which is also an envelope of every order $k\ge2$ (with $b_m=0$ for $m\ge3$). Hence $v_0\ge\frac1{16}$, $v_1\ge\frac18$, $v_k\ge\frac3{16}$ for $k\ge2$.

\emph{Upper bounds for $k\le2$.} The distribution $\frac12\delta_{1/4}+\frac12\delta_{-1/4}$ has $\E\cos2\pi\theta=0$ and $\E\theta^2=\frac1{16}$. The distribution $\frac12\delta_0+\frac12\delta_{1/2}$ has $\E\cos2\pi\theta=\frac12(1-1)=0$, $\E J_1=\frac12\cdot\frac12\sin\pi=0$, and $\E\theta^2=\frac18$. The distribution $\frac34\delta_{1/4}+\frac14\delta_{3/4}$ has $\cos2\pi\theta=0$ on its support, hence $\E\cos2\pi\theta=0$ and $\E J_2=0$; $\E J_1=\frac34\cdot\frac14\cdot\sin\frac\pi2+\frac14\cdot\frac34\cdot\sin\frac{3\pi}2=\frac3{16}-\frac3{16}=0$; and $\E\theta^2=\frac34\cdot\frac1{16}+\frac14\cdot\frac9{16}=\frac3{16}$. This proves $v_0=\frac1{16}$, $v_1=\frac18$, $v_2=\frac3{16}$ with the stated minimizers.

\emph{$k\ge3$: the infimum is $\frac3{16}$ and is not attained.} The upper bound $v_k\le\frac3{16}$ for every $k$ follows from Lemma~\ref{lem:faratoms} below, which also gives the finitely supported rational distributions of the statement. Suppose $\theta$ were feasible for $k\ge3$ with $\E\theta^2=\frac3{16}$. Taking expectations in~\eqref{eq:h} and using the constraints, $0=\E\theta^2-\frac3{16}=\E h(\theta)-\frac5{8\pi}\E\cos2\pi\theta\ge\E h(\theta)$, so $\E h(\theta)\le0$, whence $h(\theta)=0$ almost surely and $\theta\in\{\pm\frac14,\pm\frac34\}$ by Lemma~\ref{lem:env2}. Let $p=\PP(\theta=\pm\frac14)$ and $q=1-p$. Then $J_1(\pm\frac14)=\frac14$, $J_1(\pm\frac34)=-\frac34$, $J_3(\pm\frac14)=\frac1{64}$, $J_3(\pm\frac34)=-\frac{27}{64}$, so $\E J_1=0$ and $\E J_3=0$ give $p=3q$ and $p=27q$, impossible for a probability distribution. So the infimum is not attained for $k\ge3$.
\end{proof}

\begin{lemma}[Far atoms]\label{lem:faratoms}
For every $m\ge1$ and $\epsilon>0$ there is a probability distribution $P$ with finite support in the set $\frac14\Z\setminus\frac12\Z$ of odd multiples of $\frac14$, with rational masses, such that $\E_PJ_i=0$ for every odd $i\le2m+1$ and $\E_P\theta^2\le\frac3{16}+\epsilon$. On this support $\cos2\pi\theta=0$, so $\E_P\cos2\pi\theta=0$ and $\E_PJ_{2i}=0$ for every $i\ge1$; hence $P$ satisfies all constraints of the phase program~\eqref{eq:vk-intro} for every $k\le2m+1$.
\end{lemma}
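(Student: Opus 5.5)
The plan is to build $P$ as a small perturbation of the Hessian minimizer $\frac34\delta_{1/4}+\frac14\delta_{3/4}$ of Theorem~\ref{thm:close}. That minimizer already satisfies $\E J_1=0$ but violates $\E J_i=0$ for odd $i\ge3$. We repair the higher odd constraints with tiny masses placed at far atoms, and then solve the whole system exactly.

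\emph{Reduction to signed moments.} For $\theta=x_j:=(2j+1)/4$ with $j\ge0$ we have $\sin2\pi x_j=(-1)^j$, hence $J_i(x_j)=(-1)^jx_j^i$ for odd $i$. It therefore suffices to use positive atoms, and the constraints become the $m+2$ linear equations
\[\sum_jp_j=1,\qquad\sum_j(-1)^jp_jx_j^i=0\qquad(i=1,3,\dots,2m+1)\]
in the masses $p_j\ge0$. All coefficients are rational, so any solution of a square nonsingular subsystem is rational. Atoms with $j$ even contribute positively to these moments and atoms with $j$ odd contribute negatively; both kinds occur arbitrarily far out.

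\emph{Ansatz.} I would place masses $p_0\approx\frac34$ at $\frac14$ and $p_1\approx\frac14$ at $\frac34$. The remaining unknowns are masses $q_k=L^{-3}r_k$ at far atoms $y_k$, for $k=1,\dots,m$. Here $L$ is a large parameter, $y_k$ is an admissible point of prescribed parity (sign $s_k$) within distance $1$ of $La_k$, and $0<a_1<\dots<a_m$ are fixed.

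After dividing the $i$-th equation by $L^{i-3}$ for $i\ge3$, the system in $(p_0,p_1,r_1,\dots,r_m)$ has a limit as $L\to\infty$. In the limit, the normalization and $i=1$ equations involve only $p_0,p_1$, since the far contributions are $O(L^{-3})$ and $O(L^{-2})$; they give $p_0=\frac34$, $p_1=\frac14$. The $i=3$ equation becomes $\sum_ks_kr_ka_k^3=-c_3$ with $c_3=\frac34(\frac14)^3-\frac14(\frac34)^3=-\frac{3}{32}\ne0$. The equations for $i\ge5$ become $\sum_ks_kr_ka_k^i=0$.

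This limit system is block triangular. Its lower block is the matrix $(a_k^i)$, which is $\diag(a_k^3)$ times a Vandermonde matrix in the distinct $a_k^2$, and is therefore nonsingular. Consequently the solution depends continuously on $L$ near $L=\infty$, since the rounding errors $y_k-La_k=O(1)$ enter only at relative order $1/L$.

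\emph{Signs and positivity.} This is the step I expect to be the main obstacle. The limiting $s_kr_k$ form a nonzero multiple of the first column of the inverse Vandermonde-type matrix. By Cramer's rule each entry is a ratio of nonzero generalized Vandermonde minors (all $a_k>0$), so every limiting $s_kr_k$ is nonzero. I would therefore first compute these limiting signs and then choose each parity $s_k$ to match them, so that every $r_k>0$. Since the parity choice only flips the sign of a column, the argument is not circular.

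For $L$ large the exact solution then has $p_0,p_1>0$ and $r_k>0$, hence is a probability distribution with rational masses. Its second moment is $p_0/16+9p_1/16+\sum_kL^{-3}r_ky_k^2=\frac3{16}+o(1)+O(L^{-1})$, which is at most $\frac3{16}+\epsilon$ for $L$ large.

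\emph{Remaining claims.} On odd multiples of $\frac14$ we have $\cos2\pi\theta=0$, which gives $\E\cos2\pi\theta=0$ and $\E J_{2i}=0$ for every $i\ge1$. Together with the odd constraints this establishes all constraints of~\eqref{eq:vk-intro} for $k\le2m+1$.
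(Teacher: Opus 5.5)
Your proposal is correct, and it takes a genuinely different route from the paper's proof.

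\textbf{The paper's route.} The paper also perturbs $\frac34\delta_{1/4}+\frac14\delta_{3/4}$ by small masses at far atoms. It places them at hierarchically separated scales $s_j=T^{3^{j-1}}+c_j$, with parities fixed in advance to alternate. After eliminating $p,q$, it applies a delicate diagonal rescaling ($d_j=Y_jz_j/s_j^{i_j}$ with $Y_j=\prod_{l<j}s_l^2$). The limiting matrix is then lower bidiagonal with entries $\pm1$. Its solution $z^*=\beta_3(1,\dots,1)$ can be read off, so positivity is immediate.

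\textbf{Your route.} You put all far atoms at one scale $L$, with fixed ratios $a_k$ and masses $L^{-3}r_k$. The limit system is block lower triangular: a $2\times2$ block for $(p_0,p_1)$ of determinant $-1$, and a generalized Vandermonde block. Each limiting $s_kr_k$ is nonzero by Cramer's rule, since the relevant minors are again generalized Vandermonde determinants with distinct positive nodes. In fact Lagrange interpolation at the nodes $a_k^2$ gives the limit explicitly:
$s_kr_ka_k^3=\frac3{32}\prod_{j\ne k}\frac{a_j^2}{a_j^2-a_k^2}$.
Its sign is $(-1)^{k-1}$, so the parities you are led to alternate exactly as the paper's $\varepsilon_j$ do.

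\textbf{What each buys.} Your version replaces the paper's $Y_j$ bookkeeping with standard Vandermonde facts. It is also much more economical in the size of the atoms. Your largest atom is of order $L=O_m(1/\epsilon)$. The paper's is of order $T^{3^{m-1}}$, where $T$ must be at least of order $1/\epsilon$ because $d_1s_1^2\approx\beta_3/T$. This would sharpen the remark after Theorem~\ref{thm:exact} about the norm bound $\norm w\le n^2$. The paper's version, in exchange, exhibits the limiting solution in closed form without any determinant computation.

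\textbf{A wording point.} Since $y_k$ is a rounding of $La_k$ to $\frac14+\Z$ or $\frac34+\Z$, the solution is not literally continuous in $L$. What you actually use, and have, is that the rescaled coefficient matrix converges to a nonsingular limit as $L\to\infty$. That suffices by continuity of matrix inversion at the limit.
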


\begin{proof}
On the odd multiples of $\frac14$, $\sin2\pi\theta=+1$ at $\theta\in\frac14+\Z$ and $-1$ at $\theta\in\frac34+\Z$. Let $T\ge2$ be an integer, to be chosen large, and define sites and signs
\[s_j:=T^{a_j}+c_j,\qquad a_j:=3^{\,j-1},\qquad c_j:=\begin{cases}\frac14,&j\text{ odd},\\ \frac34,&j\text{ even},\end{cases}\qquad\varepsilon_j:=\sin2\pi s_j=(-1)^{j+1},\]
for $j=1,\dots,m$.
Note $s_j\ge T^{a_j}$ and $s_j\le2T^{a_j}$, so that $s_j/s_i\le2T^{-2a_j}$ for $i>j$ (as $a_i\ge3a_j$). We seek
\[P=p\,\delta_{1/4}+q\,\delta_{3/4}+\sum_{j=1}^md_j\,\delta_{s_j}\]
with $p,q,d_j>0$. The conditions $P(\R)=1$ and $\E_PJ_i=0$ for odd $i\le2m+1$ read
\begin{equation}\label{eq:system}
p+q+\sum_jd_j=1,\qquad\frac p4-\frac{3q}4+\sum_j\varepsilon_jd_js_j=0,\qquad\frac p{4^i}-q\Big(\frac34\Big)^i+\sum_j\varepsilon_jd_js_j^{\,i}=0\quad(i=3,5,\dots,2m+1),
\end{equation}
a linear system of $m+2$ equations in the $m+2$ unknowns $(p,q,d_1,\dots,d_m)$ with rational coefficients, whose solution, if unique, is rational. Put $D:=\sum_jd_j$ and $G:=\sum_j\varepsilon_jd_js_j$. The first two equations give
\[q=\frac{1-D}4+G,\qquad p=\frac{3(1-D)}4-G,\]
and substituting into the remaining ones, with $\beta_i:=(3^i-3)/4^{i+1}>0$ and $\gamma_i:=(3^i+1)/4^i$, they become
\begin{equation}\label{eq:reduced}
\sum_{j=1}^md_j\big(\varepsilon_js_j^{\,i}+\beta_i-\gamma_i\varepsilon_js_j\big)=\beta_i\qquad(i=3,5,\dots,2m+1).
\end{equation}
Index the equations by $l=1,\dots,m$ with $i_l=2l+1$, and set $Y_1:=1$, $Y_j:=\prod_{l<j}s_l^2$ for $j\ge2$. Substitute $d_j=Y_jz_j/s_j^{\,i_j}$ and divide the $l$-th equation by $Y_l$; the system becomes $\tilde A z=\tilde\beta$ with
\[\tilde A_{lj}=\frac{Y_j}{Y_l\,s_j^{\,i_j}}\Big(\varepsilon_js_j^{\,i_l}+\beta_{i_l}-\gamma_{i_l}\varepsilon_js_j\Big),\qquad\tilde\beta_l=\frac{\beta_{i_l}}{Y_l}.\]
We claim that, as $T\to\infty$, $\tilde A$ converges to the bidiagonal matrix $B_\infty$ with $(B_\infty)_{ll}=\varepsilon_l$, $(B_\infty)_{l,l-1}=\varepsilon_{l-1}$ and all other entries $0$, and $\tilde\beta\to(\beta_3,0,\dots,0)$. The main term of $\tilde A_{lj}$ is $\varepsilon_js_j^{\,i_l-i_j}Y_j/Y_l=\varepsilon_js_j^{2(l-j)}Y_j/Y_l$. For $j=l$ it is $\varepsilon_l$; for $j=l-1$ it is $\varepsilon_{l-1}s_{l-1}^2Y_{l-1}/Y_l=\varepsilon_{l-1}$; for $j<l-1$ its absolute value is
\[\frac{s_j^{2(l-j)}}{\prod_{j\le i<l}s_i^2}=\prod_{j<i<l}\Big(\frac{s_j}{s_i}\Big)^2\ \le\ \big(2T^{-2a_j}\big)^{2(l-j-1)}\to0;\]
and for $j>l$ it is $\prod_{l\le i<j}(s_i/s_j)^2\to0$. Since $\beta_{i_l}+\gamma_{i_l}s_j\le2s_j$, the remaining part of $\tilde A_{lj}$ is at most $2Y_j/(s_j^{\,i_j-1}Y_l)$, which tends to $0$. For $j<l$, use $Y_j\le Y_l$ and $s_j^{\,i_j-1}\ge T^2$. For $j\ge l$, use $Y_j/Y_l\le s_j^{2(j-l)}$ and $s_j^{\,i_j-1}=s_j^{2j}\ge s_j^{2(j-l)}T^2$. Finally $\tilde\beta_1=\beta_3$ and $\tilde\beta_l=\beta_{i_l}/Y_l\le1/T^2$ for $l\ge2$. The matrix $B_\infty$ is lower triangular with diagonal entries $\pm1$, hence invertible. Solving $B_\infty z^*=(\beta_3,0,\dots,0)^\top$ gives $z^*_1=\beta_3$ and $z^*_l=-\varepsilon_l\varepsilon_{l-1}z^*_{l-1}=z^*_{l-1}$, since $\varepsilon_l\varepsilon_{l-1}=-1$. Thus $z^*=\beta_3(1,\dots,1)$, with $\beta_3=3/32$. By continuity of matrix inversion, for all large $T$ the system~\eqref{eq:reduced} has a unique solution with $z_j\in[\frac12\beta_3,2\beta_3]$ for every $j$, hence $d_j>0$. Moreover
\[d_js_j^2=\frac{z_jY_j}{s_j^{\,2j-1}}\le\frac{2\beta_3\cdot4^{\,j-1}T^{2(a_1+\dots+a_{j-1})}}{T^{(2j-1)a_j}}=\frac{2\beta_3\,4^{\,j-1}}{T^{(2j-2)a_j+1}}\ \le\ \frac{2\beta_3\,4^{\,j-1}}T,\]
using $2(a_1+\dots+a_{j-1})=3^{j-1}-1=a_j-1$. Hence $D\le\sum_jd_js_j^2\to0$, $|G|\le\sum_jd_js_j^2\to0$, so $p\to\frac34$ and $q\to\frac14$ are positive for large $T$, and
\[\E_P\theta^2=\frac p{16}+\frac{9q}{16}+\sum_jd_js_j^2\ \longrightarrow\ \frac3{64}+\frac9{64}=\frac3{16}.\]
Choosing $T$ large enough gives $\E_P\theta^2\le\frac3{16}+\epsilon$. The solution is rational because the coefficients of~\eqref{eq:system} are.
\end{proof}

\subsection{Soundness of the Hessian-augmented verifier}

\begin{proposition}[Soundness]\label{prop:sound}
Let $k\in\{0,1,2\}$. If $W$ passes tests \textup{(a)}, \textup{(b)} and, for $k\ge1$, \textup{(d)}, and for $k=2$ also \textup{(e)} of $\mathcal V^{(k)}_{\tau,\sigma,\eta}$, and $\dist(t',\Lat')\le r_k$ with $r_k$ as in~\eqref{eq:rk}, then $f_W(t')\ge\tau$, so test \textup{(c)} fails. In particular $\mathcal V^{(k)}$ rejects every YES instance.
\end{proposition}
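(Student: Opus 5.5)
The plan is to argue by contradiction through the phase reduction and weak duality with the explicit envelopes. Suppose $W$ passes (a), (b), and the relevant derivative tests, $\dist(t',\Lat')\le r_k$, and $f_W(t')<\tau$. Write $t'=y+e$ with $y\in\Lat'$ closest, so $\norm e\le r_k$. Let $\theta$ have the empirical distribution of the phases $\theta_i=\langle w_i,e\rangle$. By Lemma~\ref{lem:phase} and Proposition~\ref{prop:sound-general} we then have
\[\E\theta^2\le\sigma^2\norm e^2,\qquad\E\cos2\pi\theta<\tau,\]
and, for $k\ge1$,
\[|\E J_1|=|\langle D_1(P_W),e\rangle|\le\frac{\eta_1\norm e}{2\pi}.\]
For $k=2$ we also have $|\E J_2|=|e^{\top}D_2(P_W)e|\le\eta_2\norm e^2/4\pi^2$. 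All of these follow from $\norm{\nabla f_W}\le\eta_1$, $\norm{\nabla^2f_W}\le\eta_2$ and~\eqref{eq:empirical}.

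Next, integrate the envelope of order $k$ against this distribution.

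For $k=0$, \eqref{eq:env0} gives
\[\sigma^2\norm e^2\ge\E\theta^2\ge\frac1{16}-\frac1{4\pi}\E\cos2\pi\theta>\frac1{16}-\frac\tau{4\pi}=\sigma^2r_0^2,\]
so $\norm e>r_0$, a contradiction.

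For $k=1$, \eqref{eq:env1} gives
\[\sigma^2\norm e^2>\frac18-\frac\tau8-\frac1\pi\cdot\frac{\eta_1\norm e}{2\pi}.\]
That is, $\sigma^2\norm e^2+\frac{\eta_1}{2\pi^2}\norm e>\frac{1-\tau}8$. Since the left side is strictly increasing in $\norm e\ge0$, this forces $\norm e>r_1$.

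For $k=2$, \eqref{eq:env2} gives
\[\sigma^2\norm e^2>\frac3{16}-\frac{5\tau}{8\pi}-\frac12\cdot\frac{\eta_1\norm e}{2\pi}-\frac2\pi\cdot\frac{\eta_2\norm e^2}{4\pi^2}.\]
Rearranged, this is $(\sigma^2+\frac{\eta_2}{2\pi^3})\norm e^2+\frac{\eta_1}{4\pi}\norm e>\frac3{16}-\frac{5\tau}{8\pi}$, so again $\norm e>r_2$ by monotonicity.

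Equivalently, each case is weak duality (Theorem~\ref{thm:phase-duality}(i), or Remark~\ref{rem:dual-reading}) for the rank-one dual solution.

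The only points needing care are bookkeeping ones. First, the strict inequality from (c) must survive: the coefficient $a$ of $\cos2\pi\theta$ is positive in all three envelopes. Second, the signs of $b_1,b_2$ must be handled: we use the absolute values $|b_m|$ together with $|\E J_m|$. Third, the definitions of $r_k$ in~\eqref{eq:rk} must match the coefficients exactly; I expect that matching to be the main (if mild) obstacle. One should also note that $r_k$ is well defined and positive because $\tau\le\frac12$ keeps the right-hand sides positive, since $\frac3{16}>\frac5{16\pi}$ and $\frac1{16}>\frac1{8\pi}$.

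The final claim is then immediate. A YES instance rescales to $\dist(t',\Lat')\le r_k$, so any certificate passing (a), (b), (d), (e) fails (c), and the verifier rejects.
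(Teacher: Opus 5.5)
Your proposal is correct and follows the paper's proof essentially line for line. Like the paper, you assume $f_W(t')<\tau$, pass to the empirical phase distribution as in Proposition~\ref{prop:sound-general}, integrate the envelope of Lemma~\ref{lem:env01} or~\ref{lem:env2} using $a>0$ and $|b_m|\,|\E J_m|\le|b_m|\eta_m\norm e^m/(2\pi)^m$, and conclude $\norm e>r_k$ from the monotonicity of the left-hand sides in~\eqref{eq:rk}; your coefficient matching for $r_1$ and $r_2$ agrees exactly with the paper's.
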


\begin{proof}
Suppose $f_W(t')<\tau$. By the proof of Proposition~\ref{prop:sound-general}, the empirical phase distribution has $\E\theta^2\le\sigma^2\norm e^2$, $\E\cos2\pi\theta<\tau$ and $|\E J_m|\le\eta_m\norm e^m/(2\pi)^m$. Integrating the envelope $(v,a,b)$ of Lemma~\ref{lem:env01} or~\ref{lem:env2} against it, as in Theorem~\ref{thm:phase-duality}(i), and using $a>0$, gives $\sigma^2\norm e^2>v-a\tau-\sum_m|b_m|\eta_m\norm e^m/(2\pi)^m$, that is,
\[\sigma^2\norm e^2>\frac1{16}-\frac\tau{4\pi},\qquad\sigma^2\norm e^2+\frac{\eta_1}{2\pi^2}\norm e>\frac{1-\tau}8,\qquad\Big(\sigma^2+\frac{\eta_2}{2\pi^3}\Big)\norm e^2+\frac{\eta_1}{4\pi}\norm e>\frac3{16}-\frac{5\tau}{8\pi}\]
for $k=0,1,2$. In each case the left side is increasing in $\norm e$ and equals the right side at $\norm e=r_k$ by~\eqref{eq:rk}, so $\dist(t',\Lat')=\norm e>r_k$, contrary to the hypothesis.
\end{proof}

\subsection{Realization by lattice certificates}

\begin{lemma}[Realization]\label{lem:realize}
Let $m\ge2$, let $P$ be a probability distribution with finite support in $\frac1m\Z$ and rational masses $p_x=a_x/D$ ($a_x,D$ positive integers), and let $n\ge1$, $\sigma>0$ and $r>0$ with $\E_P\theta^2\le\sigma^2r^2$; for $n\ge2$ let $0<s\le\sigma\sqrt{n-1}$. Then there are a lattice $\Lat\subseteq\R^n$, a target $t$ with $\dist(t,\Lat)=r$, and a list $W$ of $N=2(n-1)D$ vectors of $\Lat^*$ (of $N=D$ vectors if $n=1$) such that
\begin{enumerate}[label=\textup{(\roman*)}]
\item $\frac1N\sum_iw_iw_i^{\top}\preceq\sigma^2I$;
\item $f_W(jt)=\E_P\cos2\pi j\theta$ for every integer $j$;
\item for every $M\ge1$, the symmetric tensor $\nabla^Mf_W(t)$ vanishes provided $\E_PJ_a=0$ for all $1\le a\le M$ with $a\equiv M\pmod2$ and, when $M$ is even, $\E_P\cos2\pi\theta=0$;
\item \textup{(scale invariance)} $\Lat$ has the basis $B=\diag(mr,1/s,\dots,1/s)$, $t=re_1$, and the integer coefficient vectors $z=B^{\top}w$ of the vectors $w\in W$ do not depend on $r$ and $s$. Consequently, for $\lambda>0$ the lattice $\lambda\Lat$, the target $\lambda t$ and the list $\lambda^{-1}W$ are those of the construction with parameters $\lambda r$ and $s/\lambda$, with the same integer list.
\end{enumerate}
In particular, if $\E_P\cos2\pi\theta=0$ and $\E_PJ_a=0$ for $1\le a\le k$, then $f_W(t)=0$ and $\nabla^Mf_W(t)=0$ for $1\le M\le k$.
\end{lemma}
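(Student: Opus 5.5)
The construction is explicit and follows the shape fixed by (iv). Take $B=\diag(mr,1/s,\dots,1/s)$, so $\Lat=mr\Z\times s^{-1}\Z^{n-1}$ and $\Lat^*=(mr)^{-1}\Z\times s\Z^{n-1}$, and take $t=re_1$. Since $0<r<mr/2$ for $m\ge2$... in fact $\dist(t,\Lat)=\min(r,(m-1)r)=r$ in the first coordinate. The nearest lattice point is $0$, so the offset is $e=t$, and for $w=((mr)^{-1}z_1,sz')$ the phase is $\langle w,e\rangle=z_1/m$. Each support point $x\in\frac1m\Z$ of $P$ is therefore realized by the first coordinate $z_1=mx$. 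For $n=1$, the list consisting of $a_x$ copies of $w=x/r$ has empirical phase distribution exactly $P$. For $n\ge2$, we also need to control the other directions. For each $x$ and each $j\in\{2,\dots,n\}$, and each sign $\pm$, we take $a_x$ copies of $w=(x/r,\pm s\,e_j)$, that is, $z=(mx,\pm e_j)$. This gives $N=2(n-1)D$ vectors, and the integer vectors do not involve $r$ or $s$, which is (iv). The scaling statement in (iv) is then immediate from $\lambda B$ having entries $m\lambda r$ and $1/(s/\lambda)$.

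Next I verify (i). By the symmetric choice of signs, the cross terms $\E[w_1w_j]$ cancel, as do the terms $\E[w_jw_{j'}]$ for $j\ne j'$. The matrix $\frac1N\sum w_iw_i^\top$ is therefore diagonal, with first entry $\E_P\theta^2/r^2\le\sigma^2$ and remaining entries $s^2/(n-1)\le\sigma^2$.

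For (ii), observe that $\langle w,jt\rangle=jz_1/m$, so $f_W(jt)=\E_P\cos2\pi j\theta$.

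For (iii), expand $f_W$ near $t$. Writing $\langle w,t+h\rangle=\theta+\langle w,h\rangle$, the $M$-th derivative tensor is $(2\pi)^M\E[\cos(2\pi\theta+M\pi/2)\,w^{\otimes M}]$ over the list. Pairing the $\pm e_j$ copies kills every monomial that has an odd power of some $s$-coordinate. Since each $w$ has at most one nonzero coordinate besides the first, the only surviving components are $w_1^M$, and $w_1^{M-2b}w_j^{2b}$ with $1\le 2b\le M$. A component of the second kind equals $s^{2b}r^{-(M-2b)}$ times $\frac{1}{n-1}\E_P[\theta^{M-2b}\cos(2\pi\theta+M\pi/2)]$, and the expectation is $\pm\E_PJ_{M-2b}$. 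The exponent $M-2b$ has the parity of $M$, and when $M-2b=0$ it reads $\pm\E_P\cos 2\pi\theta$ (or $\E\sin$, which cannot occur at even index). The hypotheses of (iii) thus make every component vanish. The final sentence of the statement combines (ii) at $j=1$ with (iii) for $M\le k$.

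The main obstacle is the bookkeeping in (iii): keeping track of which mixed components survive the sign symmetrization, and checking that the index $a=M-2b$ ranges exactly over the values $1\le a\le M$ with $a\equiv M\pmod 2$, with $a=0$ occurring only for even $M$. I will handle this by expanding $(w_1h_1+sh_j)^M$ binomially.
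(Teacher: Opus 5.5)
Your proposal is correct and matches the paper's proof essentially step for step. The lattice $mr\Z e_1\oplus s^{-1}\Z^{n-1}$, the sign-symmetrized list $\frac xr e_1\pm s e_j$ with multiplicities $a_x$, and the component-wise analysis of $\nabla^M f_W(t)$ (the $\pm$ pairing kills odd powers of the $e_j$-coordinate, and each surviving component is a multiple of $\E_P J_a$, or of $\E_P\cos2\pi\theta$ when $a=0$) are all as in the paper. The one slip is the aside ``$r<mr/2$'', which fails for $m=2$; your corrected statement $\dist(t,\Lat)=\min(r,(m-1)r)=r$, with the other coordinates only adding nonnegative squares, is what is needed.
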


\begin{proof}
Let
\[\Lat^*:=\frac1{mr}\Z e_1\oplus s\Z^{n-1},\qquad\Lat=mr\Z e_1\oplus\frac1s\Z^{n-1},\qquad t:=re_1\]
(for $n=1$ the second summands are absent). A point $(mrj,u)$ of $\Lat$ has distance $\big((mj-1)^2r^2+\norm u^2\big)^{1/2}\ge r$ from $t$, with equality at $0$, so $\dist(t,\Lat)=r$; the target has order $m$ in $\R^n/\Lat$. For $n\ge2$ let $W$ consist of the vectors
\[\frac xr\,e_1+\varepsilon s\,e_j,\qquad x\in\operatorname{supp}P,\ j\in\{2,\dots,n\},\ \varepsilon\in\{\pm1\},\]
each taken with multiplicity $a_x$, so $N=2(n-1)D$; for $n=1$ let $W$ consist of the numbers $x/r$ with multiplicity $a_x$, $N=D$. Since $x\in\frac1m\Z$, every vector lies in $\Lat^*$. The $\varepsilon$-symmetry kills the cross terms $e_1e_j^{\top}$, each $e_j$ occurs in $2D$ of the vectors with coefficient $\pm s$, and the $e_1$-coefficients are the $x/r$ with frequencies $p_x$; so
\[\frac1N\sum_iw_iw_i^{\top}=\frac{\E_P\theta^2}{r^2}\,e_1e_1^{\top}+\frac{s^2}{n-1}\sum_{j\ge2}e_je_j^{\top}\ \preceq\ \sigma^2I,\]
which is (i). Every vector has $\langle w,t\rangle=x$, so $\langle w,jt\rangle=jx$ and (ii) follows. For (iii), $\nabla^Mf_W(t)=(2\pi)^M\frac1N\sum_i\cos(2\pi x_i+M\pi/2)\,w_i^{\otimes M}$, where $x_i$ is the $x$ of $w_i$. A vector $\frac xre_1+\varepsilon se_j$ has nonzero coordinates only along $e_1$ and $e_j$, so the only nonzero components of $w_i^{\otimes M}$ are those indexed by $a$ copies of $e_1$ and $b=M-a$ copies of a single $e_j$, with value $(x/r)^a(\varepsilon s)^b$. Summing over $\varepsilon$ kills the components with $b$ odd, and for $b$ even the component of $\nabla^Mf_W(t)$ equals
\[(2\pi)^M\,\frac{2s^b}{2(n-1)\,r^a}\sum_xp_x\,x^a\cos\big(2\pi x+\tfrac{M\pi}2\big)=\pm\frac{(2\pi)^Ms^b}{(n-1)r^a}\,\E_P\big[\theta^a\cos(2\pi\theta+\tfrac{a\pi}2)\big],\]
because $a\equiv M\pmod2$ makes $\cos(2\pi\theta+M\pi/2)=\pm\cos(2\pi\theta+a\pi/2)$; for $b=0$ the sum over $j$ gives the same expression with $(n-1)r^a$ replaced by $r^a$. By~\eqref{eq:Jm} the expectation is $\pm\E_PJ_a$ for $a\ge1$ and $\pm\E_P\cos2\pi\theta$ for $a=0$ (which occurs only for $M$ even). This proves (iii); the case $n=1$ is the same with $b=0$ only. For (iv), $B^{\top}(\frac xre_1+\varepsilon se_j)=mx\,e_1+\varepsilon e_j$, an integer vector since $x\in\frac1m\Z$ (and $z=mx$ if $n=1$); multiplying $\Lat$ by $\lambda$ replaces $mr$ by $m\lambda r$ and $1/s$ by $1/(s/\lambda)$.
\end{proof}

\begin{theorem}[Exactness of the close radii]\label{thm:exact}
Let $\sigma>0$, $k\ge0$, $n\ge1$ and $r>\sigma^{-1}\sqrt{v_k}$ (with $v_0=\frac1{16}$, $v_1=\frac18$, $v_k=\frac3{16}$ for $k\ge2$; for $k\le2$ also $r=\sigma^{-1}\sqrt{v_k}$ is allowed). There are a lattice $\Lat\subseteq\R^n$, a target $t$ with $\dist(t,\Lat)=r$, and a list $W$ of vectors of $\Lat^*$ with
\[\frac1N\sum_iw_iw_i^{\top}\preceq\sigma^2I,\qquad f_W(t)=0,\qquad\nabla^Mf_W(t)=0\ \ (1\le M\le k).\]
Consequently, let $\mathcal V$ be a verifier that, on a rational input $(B,t,\rho)$ in dimension $n$, rescales the instance by $r_{\mathcal V}/\rho$ for some $r_{\mathcal V}>0$, reads a certificate of integer vectors of any length $N$, and applies tests \textup{(a)}, \textup{(b)} with bound $\sigma^2N$, \textup{(c)} with a threshold $\tau>0$, and any tests on the derivatives of $f_W$ at $t'$ of orders $1,\dots,k$ that are passed when those derivatives vanish. If $\mathcal V$ is sound, then $r_{\mathcal V}\le\sigma^{-1}\sqrt{v_k}$. In particular, with Proposition~\ref{prop:sound}, the close radius of $\mathcal V^{(k)}_{\tau,\sigma,\eta}$ tends to exactly $\sigma^{-1}\sqrt{v_k}$ as $\tau,\eta\to0$, for $k=0,1,2$, and no test on the derivatives of $f_W$ at the target, of any order, can raise it above $\sigma^{-1}\sqrt{3/16}$.
\end{theorem}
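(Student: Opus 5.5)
The plan is to feed a finitely supported rational phase distribution into the Realization Lemma~\ref{lem:realize} with $m=4$. First choose the distribution $P$. For $k\le2$ and $r\ge\sigma^{-1}\sqrt{v_k}$, take the minimizer from Theorem~\ref{thm:close}: $\frac12\delta_{1/4}+\frac12\delta_{-1/4}$, $\frac12\delta_0+\frac12\delta_{1/2}$, or $\frac34\delta_{1/4}+\frac14\delta_{3/4}$. All three have support in $\frac14\Z$, rational masses, and $\E_P\theta^2=v_k\le\sigma^2r^2$. For $k\ge3$ and $r>\sigma^{-1}\sqrt{3/16}$, set $\epsilon=\sigma^2r^2-\frac3{16}>0$ and apply Lemma~\ref{lem:faratoms} with $2m+1\ge k$. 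This gives $P$ supported on odd multiples of $\frac14$ with rational masses, $\E_P\cos2\pi\theta=0$, $\E_PJ_a=0$ for all $a\le k$, and $\E_P\theta^2\le\sigma^2r^2$. For $k=0,1,2$ the needed vanishing of $\E_P\cos2\pi\theta$ and $\E_PJ_a$ ($a\le k$) was verified in the proof of Theorem~\ref{thm:close}.

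Next apply Lemma~\ref{lem:realize} with $m=4$ and any admissible $s$, say $s=\sigma\sqrt{n-1}$ when $n\ge2$. It produces $\Lat$, a target $t$ with $\dist(t,\Lat)=r$, and a list $W$ with three properties. Part (i) gives the spectral bound. Part (ii) with $j=1$ gives $f_W(t)=\E_P\cos2\pi\theta=0$. The final clause, which is part (iii), gives $\nabla^Mf_W(t)=0$ for $1\le M\le k$. This proves the first assertion.

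For the consequence, suppose $\mathcal V$ is sound and $r_{\mathcal V}>\sigma^{-1}\sqrt{v_k}$. Take $r=r_{\mathcal V}$ in the construction and use Lemma~\ref{lem:realize}(iv) to turn it into a rational YES instance. With $m=4$, $B=\diag(4r,1/s,\dots,1/s)$ and $t=re_1$. Pick $s$ rational with $s\le\sigma\sqrt{n-1}$. The integer list $Z$ is independent of $r$ and $s$. Take the rational input $(B_0,t_0,\rho)=(\diag(4,1/s,\dots),e_1,1)$, whose distance is exactly $\rho=1$. Rescaling by $r_{\mathcal V}/\rho$ yields precisely the lattice, target and list of the construction with parameter $r_{\mathcal V}$. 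For $n=1$ the construction is the same without the $s$-coordinates.

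On this YES instance the certificate $Z$ passes (a) and (b). It passes (c) because $f_W(t')=0<\tau$. It passes every derivative test, since those derivatives vanish and the tests are assumed to be passed in that case. This contradicts soundness, so $r_{\mathcal V}\le\sigma^{-1}\sqrt{v_k}$. Together with Proposition~\ref{prop:sound} and the limits of~\eqref{eq:rk}, this gives the exact limiting close radius for $k=0,1,2$. Because $v_k=\frac3{16}$ for all $k\ge2$, no derivative tests of any order raise it further.

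The main obstacle is bookkeeping rather than analysis. We must make sure the rescaled rational instance really reproduces the constructed certificate, using the scale invariance in part (iv) with rational $s$. We must also check that the parity hypotheses of part (iii) are covered by the vanishing moments for every $M\le k$, including the requirement $\E_P\cos2\pi\theta=0$ for even $M$.
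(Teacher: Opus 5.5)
Your route is the paper's: you feed the extremal or far-atom phase distributions into Lemma~\ref{lem:realize}, then use its scale invariance to turn the construction at distance $r_{\mathcal V}$ into a rational YES instance that $\mathcal V$ accepts. Taking $m=4$ also for $\frac12\delta_0+\frac12\delta_{1/2}$ (the paper uses $m=2$) is harmless, since $\{0,\frac12\}\subset\frac14\Z$. However, the consequence step fails as written, in exactly the bookkeeping you flagged.

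By Lemma~\ref{lem:realize}(iv), rescaling your rational input $\diag(4,1/s,\dots,1/s)$ by $r_{\mathcal V}$ gives the basis $\diag(4r_{\mathcal V},r_{\mathcal V}/s,\dots,r_{\mathcal V}/s)$. That is the construction at distance $r_{\mathcal V}$ with parameter $s/r_{\mathcal V}$, not $s$. The integer list $Z$ is indeed that construction's list. But test (b) needs the lemma's hypothesis for the parameter actually produced. Concretely, the vectors are $\frac{x}{r_{\mathcal V}}e_1\pm\frac{s}{r_{\mathcal V}}e_j$, so the $e_j$-diagonal entries of $\frac1N\sum_iw_iw_i^{\top}$ equal $\frac{s^2}{r_{\mathcal V}^2(n-1)}$. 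You therefore need $s\le r_{\mathcal V}\sigma\sqrt{n-1}$.

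Your condition $s\le\sigma\sqrt{n-1}$ gives this only when $r_{\mathcal V}\ge1$. Since $\sigma$ is arbitrary, $r_{\mathcal V}$ just above $\sigma^{-1}\sqrt{v_k}$ can be far below $1$ (take $\sigma$ large). Then a rational $s$ near $\sigma\sqrt{n-1}$ makes the certificate fail test (b), and you get no contradiction with soundness.

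The fix is the paper's choice: take a rational $s$ with $0<s\le r_{\mathcal V}\sigma\sqrt{n-1}$. With that change the rest of your argument goes through unchanged.
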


\begin{proof}
Apply Lemma~\ref{lem:realize} to the distributions of Theorem~\ref{thm:close}: for $k=0$ to $\frac12\delta_{1/4}+\frac12\delta_{-1/4}$ with $m=4$; for $k=1$ to $\frac12\delta_0+\frac12\delta_{1/2}$ with $m=2$; for $k=2$ to $\frac34\delta_{1/4}+\frac14\delta_{3/4}$ with $m=4$; and for $k\ge3$ to the distribution of Lemma~\ref{lem:faratoms} with $2m'+1\ge k$ and $\epsilon=\sigma^2r^2-\frac3{16}>0$, again with $m=4$. In each case $\E_P\theta^2\le\sigma^2r^2$, $\E_P\cos2\pi\theta=0$, and $\E_PJ_a=0$ for $1\le a\le k$, so the lemma gives (i), $f_W(t)=0$ and $\nabla^Mf_W(t)=0$ for $M\le k$.

For the consequence, suppose $r_{\mathcal V}>\sigma^{-1}\sqrt{v_k}$, and let $P$ be the distribution above with $r=r_{\mathcal V}$, so that $\E_P\theta^2\le\sigma^2r_{\mathcal V}^2$. For $n\ge2$ choose a rational $s$ with $0<s\le r_{\mathcal V}\sigma\sqrt{n-1}$. Let $B=\diag(m,1/s,\dots,1/s)$. The rational instance $(B,e_1,1)$ has $\dist(e_1,\Lat(B))=1$ (as in the proof of Lemma~\ref{lem:realize}), so it is a YES instance. The verifier rescales it by $r_{\mathcal V}$, obtaining $r_{\mathcal V}\Lat(B)=\Lat(\diag(mr_{\mathcal V},r_{\mathcal V}/s,\dots))$ and $r_{\mathcal V}e_1$: by Lemma~\ref{lem:realize}(iv) this is the construction of Lemma~\ref{lem:realize} at distance $r_{\mathcal V}$ with parameter $s/r_{\mathcal V}\le\sigma\sqrt{n-1}$, whose hypotheses hold, and its list $W$ is represented by the integer list $Z$ of the lemma, which the prover submits. By the first part, $W$ passes (a), (b) and the derivative tests and has $f_W(t')=0<\tau$, so $\mathcal V$ accepts a YES instance, contradicting soundness.
\end{proof}

The certificate in this proof has $N=2(n-1)D$ vectors. A verifier may prescribe $N$; for $k\le2$ and $n\ge2$ the conclusion persists for every verifier with the tests and the certificate length $N$ of $\mathcal V^{(k)}_{\tau,\sigma,\eta}$, whatever its rescaling $r_{\mathcal V}$, whenever $N\tau>6n$ and, for $k\ge1$, $N\eta_1r_k\ge4\pi n$ (note $r_{\mathcal V}>\sigma^{-1}\sqrt{v_k}>r_k$ in the proof); this holds for all the verifiers of this paper. Indeed, take $N'=\lfloor N/2(n-1)\rfloor$, give the vectors of Lemma~\ref{lem:realize} the multiplicities $\lfloor p_xN'\rfloor$ in place of $a_x$, and fill the list with fewer than $2(n-1)(|\operatorname{supp}P|+1)\le6(n-1)$ zero vectors. The spectral bound and, for $k=2$, the vanishing of the Hessian persist, since $\cos2\pi x=0$ on the support; the zero vectors give $0\le f_W(t)<6(n-1)/N$; and the rounding errors in $\E_PJ_1=0$ give $\norm{\nabla f_W(t)}\le4\pi(n-1)\sum_x|x|/(Nr)\le4\pi(n-1)/(Nr)$, with $\nabla f_W(t)=0$ for $k=1$ because $J_1$ vanishes on $\{0,\frac12\}$. The integer list is still independent of $r$ and $s$. For $k\ge3$ the vectors realizing the far atoms of Lemma~\ref{lem:faratoms} have norms of order $T^{3^{m'-1}}/r$, independent of $n$, so a verifier that also checks the norm bound $\norm w\le n^2$ of Section~\ref{sec:farthresholds} accepts them only when $n$ is large in terms of $k$ and $r$.

\begin{remark}[Certificates of other instances, and line tests]\label{rem:otherinstances}
The certificates of Theorem~\ref{thm:exact} live at targets of order $4$, $2$ and $4$, and their phases are the phases that a Gaussian certificate for such a target would have (at a target $t$ with $mt\in\Lat$, every $w\in\Lat^*$ has $\langle w,t\rangle\in\frac1m\Z$), rearranged so that $f_W(t)=0$. All the order-$4$ certificates, including those realizing the far atoms for $k\ge3$, have $f_W(2t)=\E\cos4\pi\theta=-1$, since their phases are odd multiples of $\frac14$, so they are caught by the \emph{line test} $f_W(2t')\ge-\epsilon$. Line tests are linear constraints $\E\cos2\pi j\theta\ge-\epsilon_j$ on the phase distribution as well, and Proposition~\ref{prop:linetests} settles their effect on $\mathcal V^{(0)}$ and $\mathcal V^{(1)}$: they raise $v_0=\frac1{16}$ to exactly $\frac2{27}$ and leave $v_1=\frac18$ unchanged. Adding line tests $f_W(jt')\ge-\epsilon_j$, $2\le j\le J$, to the Hessian-augmented verifier raises the value of the phase program, by our numerical solutions on grids, from $0.1875$ to $0.1958$ for $J=2$ and $J=3$ (attained near the phases $0,\pm0.3576,\pm0.8046$), to $0.1962$ for $4\le J\le6$ (near the phases $0,\pm0.3195,\pm0.3949,\pm0.8079$), and further, in the fifth decimal, as $J$ grows ($0.196152$ for $4\le J\le6$, $0.196162$ at $J=15$, $0.196165$ at $J=30$ and $J=50$, with the extremal phases spreading over $[0.30,0.41]$). This would lower the constant of Theorem~\ref{thm:main} from $0.3676$ to $0.3597$ for $J=2$ and to $0.3593$ for large $J$. We have no closed form for these values and no proof, and we do not pursue them. Table~\ref{tab:hierarchy} summarizes the close side.
\end{remark}

\begin{proposition}[Line tests]\label{prop:linetests}
For $\theta\in\R$ let $\tilde\theta\in(-\frac12,\frac12]$ be its representative modulo $1$, so that $\tilde\theta^2\le\theta^2$ and $\cos2\pi j\theta=\cos2\pi j\tilde\theta$.
\begin{enumerate}[label=\textup{(\roman*)}]
\item For every real $\theta$,
\begin{equation}\label{eq:env227}
\begin{aligned}
\tilde\theta^2&\ \ge\ \frac2{27}-\lambda_1\cos2\pi\theta+\lambda_2\cos4\pi\theta,\\
\lambda_1&=\frac4{81}+\frac{2\sqrt3}{27\pi}=0.0902,\qquad
\lambda_2=\frac{2\sqrt3}{27\pi}-\frac2{81}=0.0161 .
\end{aligned}
\end{equation}
Consequently $\inf\{\E\theta^2:\E\cos2\pi\theta\le0,\ \E\cos4\pi\theta\ge0\}=\frac2{27}$, attained at $\frac13\delta_0+\frac13\delta_{1/3}+\frac13\delta_{-1/3}$, and this distribution has $\E\cos2\pi j\theta\in\{0,1\}$ for every integer $j$; so the value of the phase program with $k=0$ and \emph{all} line tests $\E\cos2\pi j\theta\ge0$, $j\ge2$, is exactly $\frac2{27}$.
\item The distribution $\frac12\delta_0+\frac12\delta_{1/2}$ has $\E\cos2\pi j\theta\in\{0,1\}$ for every integer $j$; so the value of the phase program with $k=1$ and all line tests is still $\frac18$.
\item \textup{(Tolerances)} If $W$ passes tests \textup{(a)}, \textup{(b)} and $f_W(2t')\ge-\epsilon_2$, and $f_W(t')<\tau$, then
\[\sigma^2\dist(t',\Lat')^2\ >\ \frac2{27}-\lambda_1\tau-\lambda_2\epsilon_2,\]
so this verifier has the close radius $\sigma^{-1}(\frac2{27}-\lambda_1\tau-\lambda_2\epsilon_2)^{1/2}$, which tends to $\sigma^{-1}\sqrt{2/27}$. Gaussian certificates pass each line test $f_W(jt')\ge-\epsilon_j$ except with probability at most $2e^{-N\epsilon_j^2/2}$.
\end{enumerate}
In cases \textup{(i)} and \textup{(ii)} Lemma~\ref{lem:realize} realizes the extremal distribution by a lattice certificate with $f_W(jt)=\E\cos2\pi j\theta\ge0$ for all $j$, so, as in Theorem~\ref{thm:exact}, the corresponding close radii $\sigma^{-1}\sqrt{2/27}$ and $\sigma^{-1}/\sqrt8$ are exact for the verifiers with line tests.
\end{proposition}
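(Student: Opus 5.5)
The plan is to prove (i) in two halves: a weak-duality lower bound from the envelone~\eqref{eq:env227}, and a matching feasible distribution. I will reduce everything to $\tilde\theta\in(-\frac12,\frac12]$. The right side of~\eqref{eq:env227} depends only on $\tilde\theta$, and $\tilde\theta^2\le\theta^2$, so it suffices to prove the inequality for $\theta\in[0,\frac12]$, using evenness. Write $x=\cos2\pi\theta\in[-1,1]$, so that $\cos4\pi\theta=2x^2-1$ and $\theta=\arccos(x)/2\pi$. The claim becomes
\[F(x):=\phi(x)-\frac2{27}+\lambda_1x-\lambda_2(2x^2-1)\ \ge\ 0,\qquad\phi(x)=\Big(\frac{\arccos x}{2\pi}\Big)^2 .\]
The constants are forced by tangency at the support points $\theta=0$ and $\theta=\frac13$, that is, at $x=1$ and $x=-\frac12$. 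One checks $F(1)=0$ (here $-\frac2{27}+\lambda_1-\lambda_2=0$ by the stated values), $F(-\frac12)=0$, and $F'(-\frac12)=0$. The last two use $\phi(-\frac12)=\frac19$ and $\phi'(-\frac12)=-\frac{2}{3\sqrt3\,\pi}$.

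\emph{Main obstacle: global nonnegativity of $F$ on $[-1,1]$.} I would argue via convexity and counting zeros. $\phi$ is convex, so $F''=\phi''-4\lambda_2$. I would show $\phi''$ is decreasing then increasing on $(-1,1)$, or at least that $F''-$ has at most two sign changes. Then $F$ has at most three zeros counted with multiplicity in $(-1,1]$. We already have a double zero at $-\frac12$ and a zero at $1$. Together with $F(-1)=\frac14-\frac2{27}-\lambda_1-\lambda_2>0$, this forces $F\ge0$.

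If the shape of $\phi''$ is awkward, I would fall back to the variable $\varphi=2\pi\theta\in[0,\pi]$. There I would study $G(\varphi)=\varphi^2/4\pi^2-\frac2{27}+\lambda_1\cos\varphi-\lambda_2\cos2\varphi$ with a derivative-sign chain, as in Lemma~\ref{lem:env2}. Concretely, $G'=\sin\varphi\,(\varphi/(2\pi^2\sin\varphi)-\lambda_1+4\lambda_2\cos\varphi)$. The bracket has derivative of one sign on each of two explicit subintervals, which yields $G$ decreasing–increasing–decreasing–increasing around the zeros $0$ and $2\pi/3$. This Step is where I expect the work.

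Given the envelope, integrate it against any feasible $\theta$. This gives
\[\E\theta^2\ge\E\tilde\theta^2\ge\frac2{27}-\lambda_1\E\cos2\pi\theta+\lambda_2\E\cos4\pi\theta\ge\frac2{27},\]
since $\lambda_1,\lambda_2>0$. The distribution $\frac13(\delta_0+\delta_{1/3}+\delta_{-1/3})$ has $\E\theta^2=\frac2{27}$. It also has $\E\cos2\pi j\theta=1$ if $3\mid j$ and $0$ otherwise, by the cube-root-of-unity sum. So it satisfies all line tests, and the value $\frac2{27}$ is attained; adding further line tests only raises the infimum, which remains at most $\frac2{27}$. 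For (ii), the distribution $\frac12(\delta_0+\delta_{1/2})$ gives $\E\cos2\pi j\theta=\frac12(1+(-1)^j)$ and $\E J_1=0$. Combined with the lower bound $v_1\ge\frac18$ from~\eqref{eq:env1} (line tests only shrink the feasible set), the value is $\frac18$.

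For (iii), I would follow Proposition~\ref{prop:sound}. The empirical phase distribution has $\E\theta^2\le\sigma^2\norm e^2$, $\E\cos2\pi\theta<\tau$ and $\E\cos4\pi\theta=f_W(2t')\ge-\epsilon_2$, by Lemma~\ref{lem:phase} applied at $2t'=2y+2e$. Integrating~\eqref{eq:env227} gives the strict inequality, so $\norm e$ exceeds the stated radius. The Gaussian claim is Hoeffding (Lemma~\ref{lem:hoeffding}) applied to $\cos(2\pi j\langle w_i,t'\rangle)$, whose mean is $f_{\Lat'}(jt')\ge0$ by Lemma~\ref{lem:facts}(i). Exactness follows from Lemma~\ref{lem:realize} with $m=3$ and $m=2$: by (ii), $f_W(jt)=\E_P\cos2\pi j\theta\ge0$. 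I would then argue as in Theorem~\ref{thm:exact} using the scale invariance of (iv).
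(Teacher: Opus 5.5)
Everything outside the nonnegativity of the envelope matches the paper's proof and is correct. This covers the reduction to $\theta\in[0,\frac12]$, integration against a feasible phase distribution, and the root-of-unity sums for $\frac13(\delta_0+\delta_{1/3}+\delta_{-1/3})$ and $\frac12(\delta_0+\delta_{1/2})$. It also covers Lemma~\ref{lem:phase} at $2t'$ with strictness from $\lambda_1>0$, Hoeffding with mean $f_{\Lat'}(jt')\ge0$, and Lemma~\ref{lem:realize} with $m=3,2$. Your values $F(1)=F(-\frac12)=F'(-\frac12)=0$ and $F(-1)>0$ are right too.

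The gap is in the step you flag. From ``$F''$ has at most two sign changes on $(-1,1)$'', Rolle gives at most three zeros of $F'$, and hence at most \emph{four} zeros of $F$ on $(-1,1]$ counted with multiplicity, not three. So your three known zeros and $F(-1)>0$ do not exclude a further simple zero $x_0\in(-\frac12,1)$ with $F<0$ on $(x_0,1)$. You need one more fact, and either of two will do.

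The first option is to prove that $\phi''$ is strictly monotone, not ``decreasing then increasing''. With $x=\cos\varphi$, $\phi''(x)=\frac{\sin\varphi-\varphi\cos\varphi}{2\pi^2\sin^3\varphi}$. Its $\varphi$-derivative has numerator $\frac12\bigl[u(2+\cos u)-3\sin u\bigr]$ with $u=2\varphi$, which is positive on $(0,2\pi)$. Since $\phi''(1)=\frac1{6\pi^2}<4\lambda_2$ and $\phi''\to\infty$ as $x\to-1$, $F''$ then has exactly one zero, and your count of three is correct.

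The second option is to keep two sign changes and add $F'(1)=\lambda_1-4\lambda_2-\frac1{2\pi^2}<0$. Then $F$ is positive just left of $1$ as well as at $-1$, and parity of sign changes excludes a single extra zero.

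Your fallback in $\varphi$ rests on the same fact. Its bracket $K(\varphi)=\frac{\varphi}{2\pi^2\sin\varphi}-\lambda_1+4\lambda_2\cos\varphi$ satisfies $K'(\varphi)=\sin\varphi\,F''(\cos\varphi)$. There you would also need $F''(-\frac12)>0$ to place the turning point of $K$ before $\frac{2\pi}3$.

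The paper sidesteps all of this by not substituting. In $\theta$, $h_3''(\theta)=2-4\pi^2\lambda_1\cos2\pi\theta+16\pi^2\lambda_2\cos4\pi\theta$ is a quadratic polynomial in $\cos2\pi\theta$. So it has at most two zeros on $(0,\frac12)$ with no analysis, and $h_3'$ has at most three zeros on $[0,\frac12]$. These are $0$ (by evenness), $\frac13$, and the Rolle zero in $(0,\frac13)$ coming from $h_3(0)=h_3(\frac13)=0$. After that, $h_3''(0)>0$ and $h_3''(\frac13)>0$ fix the shape increase--decrease--increase.

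Two zeros of the second derivative suffice there because evenness supplies $h_3'(0)=0$ for free. In your variable that zero comes from the factor $\sin2\pi\theta$ in $h_3'(\theta)=-2\pi\sin(2\pi\theta)\,F'(\cos2\pi\theta)$ and is lost. That is exactly why the same count falls one short. The substitution also turns the harmless term $\theta^2$ into the transcendental $\phi$.
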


\begin{proof}
(i) Both sides of~\eqref{eq:env227} depend only on $\tilde\theta$, so it suffices to show $h_3(\theta):=\theta^2-\frac2{27}+\lambda_1\cos2\pi\theta-\lambda_2\cos4\pi\theta\ge0$ on $[-\frac12,\frac12]$. The constants are forced by $h_3(0)=0$ (i.e. $\lambda_1-\lambda_2=\frac2{27}$) and $h_3(\frac13)=h_3'(\frac13)=0$. The function $h_3$ is even with $h_3(0)=h_3(\frac13)=h_3'(\frac13)=0$,
\[\begin{aligned}
h_3''(0)&=2-4\pi^2\lambda_1+16\pi^2\lambda_2=0.9881>0,\\
h_3''(\tfrac13)&=2+2\pi^2\lambda_1-8\pi^2\lambda_2=2.5059>0,
\qquad h_3'(\tfrac12)=1 .
\end{aligned}\]
Since $h_3''(\theta)=2-4\pi^2\lambda_1\cos2\pi\theta+16\pi^2\lambda_2\cos4\pi\theta$ is a quadratic polynomial in $\cos2\pi\theta$, it has at most two zeros on $(0,\frac12)$, so $h_3'$ has at most three zeros on $[0,\frac12]$; Rolle's theorem gives one in $(0,\frac13)$, and $0$ and $\frac13$ are two more. Hence $h_3'$ has no other zero on $[0,\frac12]$, is positive just right of $0$ and of $\frac13$ (as $h_3''>0$ there) and negative just left of $\frac13$, so $h_3$ increases from $0$, decreases to $h_3(\frac13)=0$, and increases on $[\frac13,\frac12]$: $h_3\ge0$. Taking expectations in~\eqref{eq:env227} with $\E\cos2\pi\theta\le0$ and $\E\cos4\pi\theta\ge0$ gives $\E\theta^2\ge\E\tilde\theta^2\ge\frac2{27}$, since $\lambda_1,\lambda_2>0$. The three-point distribution has $\E\theta^2=\frac23\cdot\frac19=\frac2{27}$ and $\E\cos2\pi j\theta=\frac13(1+2\cos\frac{2\pi j}3)$, which is $1$ if $3\mid j$ and $0$ otherwise. (ii) $\E\cos2\pi j\theta=\frac12(1+(-1)^j)$. (iii) For the empirical phase distribution, $f_W(2t')=\E\cos4\pi\theta$ by Lemma~\ref{lem:phase}, so integrating~\eqref{eq:env227} gives $\sigma^2\norm e^2\ge\E\theta^2\ge\E\tilde\theta^2>\frac2{27}-\lambda_1\tau-\lambda_2\epsilon_2$, as in Proposition~\ref{prop:sound}. For Gaussian certificates $\E f_W(jt')=f_{\Lat'}(jt')\ge0$ by~\eqref{eq:f} and Lemma~\ref{lem:facts}(i), and Lemma~\ref{lem:hoeffding} gives the bound. The last statement is Lemma~\ref{lem:realize}(ii) with $m=3$, respectively $m=2$.
\end{proof}

\begin{table}[ht]
\centering\scriptsize
\begin{tabular}{@{}lccccc@{}}
\toprule
tests beyond (a),(b),(c) & value $v$ & extremal phases & radius $\sigma^{-1}\sqrt v$ & at $\sigma^2=\frac1{2\pi}$ & constant\\
\midrule
none & $\frac1{16}$ & $\pm\frac14$ & $\sigma^{-1}/4$ & $0.6267$ & $\frac2\pi=0.6366$\\
line tests & $\frac2{27}$ & $0,\pm\frac13$ & $\sigma^{-1}\sqrt{2/27}$ & $0.6822$ & $\frac{3\sqrt6}{4\pi}=0.5848$\\
gradient ($\pm$ line tests) & $\frac18$ & $0,\frac12$ & $\sigma^{-1}/\sqrt8$ & $0.8862$ & $\frac{\sqrt2}\pi=0.4502$\\
gradient, Hessian & $\frac3{16}$ & $\frac14,\frac34$ (masses $\frac34,\frac14$) & $\sigma^{-1}\sqrt{3/16}$ & $1.0854$ & $\frac{2}{\pi\sqrt3}=0.3676$\\
all derivatives & $\frac3{16}$ & not attained & $\sigma^{-1}\sqrt{3/16}$ & $1.0854$ & $\frac{2}{\pi\sqrt3}=0.3676$\\
gradient, Hessian, line tests & $0.1962^{*}$ & $0,\pm[0.30,0.41],\pm0.81$ & $\sigma^{-1}\sqrt{0.1962}^{\,*}$ & $1.1102^{*}$ & $0.3593^{*}$\\
\bottomrule
\end{tabular}
\caption{The close side. Each row is the value of the phase program~\eqref{eq:vk} with the indicated constraints, the extremal phase distribution, the close radius, its value at the Gaussian spectral level $\sigma^2=1/2\pi$, and the constant $(1/\sqrt{2\pi})/\text{radius}$ it yields with the Gaussian prover (far threshold $1/\sqrt{2\pi}$). The line tests are $f_W(jt')\ge-\epsilon_j$ for $2\le j\le J$; in the last row $J$ is large (Remark~\ref{rem:otherinstances}). Starred entries are numerical; all others are proved (Theorem~\ref{thm:close}, Theorem~\ref{thm:exact} and Proposition~\ref{prop:linetests}).}\label{tab:hierarchy}
\end{table}

\section{Completeness and the proof of the main theorem}\label{sec:complete}

Throughout this section $w_1,\dots,w_N$ are independent samples from $D_\Lambda$, $\Lambda:=\Lat'^*$, and $\dist(t',\Lat')\ge c''\sqrt n$. We first treat the threshold, gradient and spectral tests.

\begin{lemma}[Threshold, gradient and spectral tests for Gaussian certificates]\label{lem:EBcomplete}
Let $n\ge8$, $\epsilon_n=2\beta(1)^n$.
\begin{enumerate}[label=\textup{(\roman*)}]
\item If $c''\ge1/\sqrt{2\pi}$, then for every $\zeta>0$, $\PP\big(f_W(t')\ge\beta(c'')^n+\zeta\big)\le2e^{-N\zeta^2/2}$.
\item If $c''\sqrt{1-1/n}\ge1/\sqrt{2\pi}$, then $\norm{\nabla f_{\Lat'}(t')}\le\gamma_n:=\sqrt{2\pi n/e}\;e^{\pi c''^2}\beta(c'')^n$, and for every $a>0$, $\PP\big(\norm{\nabla f_W(t')}>2\pi a+\gamma_n\big)\le n/2\pi Na^2$.
\item For $\delta\in(\epsilon_n,1]$ and $\delta':=\delta-\epsilon_n$, the event $\mathcal E:=\{\norm{w_i}\le\sqrt n\ \forall i\}$ has $\PP(\mathcal E^c)\le N\beta(1)^n$, and
\[\PP\Big(\mathcal E\ \text{and}\ s_{\max}\Big(\sum_iw_iw_i^{\top}\Big)>\frac{1+\delta}{2\pi}N\Big)\le n\exp\Big(-\frac{N\delta'^2}{4\pi n(1+\epsilon_n+\delta'/3)}\Big).\]
\end{enumerate}
\end{lemma}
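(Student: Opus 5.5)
The three parts are independent, and each follows from a bound on the mean of the relevant Gaussian quantity at $t'$ plus a concentration inequality from Section~\ref{sec:prelim}.

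For (i), the summands $\cos(2\pi\langle w_i,t'\rangle)$ are independent and lie in $[-1,1]$. Their mean is $f_{\Lat'}(t')$ by~\eqref{eq:f}. Since $\dist(t',\Lat')\ge c''\sqrt n$ with $c''\ge1/\sqrt{2\pi}$, Lemma~\ref{lem:facts}(iv) gives $f_{\Lat'}(t')\le\beta(c'')^n$. Lemma~\ref{lem:hoeffding} applied with deviation $\zeta$ then yields the bound directly.

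For (ii), start from the first line of~\eqref{eq:fderiv}:
\[\norm{\nabla f_{\Lat'}(t')}\le\frac{2\pi}{\rho(\Lat')}\sum_{y\in\Lat'+t'}\norm y e^{-\pi\norm y^2}.\]
Apply the second estimate of~\eqref{eq:peel} termwise, which bounds this by $2\pi\sqrt{n/2\pi e}\,\rho_{s'}(\Lat'+t')/\rho(\Lat')$. Then~\eqref{eq:rhosprime}, valid under the hypothesis $c''\sqrt{1-1/n}\ge1/\sqrt{2\pi}$, bounds the ratio by $e^{\pi c''^2}\beta(c'')^n$. The constants combine as $2\pi\sqrt{n/2\pi e}=\sqrt{2\pi n/e}$, which gives $\gamma_n$.

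For the probability bound in (ii), write $\nabla f_W(t')-\nabla f_{\Lat'}(t')=-2\pi\cdot\frac1N\sum_i(X_i-\E X_i)$ with $X_i=\sin(2\pi\langle w_i,t'\rangle)w_i$. Independence gives
\[\E\Big\lVert\frac1N\sum_i(X_i-\E X_i)\Big\rVert^2\le\frac1N\E\norm{X_1}^2\le\frac1N\E\norm w^2\le\frac n{2\pi N},\]
the last step by Lemma~\ref{lem:transference}. Chebyshev's inequality bounds the probability that this average exceeds $a$ by $n/(2\pi Na^2)$. The triangle inequality with $\norm{\nabla f_{\Lat'}(t')}\le\gamma_n$ finishes (ii).

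For (iii), the union bound together with $\PP(\norm w>\sqrt n)\le\beta(1)^n$ (Lemma~\ref{lem:trunc}) gives $\PP(\mathcal E^c)\le N\beta(1)^n$. For the spectral bound, couple to truncated samples. On $\mathcal E$, the $w_i$ have the same joint law as i.i.d.\ samples $\tilde w_i$ of $w$ conditioned on $\norm w\le\sqrt n$, so
\[\PP\Big(\mathcal E\text{ and }s_{\max}\Big(\sum_i w_iw_i^{\top}\Big)>\tfrac{1+\delta}{2\pi}N\Big)\le\PP\Big(s_{\max}\Big(\sum_i\tilde w_i\tilde w_i^{\top}\Big)>\tfrac{1+\delta}{2\pi}N\Big).\]
Set $X_i=\tilde w_i\tilde w_i^{\top}-\E\tilde w\tilde w^{\top}$. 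Lemma~\ref{lem:trunc} gives $\E\tilde w\tilde w^{\top}\preceq\frac{1+\epsilon_n}{2\pi}I$, so the event implies $s_{\max}(\sum X_i)>\delta' N/2\pi$. The matrix Bernstein parameters come out as follows:
\begin{itemize}
\item Since both $\tilde w_i\tilde w_i^{\top}$ and $\E\tilde w\tilde w^{\top}$ are PSD with norm at most $n$, we get $\norm{X_i}\le n$, so $R=n$.
\item For the variance, $\E X_i^2\preceq\E\norm{\tilde w}^2\tilde w\tilde w^{\top}\preceq n\frac{1+\epsilon_n}{2\pi}I$, so $v=Nn(1+\epsilon_n)/2\pi$.
\end{itemize}
Lemma~\ref{lem:mb} with $u=\delta'N/2\pi$ gives the exponent
\[\frac{u^2/2}{v+Ru/3}=\frac{\delta'^2N^2/8\pi^2}{Nn(1+\epsilon_n)/2\pi+n\delta'N/6\pi}=\frac{N\delta'^2}{4\pi n(1+\epsilon_n+\delta'/3)},\]
which is exactly the stated form.

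The main obstacle is the coupling in (iii): the samples are not bounded, so Bernstein cannot be applied to them directly. I would handle it as above, by the identity $\PP(\mathcal E\cap A)=\PP(\mathcal E)\,\PP(A\mid\mathcal E)\le\PP(\tilde A)$, where $\tilde A$ is the same event for the truncated samples. The only other point needing care is the constant bookkeeping in (ii).
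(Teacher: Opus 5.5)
Your proposal is correct and follows the paper's proof essentially step by step: Hoeffding for (i), the bounds \eqref{eq:peel} and \eqref{eq:rhosprime} plus a second-moment Chebyshev/Markov argument for (ii), and truncation to $\norm{\tilde w}\le\sqrt n$ followed by matrix Bernstein for (iii). The only differences are cosmetic. You do not normalize the $X_i$ by $1/N$, so $R=n$, $v=Nn(1+\epsilon_n)/2\pi$ and $u=\delta'N/2\pi$, which gives the same exponent. You also make the conditioning step explicit via $\PP(\mathcal E\cap A)=\PP(\mathcal E)\,\PP(\tilde A)$, where the paper simply notes that on $\mathcal E$ the conditioned samples are the actual samples.
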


\begin{proof}
(i) By~\eqref{eq:f} and Lemma~\ref{lem:facts}(iv), $\E\cos(2\pi\langle w_i,t'\rangle)=f_{\Lat'}(t')\le\beta(c'')^n$; apply Lemma~\ref{lem:hoeffding}.

(ii) By~\eqref{eq:fderiv} and~\eqref{eq:peel}, $\norm{\nabla f_{\Lat'}(t')}\le2\pi\rho(\Lat')^{-1}\sum_{y\in\Lat'+t'}\norm y e^{-\pi\norm y^2}\le2\pi\sqrt{n/2\pi e}\;\rho_{s'}(\Lat'+t')/\rho(\Lat')$, and~\eqref{eq:rhosprime} gives $\gamma_n$. For the fluctuation put $X_i=\sin(2\pi\langle w_i,t'\rangle)w_i$, so that $\nabla f_W(t')=-\frac{2\pi}N\sum_iX_i$ and the $X_i$ are independent with $\E\norm{X_i}^2\le\E\norm{w_i}^2\le n/2\pi$ by Lemma~\ref{lem:transference}. Hence $\E\norm{\frac1N\sum_i(X_i-\E X_i)}^2\le n/2\pi N$, Markov's inequality gives $\PP(\norm{\frac1N\sum_i(X_i-\E X_i)}>a)\le n/2\pi Na^2$, and on the complementary event $\norm{\nabla f_W(t')}\le2\pi a+\norm{\nabla f_{\Lat'}(t')}\le2\pi a+\gamma_n$.

(iii) By Lemma~\ref{lem:facts}(iv) with $u=0$, $c=1$, $\PP(\norm w>\sqrt n)\le\beta(1)^n$ for one sample, so $\PP(\mathcal E^c)\le N\beta(1)^n$. Let $\tilde w$ be a sample conditioned on $\norm{\tilde w}\le\sqrt n$ and $\Sigma':=\E\tilde w\tilde w^{\top}\preceq\frac{1+\epsilon_n}{2\pi}I$ (Lemma~\ref{lem:trunc}). Put $X_i=\frac1N(\tilde w_i\tilde w_i^{\top}-\Sigma')$: these are independent, centered, $\norm{X_i}\le\max(\norm{\tilde w_i}^2,\norm{\Sigma'})/N\le n/N=:R$, and
\[\sum_i\E X_i^2\preceq\frac1N\E\big[(\tilde w\tilde w^{\top})^2\big]=\frac1N\E\big[\norm{\tilde w}^2\tilde w\tilde w^{\top}\big]\preceq\frac nN\Sigma'\preceq\frac{(1+\epsilon_n)n}{2\pi N}I,\]
so $v=(1+\epsilon_n)n/2\pi N$ is admissible in Lemma~\ref{lem:mb}. With $u=\delta'/2\pi$,
\[\PP\Big(s_{\max}\Big(\sum_iX_i\Big)\ge\frac{\delta'}{2\pi}\Big)\le n\exp\Big(\frac{-\delta'^2/8\pi^2}{\frac{(1+\epsilon_n)n}{2\pi N}+\frac{n\delta'}{6\pi N}}\Big)=n\exp\Big(-\frac{N\delta'^2}{4\pi n(1+\epsilon_n+\delta'/3)}\Big),\]
and on the complementary event $s_{\max}(\frac1N\sum_i\tilde w_i\tilde w_i^{\top})\le\norm{\Sigma'}+\delta'/2\pi\le(1+\epsilon_n+\delta')/2\pi=(1+\delta)/2\pi$; on $\mathcal E$ the conditioned samples are the actual samples.
\end{proof}

Aharonov and Regev prove the spectral statement with the bound $3N$ in place of $(1+\delta)N/2\pi$, by truncation, Hoeffding's inequality in each direction of an $\epsilon$-net of the sphere, and a union bound; the factor $3$ is a Hoeffding slack of $2$ plus a net slack of $1$. The matrix Bernstein inequality removes both slacks, and nothing below $1/2\pi$ is possible (Lemma~\ref{lem:transference}).

The new ingredient is the Hessian test.

\begin{lemma}[The Hessian test for Gaussian certificates]\label{lem:hessian}
Let $n\ge8$ and $c''\sqrt{1-1/n}\ge1/\sqrt{2\pi}$. Then
\begin{equation}\label{eq:gamma2}
\norm{\nabla^2f_{\Lat'}(t')}\ \le\ \gamma^{(2)}_n:=\Big(\frac{4\pi n}e+2\pi\Big)e^{\pi c''^2}\beta(c'')^n,
\end{equation}
and for every $u>0$, with $\kappa_n=6n\beta(1)^n$, $R=2n/N$ and $v=(1+\epsilon_n)n/2\pi N$,
\[\PP\Big(\norm{\nabla^2f_W(t')}>4\pi^2(u+\kappa_n)+\gamma^{(2)}_n\Big)\ \le\ 2n\exp\Big(-\frac{u^2/2}{v+Ru/3}\Big)+N\beta(1)^n .\]
\end{lemma}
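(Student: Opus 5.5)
The proof has two parts, mirroring Lemma~\ref{lem:EBcomplete}(ii) for the gradient: a deterministic bound on the expected Hessian $\nabla^2f_{\Lat'}(t')$ at a far target, and a concentration bound for the empirical Hessian around it. The concentration part uses truncation to $\norm w\le\sqrt n$ and the matrix Bernstein inequality, exactly as in Lemma~\ref{lem:EBcomplete}(iii).

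\emph{Expected Hessian.} Start from the primal form in~\eqref{eq:fderiv},
\[\nabla^2f_{\Lat'}(t')=\rho(\Lat')^{-1}\sum_{y\in\Lat'+t'}\big(4\pi^2yy^{\top}-2\pi I\big)e^{-\pi\norm y^2}.\]
Bound its operator norm by $\rho(\Lat')^{-1}\sum_y(4\pi^2\norm y^2+2\pi)e^{-\pi\norm y^2}$. For the first term use~\eqref{eq:peel}: $\norm y^2e^{-\pi\norm y^2}\le\frac n{\pi e}\rho_{s'}(y)$, which contributes $\frac{4\pi n}{e}\rho_{s'}(\Lat'+t')/\rho(\Lat')$. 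For the second term use $e^{-\pi\norm y^2}\le\rho_{s'}(y)$, which contributes $2\pi\rho_{s'}(\Lat'+t')/\rho(\Lat')$. Then~\eqref{eq:rhosprime}, valid under the hypothesis $c''\sqrt{1-1/n}\ge1/\sqrt{2\pi}$, bounds the ratio by $e^{\pi c''^2}\beta(c'')^n$. This gives $\gamma^{(2)}_n$ directly.

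\emph{Fluctuation.} Write $g(w)=\cos(2\pi\langle w,t'\rangle)\in[-1,1]$, so that $\nabla^2f_W(t')=-\frac{4\pi^2}N\sum_ig(w_i)w_iw_i^{\top}$.
\begin{itemize}
\item Work on the event $\mathcal E$ of Lemma~\ref{lem:EBcomplete}(iii), whose complement costs $N\beta(1)^n$. On $\mathcal E$ the samples coincide with independent copies $\tilde w_i$ of the conditioned variable.
\item Put $X_i=\frac1N\big(g(\tilde w_i)\tilde w_i\tilde w_i^{\top}-\E[g(\tilde w)\tilde w\tilde w^{\top}]\big)$. These are independent, centered and symmetric.
\item Their norms satisfy $\norm{X_i}\le\frac1N(n+n/2\pi\cdot(1+\epsilon_n))\le 2n/N=R$, using $\norm{\tilde w}^2\le n$ and Lemma~\ref{lem:trunc}.
\item For the variance, $\E X_i^2\preceq\frac1{N^2}\E[g^2\norm{\tilde w}^2\tilde w\tilde w^{\top}]\preceq\frac n{N^2}\E[\tilde w\tilde w^{\top}]\preceq\frac{(1+\epsilon_n)n}{2\pi N^2}I$, so $v=(1+\epsilon_n)n/2\pi N$ is admissible.
\end{itemize}
The two-sided matrix Bernstein bound of Lemma~\ref{lem:mb} then gives $\norm{\sum_iX_i}<u$ except with probability $2n\exp(-\frac{u^2/2}{v+Ru/3})$.

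\emph{Assembly.} On the good event, apply the triangle inequality to
\[\Big\lVert\frac1N\sum_ig(w_i)w_iw_i^{\top}\Big\rVert\le u+\big\lVert\E[g(\tilde w)\tilde w\tilde w^{\top}]-\E[g(w)ww^{\top}]\big\rVert+\big\lVert\E[g(w)ww^{\top}]\big\rVert.\]
The middle term is at most $\kappa_n$ by Lemma~\ref{lem:trunc}, applied with this $g$. By~\eqref{eq:fderiv}, the last term equals $\norm{\nabla^2f_{\Lat'}(t')}/4\pi^2\le\gamma_n^{(2)}/4\pi^2$. Multiplying by $4\pi^2$ gives the claimed threshold, and a union bound over the two bad events gives the claimed probability.

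The only delicate point is the bookkeeping in the bound $\norm{X_i}\le R$: one should bound $\norm{g\tilde w\tilde w^{\top}-\E[\cdot]}$ by $n+\norm{\E[\cdot]}\le n+(1+\epsilon_n)/2\pi\le2n$, rather than by the max used for the positive semidefinite case. Everything else is a direct reuse of the arguments already used for the gradient and spectral tests.
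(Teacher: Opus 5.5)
Your proposal is correct and follows the paper's proof essentially step for step: the termwise bound via \eqref{eq:peel} (with $e^{-\pi\norm y^2}\le\rho_{s'}(y)$ for the $2\pi$ term) and \eqref{eq:rhosprime} gives $\gamma^{(2)}_n$, and the fluctuation is handled the same way. That means truncation to $\mathcal E$, the same centered summands $X_i$ with $R=2n/N$ and variance proxy $v$, the two-sided matrix Bernstein bound, and the triangle inequality with $\kappa_n$ from Lemma~\ref{lem:trunc}. Your two bounds on $\norm{X_i}$, via $n(1+\epsilon_n)/2\pi$ in the bullet and $(1+\epsilon_n)/2\pi$ in the closing remark, are both valid and both give $R$; the paper uses $\norm{\tilde w_i}^2+\E\norm{\tilde w}^2\le 2n$ instead.
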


\begin{proof}
By~\eqref{eq:fderiv}, $\norm{\nabla^2f_{\Lat'}(t')}\le\rho(\Lat')^{-1}\sum_{y\in\Lat'+t'}(4\pi^2\norm y^2+2\pi)e^{-\pi\norm y^2}$, and by~\eqref{eq:peel} each term is at most $(4\pi^2\cdot\frac n{\pi e}+2\pi)\rho_{s'}(y)$; summing and using~\eqref{eq:rhosprime} gives~\eqref{eq:gamma2}. For the fluctuation, let $\mathcal E$ be the event that $\norm{w_i}\le\sqrt n$ for all $i$, so $\PP(\mathcal E^c)\le N\beta(1)^n$ by Lemma~\ref{lem:facts}(iv), and let $\tilde w_i$ be the samples conditioned on $\norm{\tilde w_i}\le\sqrt n$; on $\mathcal E$ they coincide with the $w_i$. Write $c(w)=\cos(2\pi\langle w,t'\rangle)$ and put
\[X_i:=\frac1N\Big(c(\tilde w_i)\tilde w_i\tilde w_i^{\top}-\E\big[c(\tilde w)\tilde w\tilde w^{\top}\big]\Big).\]
These are independent, centered, and $\norm{X_i}\le\frac1N(\norm{\tilde w_i}^2+\E\norm{\tilde w}^2)\le2n/N=R$. Since the variance of a random matrix is dominated by its second moment,
\[\sum_i\E X_i^2\preceq\frac1N\E\big[c(\tilde w)^2\norm{\tilde w}^2\tilde w\tilde w^{\top}\big]\preceq\frac nN\E[\tilde w\tilde w^{\top}]\preceq\frac{(1+\epsilon_n)n}{2\pi N}I\]
by Lemma~\ref{lem:trunc}, so $v$ is admissible in Lemma~\ref{lem:mb}, whose two-sided form gives $\PP(\norm{\sum_iX_i}\ge u)\le2n\exp(-\frac{u^2/2}{v+Ru/3})$. On $\mathcal E$ and the complementary event,
\[\Big\lVert\frac1N\sum_ic(w_i)w_iw_i^{\top}\Big\rVert\le u+\big\lVert\E[c(\tilde w)\tilde w\tilde w^{\top}]\big\rVert\le u+\kappa_n+\big\lVert\E[c(w)ww^{\top}]\big\rVert=u+\kappa_n+\frac{\norm{\nabla^2f_{\Lat'}(t')}}{4\pi^2},\]
by Lemma~\ref{lem:trunc} and~\eqref{eq:fderiv}, and $\nabla^2f_W(t')=-4\pi^2\cdot\frac1N\sum_ic(w_i)w_iw_i^{\top}$ by~\eqref{eq:fW}.
\end{proof}

\begin{proof}[Proof of Theorem~\ref{thm:main}]
Fix $c>2/(\pi\sqrt3)$. Let
\[\tau=\frac1n,\qquad\delta=\frac{0.4}{\sqrt n},\qquad\sigma^2=\frac{1+\delta}{2\pi},\qquad\eta_1=\frac1{\sqrt n},\qquad\eta_2=\frac{4.5}{\sqrt n},\qquad N=100n^2\lceil\log_2n\rceil,\]
and run $\mathcal V^{(2)}_{\tau,\sigma,\eta}$. Its close radius $r_2(n)$ of~\eqref{eq:rk} tends to $\sqrt{3\pi/8}=1.0854$ as $n\to\infty$ ($r_2(8)=0.8204$, $r_2(9)=0.8367$, $r_2(100)=1.0134$, $r_2(10^4)=1.0783$). \emph{Soundness} is Proposition~\ref{prop:sound}. \emph{Completeness.} We prove slightly more, for use in Appendix~\ref{app:precision}: let the verifier rescale by $q/r$ for any $q\in(r_2(n)-n^{-4},r_2(n)]$ in place of $r_2(n)/r$; we show that some certificate passes tests (b)--(e) with margin at least $n^{-4}$ each. On a NO instance,
\[\dist(t',\Lat')\ >\ cq\sqrt n\ \ge\ c\,\big(r_2(n)-n^{-4}\big)\sqrt n\ =:\ c''\sqrt n,\]
and $c''\to c\sqrt{3\pi/8}>\frac2{\pi\sqrt3}\sqrt{3\pi/8}=1/\sqrt{2\pi}$.
Let $n_0(c)$ be the least $n\ge8$ such that, for all $n'\ge n$,
\begin{equation}\label{eq:n0conds}
c''\sqrt{1-1/n}>\frac1{\sqrt{2\pi}},\qquad\beta(c'')^n\le\frac\tau2,\qquad\gamma_n\le\frac{0.49}{\sqrt n},\qquad\gamma^{(2)}_n+4\pi^2\kappa_n\le\frac{0.55}{\sqrt n},
\end{equation}
(with $n'$ in place of $n$), with $\gamma_n$, $\gamma^{(2)}_n$, $\kappa_n$ as in Lemmas~\ref{lem:EBcomplete}, \ref{lem:hessian} and~\ref{lem:trunc}; such $n_0$ exists because $\beta(c'')<1$ for large $n$, after which $\beta(c'')^n$, $\gamma_n$, $\gamma^{(2)}_n$ and $\kappa_n$ decay exponentially while $\tau/2=1/2n$ and $n^{-1/2}$ do not. For $n\ge n_0$ let $W$ be $N$ independent samples from $D_{\Lat'^*}$. Test (a) holds. All the bounds below are nonincreasing in $n$, and we use $N\ge100n^2\log_2n$.

\emph{Test \textup{(c)}.} By Lemma~\ref{lem:EBcomplete}(i) with $\zeta=\tau/2-n^{-4}$, $f_W(t')<\beta(c'')^n+\tau/2-n^{-4}\le\tau-n^{-4}$ except with probability at most
\[2e^{-N(\tau/2-n^{-4})^2/2}\le2e^{-12.5(1-2n^{-3})^2\log_2n}\le2n^{-17}.\]

\emph{Test \textup{(b)}.} Apply Lemma~\ref{lem:EBcomplete}(iii) with $\delta-2\pi n^{-4}$ in place of $\delta$, so that $\delta'=\delta-\epsilon_n-2\pi n^{-4}\ge0.9891\,\delta$ (as $(\epsilon_n+2\pi n^{-4})/\delta\le0.0109$ for $n\ge8$) and $\delta\le0.4\cdot8^{-1/2}=0.1414$. Except on $\mathcal E^c$, of probability at most $N\beta(1)^n$, and on an event of probability at most
\[n\exp\Big(-\frac{100\cdot0.16\,(0.9891)^2\log_2n}{4\pi(1+\epsilon_n+\delta/3)}\Big)\ \le\ n^{-0.716},\]
we get $s_{\max}(\sum_iw_iw_i^{\top})\le(\sigma^2-n^{-4})N$.

\emph{Test \textup{(d)}.} By Lemma~\ref{lem:EBcomplete}(ii) with $a=0.08\,n^{-1/2}$, except with probability at most
\[\frac n{2\pi Na^2}=\frac{n^2}{0.0128\pi N}\le\frac1{4.0212\log_2n},\]
we get $\norm{\nabla f_W(t')}\le2\pi a+\gamma_n\le(0.5027+0.49)n^{-1/2}\le\eta_1-n^{-4}$.

\emph{Test \textup{(e)}.} Apply Lemma~\ref{lem:hessian} with $u=0.1\,n^{-1/2}$, for which
\[\frac{u^2/2}{v+Ru/3}=\frac{N/200n^2}{(1+\epsilon_n)/2\pi+\frac{0.2}{3}n^{-1/2}}\ \ge\ \frac{\log_2n/2}{0.1592(1+\epsilon_n)+0.0236}\ \ge\ 2.735\log_2n\ \ge\ 3.945\ln n .\]
Except with probability at most $2n^{-2.945}+N\beta(1)^n$, we get $\norm{\nabla^2f_W(t')}\le4\pi^2(u+\kappa_n)+\gamma^{(2)}_n\le(3.9478+0.55)n^{-1/2}\le\eta_2-n^{-4}$, by the last condition in~\eqref{eq:n0conds}.

At $n=8$, where $N=19200$ and $N\beta(1)^8=0.0199$, the failure probabilities are at most $2\cdot8^{-17}$, $0.2256+0.0199$, $0.0829$ and $0.0044+0.0199$, which sum to less than $0.3527$; so for every $n\ge n_0$ some certificate passes tests (b)--(e) with margin at least $n^{-4}$ each, and the verifier accepts. The bit length of the certificate is bounded in Appendix~\ref{app:precision}.

Membership of $\mathrm{GapCVP}$ in $\NP$ is witnessed by a nearby lattice point, and dimensions below $n_0(c)$ are decided by Kannan's algorithm. For $\mathrm{GapSVP}$ we use the index-two sublattices of Goldreich, Micciancio, Safra and Seifert~\cite{GMSS99}: if $B=[b_1,\ldots,b_n]$ and $B^{(j)}$ replaces $b_j$ by $2b_j$, then $\lambda_1(\Lat(B))=\min_j\dist(b_j,\Lat(B^{(j)}))$, because a shortest nonzero vector has an odd coefficient in some basis coordinate $j$ (otherwise it could be halved), and then lies in $b_j-\Lat(B^{(j)})$, while every vector $b_j-v$, $v\in\Lat(B^{(j)})$, is a nonzero vector of $\Lat(B)$. The $\coNP$ verifier for $\mathrm{GapSVP}_{c\sqrt n}$ on $(B,r)$ runs the verifier above on the $n$ instances $(B^{(j)},b_j,r)$, with a certificate of $nN$ integer vectors, and accepts if all of them accept. On a NO instance, $\lambda_1>c\sqrt n\,r$, all $n$ instances are NO instances and their certificates exist; on a YES instance, $\lambda_1\le r$, some instance $(B^{(j)},b_j,r)$ is a YES instance and its verifier rejects. A short nonzero lattice vector is the $\NP$ certificate.
\end{proof}

\begin{table}[ht]
\centering
\begin{tabular}{@{}l*{12}{c}@{}}
\toprule
$c$ & $0.37$ & $0.375$ & $0.38$ & $0.39$ & $0.40$ & $0.42$ & $0.45$ & $0.50$ & $0.60$ & $0.70$ & $1.00$ & $2.00$\\
\midrule
$n_0(c)$ & $666738$ & $62553$ & $20881$ & $5913$ & $2689$ & $967$ & $372$ & $140$ & $46$ & $24$ & $9$ & $8$\\
\bottomrule
\end{tabular}
\caption{The cutoff $n_0(c)$ of Theorem~\ref{thm:main}: the least $n\ge8$ from which on the conditions~\eqref{eq:n0conds} hold, with $c''=c\,(r_2(n)-n^{-4})$ and the parameters $\tau=1/n$, $\delta=0.4n^{-1/2}$, $\eta_1=n^{-1/2}$, $\eta_2=4.5n^{-1/2}$, $N=100n^2\lceil\log_2n\rceil$, evaluated numerically (the conditions were checked for all $n$ up to $3\cdot10^6$, beyond which they hold with a wide margin). It applies to the finite-bit verifier of Appendix~\ref{app:precision}. For every entry except $c=2$ the binding condition is the last one, the size of the Hessian for Gaussian certificates; for $c=2$ it is the requirement $n\ge8$. For $c\le0.5$ the entries are within a factor of $4$ of the least $n$ with $\beta(c\sqrt{3\pi/8})^n\le1/2n$, the requirement of the threshold test alone; the limiting constant $2/(\pi\sqrt3)=0.3676$ does not depend on the parameters.}\label{tab:n0}
\end{table}

\begin{remark}[Where the constants come from]\label{rem:constants}
The constant is the ratio of a far threshold to a close radius. With the Gaussian prover the far threshold is Banaszczyk's $1/\sqrt{2\pi}$, and the close radius at the Gaussian spectral level $\sigma^2=1/2\pi$ is $\sqrt{2\pi v_k}$. The published proof~\cite{AR05} certifies $c=100$: threshold $\tau=\frac12$, spectral bound $3N$, the Taylor inequality $\cos\varphi\ge1-\varphi^2/2$ for soundness, which gives $f_W\ge1-2\pi^2\cdot3\norm e^2\ge\frac12$ for $\norm e\le1/(2\pi\sqrt3)=0.0919$, and a generous rescaling to $1/100$ on the close side and $\sqrt n$ on the far side. Using the close radius $0.0919$ as is and the far side at any $c''>1/\sqrt{2\pi}$ gives $c=\sqrt{6\pi}=4.3416$; placing the spectral bound at its Gaussian expected value $(1+\delta)N/2\pi$ (Lemma~\ref{lem:EBcomplete}(iii)) and letting $\tau\to0$, still with the Taylor inequality, gives the close radius $1/\sqrt\pi$ and $c=1/\sqrt2=0.7071$; the exact second-moment analysis $v_0=\frac1{16}$ gives $c=2/\pi=0.6366$, which by Theorem~\ref{thm:exact} is the end of the road for the original tests; the gradient test, $v_1=\frac18$, gives $c=\sqrt2/\pi=0.4502$; and the Hessian test, $v_2=\frac3{16}$, gives $c=2/(\pi\sqrt3)=0.3676$ (Theorem~\ref{thm:main}). Each of the last three is proved by the argument of Theorem~\ref{thm:main} with the corresponding verifier $\mathcal V^{(k)}$, Lemma~\ref{lem:hessian} being needed only for $k=2$. What remains is the far side.
\end{remark}

\section{The far side}\label{sec:far}

\subsection{The Gaussian prover}

The following threshold governs the Gaussian prover. Let $c_0$ be the infimum of the $c>0$ such that, for every lattice $\Lat\subseteq\R^n$ and every $x\in\R^n$, $\dist(x,\Lat)\ge c\sqrt n$ implies $f_\Lat(x)\le2^{-\Omega(n)}$ (with the implied constant depending on $c$ only). Lemma~\ref{lem:facts}(iv) gives $c_0\le1/\sqrt{2\pi}=0.3989$.

\begin{lemma}[The Gaussian prover passes all tests]\label{lem:gaussianprover}
Let $c>c_0$. There is $\delta=\delta(c)>0$ such that for all sufficiently large $n$, every lattice $\Lat\subseteq\R^n$ and every $t$ with $\dist(t,\Lat)\ge c\sqrt n$, the distribution $P$ of $w\sim D_{\Lat^*}$ conditioned on $\norm w\le n^2$ (so that $P$ is admissible in Definition~\ref{def:far}) satisfies
\[\Sigma(P)\preceq\frac{1+2^{-\delta n}}{2\pi}I,\qquad\E_P\cos(2\pi\langle w,t\rangle)\le2^{-\delta n},\qquad\norm{D_1(P)}\le2^{-\delta n},\qquad\norm{D_2(P)}\le2^{-\delta n}.\]
Consequently $\cfar^{(0)}\le\cfar^{(1)}\le\cfar^{(2)}\le c_0\le1/\sqrt{2\pi}$.
\end{lemma}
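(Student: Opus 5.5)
The plan is to read all four quantities off the coset mass function and its derivatives, using only the hypothesis $c>c_0$ for the objective and a rescaling trick to transfer it to $D_1,D_2$. Then I would pay for the truncation $\norm w\le n^2$ with Lemma~\ref{lem:trunc}. Fix $c'$ with $c_0<c'<c$. By the definition of $c_0$ there is $\delta_0>0$ with $f_{\Lat_1}(x)\le2^{-\delta_0 n}$ whenever $\dist(x,\Lat_1)\ge c'\sqrt n$, for every lattice $\Lat_1$ and all large $n$. I would first prove the claims for the untruncated $D_{\Lat^*}$ and then pass to $P$.

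\emph{Untruncated bounds.} By~\eqref{eq:f}, $\E_{D_{\Lat^*}}\cos(2\pi\langle w,t\rangle)=f_\Lat(t)\le2^{-\delta_0n}$, since $c>c'$. For the derivatives, \eqref{eq:fderiv} gives
\[\norm{D_1}\le\rho(\Lat)^{-1}\sum_{y\in\Lat+t}\norm y\,\rho(y),\qquad\norm{D_2}\le\rho(\Lat)^{-1}\sum_{y\in\Lat+t}\Big(\norm y^2+\frac1{2\pi}\Big)\rho(y).\]
Here I cannot use~\eqref{eq:rhosprime}, because it rests on Banaszczyk's bound and thus needs distance $\ge\sqrt n/\sqrt{2\pi}$, whereas $c$ may lie below $1/\sqrt{2\pi}$. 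Instead I would use the peeling estimates~\eqref{eq:peel} to bound each term by $\mathrm{poly}(n)\,\rho_{s'}(y)$. I would then write
\[\frac{\rho_{s'}(\Lat+t)}{\rho(\Lat)}=\frac{\rho\big((\Lat+t)/s'\big)}{\rho(\Lat/s')}\cdot\frac{\rho(\Lat/s')}{\rho(\Lat)}\le f_{\Lat/s'}(t/s')\,s'^n,\]
using Lemma~\ref{lem:facts}(ii) with $1/s'\le1$. Here $s'^n=(1-1/n)^{-n/2}\le e$, and $\dist(t/s',\Lat/s')\ge c\sqrt{1-1/n}\sqrt n\ge c'\sqrt n$ for large $n$. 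So the $c_0$ hypothesis applied to the lattice $\Lat/s'$ gives $\norm{D_1},\norm{D_2}\le\mathrm{poly}(n)2^{-\delta_0 n}$. The spectral bound $\E ww^\top\preceq\frac1{2\pi}I$ is Lemma~\ref{lem:transference}.

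\emph{Truncation.} Let $p=\PP(\norm w>n^2)$. By Lemma~\ref{lem:facts}(iv) with $c=n^{3/2}$, $p\le\beta(n^{3/2})^n$, and Lemma~\ref{lem:trunc} with $R=n^{3/2}$ bounds $\E[\norm w^2\one_{\norm w>n^2}]$ by $\frac n{\pi e}e^{\pi n^3}\beta(n^{3/2})^n$. Both are $e^{-\Omega(n^4)}$. For any $g$ with $|g|\le1$ and any of $\varphi(w)\in\{1,w,ww^\top\}$,
\[\E_P[g\varphi]-\E[g\varphi]=\frac{p\,\E[g\varphi]-\E[g\varphi\one_{\norm w>n^2}]}{1-p},\]
as in the proof of Lemma~\ref{lem:trunc}. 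Each such difference is therefore superexponentially small; for $\varphi=w$ I would use $\norm w\le1+\norm w^2$. Likewise $\Sigma(P)\preceq\E[ww^\top]/(1-p)\preceq\frac{1+2p}{2\pi}I$. Combining the two parts, every quantity is at most $\mathrm{poly}(n)2^{-\delta_0n}+e^{-\Omega(n^4)}\le2^{-\delta n}$ for any $\delta<\delta_0$ and $n$ large.

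\emph{Consequence and main obstacle.} Taking $\tau_n=\delta_n=\eta_{1,n}=\eta_{2,n}=2^{-\delta n}$ in Definition~\ref{def:far} shows that every $c>c_0$ is $2$-far-certifiable. Hence $\cfar^{(2)}\le c_0$; the other inequalities are already noted after Definition~\ref{def:far} and after the definition of $c_0$. The main obstacle is the derivative bounds below Banaszczyk's threshold. The objective bound is exactly the hypothesis on $c_0$, but the gradient and Hessian are not given by it. The rescaling by $s'$ is the step that converts them into a coset mass of the slightly shrunk lattice $\Lat/s'$ at a slightly smaller distance. It works only because $c_0$ is defined uniformly over all lattices and $c'<c$ leaves room for the factor $\sqrt{1-1/n}$.
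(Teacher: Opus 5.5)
Your proposal is correct and follows essentially the same route as the paper. The objective bound comes straight from the $c_0$ hypothesis at an intermediate constant. The gradient and Hessian are reduced via \eqref{eq:peel} to $\rho_{s'}(\Lat+t)/\rho(\Lat)\le s'^n f_{\Lat/s'}(t/s')$, with the $c_0$ hypothesis applied to the shrunk lattice $\Lat/s'$. The truncation to $\norm w\le n^2$ is paid for by Lemma~\ref{lem:facts}(iv) and Lemma~\ref{lem:trunc} at $R=n^{3/2}$. You also note explicitly that \eqref{eq:rhosprime} cannot be quoted as stated below Banaszczyk's threshold; this matches the paper, which invokes only ``the argument of'' \eqref{eq:rhosprime} with the $c_0$ hypothesis in place of Lemma~\ref{lem:facts}(iv).
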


\begin{proof}
Pick $c_1\in(c_0,c)$ and $\delta_1>0$ such that, for large $n$, every coset of every lattice in $\R^n$ at distance at least $c_1\sqrt n$ has Gaussian mass at most $2^{-\delta_1n}$ times that of the lattice. Let $Q=D_{\Lat^*}$. Then $\Sigma(Q)\preceq I/2\pi$ by Lemma~\ref{lem:transference}, $\E_Q\cos(2\pi\langle w,t\rangle)=f_\Lat(t)\le2^{-\delta_1n}$ by~\eqref{eq:f}, and by~\eqref{eq:fderiv}, \eqref{eq:peel} and the argument of~\eqref{eq:rhosprime},
\[\norm{D_1(Q)}=\frac{\norm{\nabla f_\Lat(t)}}{2\pi}\le\sqrt{\frac n{2\pi e}}\,\frac{\rho_{s'}(\Lat+t)}{\rho(\Lat)},\qquad
\norm{D_2(Q)}=\frac{\norm{\nabla^2f_\Lat(t)}}{4\pi^2}\le\Big(\frac n{\pi e}+\frac1{2\pi}\Big)\frac{\rho_{s'}(\Lat+t)}{\rho(\Lat)},\]
where $(\Lat+t)/s'$ is a coset of $\Lat/s'$ at distance $\dist(t,\Lat)/s'\ge(c/s')\sqrt n\ge c_1\sqrt n$ for large $n$, so that $\rho_{s'}(\Lat+t)=\rho((\Lat+t)/s')\le2^{-\delta_1n}\rho(\Lat/s')\le2^{-\delta_1n}s'^n\rho(\Lat)\le2\cdot2^{-\delta_1n}\rho(\Lat)$ by Lemma~\ref{lem:facts}(ii). So all four quantities for $Q$ are at most $\mathrm{poly}(n)2^{-\delta_1n}$. Conditioning on $\norm w\le n^2$ changes them by at most $p=Q(\norm w>n^2)$, $\E_Q[\norm w\one_{\norm w>n^2}]$ or $\E_Q[\norm w^2\one_{\norm w>n^2}]$, up to the factor $(1-p)^{-1}$, and by Lemma~\ref{lem:facts}(iv) and Lemma~\ref{lem:trunc} with $R=n^{3/2}$ these are at most $e^{\pi R^2}\beta(R)^n\cdot n=n^{3n/2+1}(2\pi e)^{n/2}e^{-\pi n^3(n-1)}$, which is far smaller than $2^{-\delta_1n}$. Any $\delta<\delta_1$ works for large $n$. The last claim follows with $\tau_n=\eta_{1,n}=\eta_{2,n}=\delta_n=2^{-\delta n}$.
\end{proof}

\subsection{The constant of the verifier family}

\begin{theorem}[The constant is the far threshold over the close radius]\label{thm:constant}
For $k\in\{0,1,2\}$, $\cAR^{(k)}=\cfar^{(k)}/\varrho_k$ with $\varrho_k$ as in~\eqref{eq:rhok}. In particular $\cAR^{(2)}=\sqrt{8/(3\pi)}\;\cfar^{(2)}$, and the proof of Theorem~\ref{thm:main} shows explicitly that $\cAR^{(2)}\le(1/\sqrt{2\pi})/\varrho_2=2/(\pi\sqrt3)$.
\end{theorem}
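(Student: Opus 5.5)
We prove the two inequalities $\cAR^{(k)}\le\cfar^{(k)}/\varrho_k$ and $\cAR^{(k)}\ge\cfar^{(k)}/\varrho_k$ separately. Throughout we pass between the rescaled instance at distance $c\,r_k(n)\sqrt n$ and the normalized problem at spectral level $(1+\delta_n)/2\pi$, using Remark~\ref{rem:normalization} and the fact that $r_k(n)\to\varrho_k$ when the tolerances tend to $0$.

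\emph{Upper bound (from distributions to certificates).} Let $c'>\cfar^{(k)}$ be $k$-far-certifiable with sequences $\tau_n,\delta_n,\eta_{1,n},\eta_{2,n}$, and let $c>c'/\varrho_k$. We build a family $\mathcal V_n$ whose tolerances are slightly larger: threshold $2\tau_n+n^{-1}$, spectral slack $2\delta_n+n^{-1}$, and gradient and Hessian tolerances $2\pi(\eta_{1,n}+n^{-1})$ and $4\pi^2(\eta_{2,n}+n^{-1})$, all still tending to $0$. Its close radius $r_k(n)\to\varrho_k$, so a NO instance has $\dist(t',\Lat')>c\,r_k(n)\sqrt n\ge c'\sqrt n$ for large $n$. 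Certifiability then gives a distribution $P$ on $\{\norm w\le n^2\}$. Draw $N_n=n^4R_n^2$ samples from $P$ with $R_n=n^2$. Hoeffding (Lemma~\ref{lem:hoeffding}) controls the threshold test. Matrix Bernstein (Lemma~\ref{lem:mb}) with $\norm{ww^{\top}}\le n^4$ controls $\Sigma$ and $D_2$: the deviation is $O(n^4\sqrt{\log n/N_n})=o(1/n)$. For $D_1$ we use Markov or Bernstein on the vectors $\sin(\cdot)w$. Each test then passes with probability $1-o(1)$, so an accepting list exists. Its bit length is polynomial because the coefficient vectors $B'^{\top}w$ of $\norm w\le n^2$ have polynomially bounded entries. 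Hence $\cAR^{(k)}\le c$.

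\emph{Lower bound (from certificates to distributions).} Suppose a family achieves $c$, and let $c'>c\varrho_k$. For large $n$ take $\Lat$ and $t$ with $\dist(t,\Lat)\ge c'\sqrt n>c\,r_k(n)\sqrt n$. Since $\sigma_n^2=(1+\delta_n)/2\pi$ is fixed, $(\Lat,t)$ is itself a rescaled NO instance, coming from the rational input $(B,t,r_k(n))$ after approximating by a rational basis. There is therefore an accepting certificate $W$ with vectors of norm at most $n^2$, and we take $P=P_W$. By~\eqref{eq:empirical} this $P$ has $\Sigma(P)\preceq\sigma_n^2I$, objective $<\tau_n$, and derivative quantities at most $\eta_{m,n}/(2\pi)^m$. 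These are exactly the conditions of Definition~\ref{def:far}, so $c'$ is $k$-far-certifiable. Letting $c'\downarrow c\varrho_k$ gives $\cfar^{(k)}\le\cAR^{(k)}\varrho_k$.

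\emph{Main obstacle and the explicit bound.} The main obstacle is bookkeeping, not an inequality. The lower bound needs the verifier to see every lattice, while the instances are rational and the close radius $r_k(n)$ may be irrational. We handle this by approximating each real pair by rational inputs whose rescaling lands within $n^{-4}$ of it, in the way the proof of Theorem~\ref{thm:main} allows the rescaling by $q$. We then absorb the perturbation into a slightly smaller $c'$, using continuity of the distance and of the tests; for the exact tests, an arbitrarily small loss in $c'$ suffices. The explicit bound $\cAR^{(2)}\le2/(\pi\sqrt3)$ is simply Theorem~\ref{thm:main} combined with Lemma~\ref{lem:gaussianprover}, and $\sqrt{8/(3\pi)}=1/\varrho_2$.
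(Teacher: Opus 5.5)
Your plan follows the paper's proof in both directions. The empirical distribution of an accepting certificate is a far-certifying distribution, and conversely $N_n=n^8$ independent samples from a far-certifying distribution pass tests (b)--(e) once the tolerances are enlarged by $O(1/n)$.

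\textbf{The concentration step fails as written.} With $N_n=n^8$, your deviation bound $O(n^4\sqrt{\log n/N_n})$ is $O(\sqrt{\log n})$, not $o(1/n)$. That is the bound matrix Bernstein gives when the variance proxy is controlled through the range, $v\le N(n^4/N)^2$. What you need is to control the variance with the spectral constraint itself. For $X_i=\frac1N(w_iw_i^{\top}-\Sigma(P))$, and likewise with the factor $\cos(2\pi\langle w_i,t\rangle)$ for $D_2$, one has $\norm{X_i}\le2n^4/N$ and
\[\sum_i\E X_i^2\preceq\frac1N\E_P\big[\norm w^2ww^{\top}\big]\preceq\frac{n^4}N\Sigma(P)\preceq\frac{n^4}NI .\]
So the deviation is $O(\sqrt{\log n}/n^2)$. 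In particular, a deviation $u=1/(2\pi n)$, which fits inside your enlarged spectral slack, has failure probability $ne^{-\Omega(n^2)}$. The paper takes $u=1/n$ with its own slack $4\pi/n$ and gets $ne^{-n^2/4}$. Alternatively, keep your crude bound and take $N_n$ of order $n^{11}$.

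\textbf{The obstacle you single out does not arise.} In the paper a family \emph{achieves} $c$ if every lattice $\Lat\subseteq\R^n$ and every real $t$ with $\dist(t,\Lat)>c\,r_k(n)\sqrt n$ have an accepting certificate. So in the direction $\cfar^{(k)}\le\varrho_k\cAR^{(k)}$ the certificate for $(\Lat,t)$ is given directly, and no rational approximation is needed.

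If you do route through rational inputs, continuity of the tests is not the whole story. The accepting certificate lives on the dual of the approximating lattice. When you transport it to $\Lat^*$ through its integer coefficient vectors, the vectors may leave $\{\norm w\le n^2\}$, which Definition~\ref{def:far} requires. This can be repaired: certify $(\lambda\Lat,\lambda t)$ with $\lambda<1$ close to $1$, transport, and multiply by $\lambda$. That map sends $(\lambda\Lat)^*$ onto $\Lat^*$, preserves the phases, and shrinks the norms and second moments. But it is extra work that the paper's definition makes unnecessary.
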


\begin{proof}
We treat $k=2$; the cases $k=0,1$ are the same with the corresponding tests omitted. Throughout $\sigma^2=1/2\pi$.

\emph{$\cfar^{(2)}\le\varrho_2\,\cAR^{(2)}$.} Let a family $\mathcal V_n$ achieve $c$ and let $c''>c\varrho_2$. Since $c\,r_2(n)\to c\varrho_2<c''$, for large $n$ every $(\Lat,t)$ with $\dist(t,\Lat)\ge c''\sqrt n$ has $\dist(t,\Lat)>c\,r_2(n)\sqrt n$, hence an accepting certificate $W$ of vectors of norm at most $n^2$. Its empirical distribution $P_W$ lives on $\{w\in\Lat^*:\norm w\le n^2\}$ by test (a) and the norm test, and by~\eqref{eq:empirical} tests (b)--(e) say exactly that
\[\Sigma(P_W)\preceq(1+\delta_n)\sigma^2I,\qquad\E_{P_W}\cos(2\pi\langle w,t\rangle)=f_W(t)<\tau_n,\]
\[\norm{D_1(P_W)}\le\frac{\eta_{1,n}}{2\pi},\qquad\norm{D_2(P_W)}\le\frac{\eta_{2,n}}{4\pi^2}.\]
So $c''$ is $2$-far-certifiable, $\cfar^{(2)}\le c\varrho_2$, and the infimum over achieved $c$ gives the claim.

\emph{$\cAR^{(2)}\le\cfar^{(2)}/\varrho_2$.} Let $c''$ be $2$-far-certifiable with slacks $\tau_n,\delta_n,\eta_{1,n},\eta_{2,n}$, and let $c>c''/\varrho_2$. Define the family by
\[\tau'_n=\tau_n+\frac2n,\qquad\delta'_n=\delta_n+\frac{4\pi}n,\qquad\eta'_{1,n}=2\pi\Big(\eta_{1,n}+\frac2n\Big),\qquad\eta'_{2,n}=4\pi^2\Big(\eta_{2,n}+\frac2n\Big),\qquad N_n=n^8 .\]
These slacks tend to $0$, so $r_2(n)\to\varrho_2$ and $c\,r_2(n)\ge c''$ for large $n$. Fix such an $n$ and $(\Lat,t)$ with $\dist(t,\Lat)>c\,r_2(n)\sqrt n\ge c''\sqrt n$, and let $P$ be the distribution provided by certifiability, supported on vectors of norm at most $n^2$. Let $W$ consist of $N=n^8$ independent samples from $P$; test (a) and the norm test hold. For test (c), Lemma~\ref{lem:hoeffding} with $\zeta=1/n$ gives $f_W(t)<\tau_n+1/n<\tau'_n$ except with probability $2e^{-N/2n^2}$. For test (b), apply Lemma~\ref{lem:mb} to $X_i=\frac1N(w_iw_i^{\top}-\Sigma(P))$: here $\norm{X_i}\le2n^4/N$ and
\[\sum_i\E X_i^2\preceq\frac1N\E_P\big[\norm w^2ww^{\top}\big]\preceq\frac{n^4}N\Sigma(P)\preceq\frac{n^4}NI,\]
so with $u=1/n$ the failure probability is at most $n\exp\big(-\frac{1/2n^2}{n^4/N+2n^4/3nN}\big)\le ne^{-N/4n^6}=ne^{-n^2/4}$, and on the complementary event $s_{\max}(\Sigma(P_W))\le(1+\delta_n)\sigma^2+\frac1n\le(1+\delta'_n)\sigma^2$. For test (d), put $X_i=\sin(2\pi\langle w_i,t\rangle)w_i$, so that $\E\norm{X_i}^2\le\E_P\norm w^2\le n$; Markov's inequality gives $\norm{\frac1N\sum_i(X_i-\E X_i)}\le1/n$ except with probability $n^3/N=n^{-5}$, and then $\norm{\nabla f_W(t)}\le2\pi(\eta_{1,n}+1/n)<\eta'_{1,n}$. For test (e), apply Lemma~\ref{lem:mb} to $X_i=\frac1N(\cos(2\pi\langle w_i,t\rangle)w_iw_i^{\top}-D_2(P))$, for which $\norm{X_i}\le2n^4/N$ and $\sum_i\E X_i^2\preceq\frac1N\E_P[\norm w^2ww^{\top}]\preceq\frac{n^4}NI$ as before; with $u=1/n$ the two-sided bound gives failure probability at most $2ne^{-n^2/4}$, and otherwise $\norm{\nabla^2f_W(t)}\le4\pi^2(\eta_{2,n}+1/n)<\eta'_{2,n}$. The four failure probabilities sum to less than $1$ for $n\ge3$, so an accepting certificate exists, consisting of vectors of norm at most $n^2$ and hence of polynomial bit length. Thus the family achieves every $c>c''/\varrho_2$, so $\cAR^{(2)}\le c''/\varrho_2$, and the infimum over $c''$ gives the claim.
\end{proof}

\subsection{Targets of order two}\label{sec:order2}

Let $2t\in\Lat$ and $t\notin\Lat$. Then $\langle w,t\rangle\in\frac12\Z$ for $w\in\Lat^*$, so
\[\cos(2\pi\langle w,t\rangle)=\chi(w):=(-1)^{2\langle w,t\rangle}\in\{\pm1\},\qquad\sin(2\pi\langle w,t\rangle)=0,\]
and $\Lat^*$ splits into the index-two sublattice $E:=\{\chi=1\}=(\Lat+\Z t)^*$ of \emph{even} vectors and its complement $O:=\{\chi=-1\}$ of \emph{odd} vectors. For a feasible $P$ write
\[\Sigma_O(P):=\sum_{w\in O}P(w)\,ww^{\top},\qquad\Sigma_E(P):=\sum_{w\in E}P(w)\,ww^{\top},\]
so that
\[\Sigma(P)=\Sigma_E+\Sigma_O,\qquad D_1(P)=0,\qquad D_2(P)=\Sigma_E-\Sigma_O,\]
and $\E_P\chi=1-2P(O)$. The prover wants mass on odd vectors; the spectral constraint limits how much, and the Hessian constraint forces the even vectors to carry the same second moment as the odd ones.

\begin{definition}[Spectral capacity of the odd coset]\label{def:capacity}
\[\mu(\Lat,t):=\sup\Big\{\nu(O):\ \nu\ \text{a finitely supported measure on $O$ with}\ \nu\ge0,\ \sum_{w\in O}\nu(w)\,ww^{\top}\preceq I\Big\}.\]
\end{definition}

The capacity is finite: if $\ell:=\min\{\norm w:w\in O\}$ then $\nu(O)\ell^2\le\tr\sum\nu(w)ww^{\top}\le n$, so $\mu\le n/\ell^2$. It has a dual description as a trace-minimal ellipsoid.

\begin{lemma}[Capacity duality]\label{lem:capacity}
$\mu(\Lat,t)=\min\{\tr M:\ M\in\Sym,\ M\succeq0,\ w^{\top}Mw\ge1\ \text{for all }w\in O\}$, and the minimum is attained. If $\nu$ attains the supremum and $M$ the minimum, then $\nu$ is supported on $\{w\in O:w^{\top}Mw=1\}$ and $\tr\big(M(I-\sum\nu(w)ww^{\top})\big)=0$.
\end{lemma}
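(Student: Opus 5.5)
The plan is to treat Lemma~\ref{lem:capacity} as a semidefinite/linear duality of the same type as Theorem~\ref{thm:prover-duality}. We prove weak duality directly, then strong duality by a finite-dimensional separation, and finally read off complementary slackness.

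\emph{Weak duality.} Let $\nu\ge0$ be finitely supported on $O$ with $\sum\nu(w)ww^{\top}\preceq I$, and let $M\succeq0$ satisfy $w^{\top}Mw\ge1$ on $O$. Then
\[\nu(O)\le\sum_w\nu(w)\,w^{\top}Mw=\tr\Big(M\sum_w\nu(w)ww^{\top}\Big)\le\tr M,\]
where the last step uses $M\succeq0$. Hence $\mu\le\inf\tr M$.

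\emph{Attainment of the minimum.} The constraint set is closed, and it is nonempty: $M=\ell^{-2}I$ works, with $\ell=\min_{O}\norm w>0$. Fixing this $M_0$, the sublevel set $\{M\succeq0:\tr M\le\tr M_0\}$ is compact, since $\norm M\le\tr M$ for $M\succeq0$. So the infimum is a minimum.

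\emph{Strong duality.} Fix $\epsilon>0$ and work in $\R\times\Sym$. Let
\[K:=\Big\{\Big(\nu(O)-r,\ \sum\nu(w)ww^{\top}+S\Big):\ \nu\ge0\text{ finitely supported on }O,\ r\ge0,\ S\succeq0\Big\},\]
which is a convex cone. Let $C_2:=\{(x,B):x>\mu+\epsilon,\ B\prec I\}$, which is open and convex. The two sets are disjoint: a point of $K$ in $C_2$ would give an admissible measure with $\nu(O)>\mu$. Separation gives $(\alpha,M)\ne0$ and $\gamma$ with $-\alpha x+\tr(MB)\ge\gamma$ on $K$ and $\le\gamma$ on $C_2$.
\begin{itemize}[label=--]
\item Letting $x\to+\infty$ and $B\to-\infty$ along $-vv^{\top}$ shows $\alpha\ge0$ and $M\succeq0$.
\item Since $K$ is a cone containing $0$, we get $\gamma\le0$ and the functional is $\ge0$ on $K$.
\item The supremum over $C_2$ gives $-\alpha(\mu+\epsilon)+\tr M\le\gamma\le0$.
\item If $\alpha=0$, this forces $\tr M\le0$, hence $M=0$, a contradiction.
\item Normalizing $\alpha=1$ and testing $K$ at $\nu=\delta_w$, $r=S=0$ gives $w^{\top}Mw\ge1$ for every $w\in O$.
\end{itemize}
So $M$ is feasible with $\tr M\le\mu+\epsilon$. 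Letting $\epsilon\to0$ and combining with weak duality proves the equality. The main point of care is this separation step. It has the same structure as in Theorem~\ref{thm:prover-duality}, but the roles are reversed: this is a maximization, and $\alpha=0$ is excluded through the objective, not through a strictly feasible point.

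\emph{Complementary slackness.} If $\nu$ attains $\mu$ and $M$ attains the minimum, both inequalities in the weak-duality chain are equalities. Equality in the first, $\sum\nu(w)(w^{\top}Mw-1)=0$ with nonnegative terms, gives $w^{\top}Mw=1$ on $\operatorname{supp}\nu$. Equality in the second is $\tr\big(M(I-\sum\nu(w)ww^{\top})\big)=0$.
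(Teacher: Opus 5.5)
Your proof is correct and follows the paper's argument: the same weak-duality chain, separation of the same convex set $K$ from an open set $C_2$ of the form $\{x>\cdot,\ B\prec I\}$, and complementary slackness read off from equality in that chain. The only difference is cosmetic. The paper separates at $\mu$ itself rather than at $\mu+\epsilon$, which yields a feasible $M$ with $\tr M\le\mu$ directly. That makes your separate (valid) compactness argument for attainment unnecessary.
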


\begin{proof}
Weak duality: for feasible $\nu$ and $M$, $\nu(O)\le\sum_w\nu(w)w^{\top}Mw=\tr(M\sum\nu(w)ww^{\top})\le\tr M$, since $M\succeq0$ and $I-\sum\nu ww^{\top}\succeq0$; the complementary slackness statement is the case of equality in this chain. For strong duality consider in $\R\times\Sym$ the convex sets
\[K:=\Big\{\Big(\nu(O)-r,\ \sum_w\nu(w)ww^{\top}+S\Big):\ \nu\ge0\ \text{finitely supported on }O,\ r\ge0,\ S\succeq0\Big\},\]
\[C_2:=\{(x,B):\ x>\mu,\ B\prec I\}.\]
They are disjoint: a point of $K$ in $C_2$ would come from a $\nu$ with $\sum\nu ww^{\top}\preceq B\prec I$, hence feasible, and $\nu(O)\ge x>\mu$. Since $C_2$ is open, there are $(\alpha,M)\ne0$ and $\gamma$ with $\ell(x,B):=\alpha x+\tr(MB)\le\gamma$ on $K$ and $\ell\ge\gamma$ on $C_2$. Letting $r\to\infty$ in $K$ gives $\alpha\ge0$, and letting $S=\lambda vv^{\top}$, $\lambda\to\infty$, gives $v^{\top}Mv\le0$ for all $v$, i.e. $M\preceq0$. On $C_2$, $\tr(MB)\ge\tr M$ because $M\preceq0$ and $B\prec I$, so $\inf_{C_2}\ell=\alpha\mu+\tr M\ge\gamma$. In $K$ take $\nu=\lambda\delta_w$ with $w\in O$, $\lambda\ge0$, $r=0$, $S=0$: $\lambda(\alpha+w^{\top}Mw)\le\gamma$ for all $\lambda\ge0$, so $\gamma\ge0$ and $\alpha+w^{\top}Mw\le0$ for every $w\in O$. If $\alpha=0$ then $0\le\gamma\le\tr M\le0$, so $\tr M=0$ and $M=0$, a contradiction; hence $\alpha>0$, and we normalize $\alpha=1$. Put $M':=-M\succeq0$. Then $w^{\top}M'w\ge1$ for all $w\in O$, so $M'$ is dual feasible, and $0\le\gamma\le\mu+\tr M=\mu-\tr M'$ gives $\tr M'\le\mu$. With weak duality, $\tr M'=\mu$, which proves the formula and the attainment of the minimum.
\end{proof}

\begin{theorem}[The prover's program at targets of order two]\label{thm:order2}
Let $2t\in\Lat$, $t\notin\Lat$, $\sigma>0$, and $\mu=\mu(\Lat,t)$.
\begin{enumerate}[label=\textup{(\roman*)}]
\item For every $\eta_1\ge0$, $\Phi^{(0)}_\sigma(\Lat,t)=\Phi^{(1)}_\sigma(\Lat,t;\eta)=1-2\min(1,\sigma^2\mu)$.
\item With the exact Hessian constraint ($\eta_2=0$), for every $\sigma>0$,
\[\Phi^{(2)}_\sigma(\Lat,t)=1-2\min\bigl(1,\sigma^2\mu/2\bigr)=\max(-1,1-\sigma^2\mu).\]
If $\sigma^2\mu\ge2$ this infimum is not attained.
\item With $\norm{D_2(P)}\le\eta_2$ in place of $D_2(P)=0$ and spectral level $(1+\delta)\sigma^2$ in place of $\sigma^2$, every feasible $P$ satisfies
\[\E_P\cos(2\pi\langle w,t\rangle)\ \ge\ 1-\frac{n\big((1+\delta)\sigma^2+\eta_2\big)}{\ell^2},\qquad\ell:=\min\{\norm w:w\in O\},\]
whereas without the Hessian constraint the bound is $1-2n(1+\delta)\sigma^2/\ell^2$.
\item \textup{(Distributions approaching the value)} Let $\nu$ be feasible for Definition~\ref{def:capacity} with $\nu(O)\ge\mu-\epsilon$. If $\sigma^2\mu\le1$, the distribution $P=\sigma^2\nu+(1-\sigma^2\nu(O))\delta_0$ is feasible for the program in \textup{(i)} with value $1-2\sigma^2\nu(O)\le1-2\sigma^2\mu+2\sigma^2\epsilon$. For the Hessian program, choose a scalar $a>0$ with $a\sum\nu(w)ww^{\top}\preceq(\sigma^2/2)I$, put $p=a\nu(O)<1$, and choose an integer $K\ge1$ so that $p(1+1/(4K^2))\le1$. Then
\[P=a\nu+\frac{a}{4K^2}\nu_{2K}+\Bigl(1-p-\frac{p}{4K^2}\Bigr)\delta_0,\qquad\nu_{2K}:=\text{the image of $\nu$ under }w\mapsto2Kw,\]
is feasible for \textup{(ii)} and has value $1-2p$. By taking $a$ and $\nu(O)$ appropriately, this approaches the value in \textup{(ii)} for every $\sigma^2\mu$. If $\nu$ attains $\mu$ and $\sigma^2\mu<2$, one may take $a=\sigma^2/2$ and a finite $K$, obtaining an optimum. When $\nu$ and a dual minimizer $M$ both attain their values, complementary slackness places the sampled odd vectors on $\{w\in O:w^{\top}Mw=1\}$.
\end{enumerate}
\end{theorem}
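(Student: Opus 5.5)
The plan is to reduce everything to the two numbers $x:=P(O)$ and the matrices $\Sigma_O,\Sigma_E$. At an order-two target, $D_1\equiv0$, so the gradient constraint is vacuous. This gives $\Phi^{(1)}=\Phi^{(0)}$ at once, and the objective is $1-2P(O)$. For the lower bounds I would use the dual description of Lemma~\ref{lem:capacity}: fix a minimizer $M\succeq0$ with $\tr M=\mu$ and $w^{\top}Mw\ge1$ on $O$. Then for any feasible $P$,
\[
P(O)\le\sum_{w\in O}P(w)\,w^{\top}Mw=\tr(M\Sigma_O)\le\tr(M\Sigma)\le\sigma^2\mu ,
\]
which together with $P(O)\le1$ gives (i)'s lower bound $1-2\min(1,\sigma^2\mu)$. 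This is weak duality with the rank-$\mu$ quadratic certificate $\lambda=1$ and $M\mapsto 2M$ in Theorem~\ref{thm:prover-duality}.

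For (ii), the exact Hessian constraint gives $\Sigma_E=\Sigma_O$. Hence $2\Sigma_O=\Sigma\preceq\sigma^2I$, so $\Sigma_O\preceq(\sigma^2/2)I$, and the same chain yields $P(O)\le\sigma^2\mu/2$. For (iii), I would use the crude trace form instead of $M$. We have $P(O)\ell^2\le\tr\Sigma_O$. With the tolerance, $\Sigma_O\preceq\Sigma_E+\eta_2I$, so $2\Sigma_O\preceq(1+\delta)\sigma^2I+\eta_2 I$. Taking traces gives $P(O)\le n((1+\delta)\sigma^2+\eta_2)/(2\ell^2)$, and without the Hessian constraint simply $P(O)\le n(1+\delta)\sigma^2/\ell^2$. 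Substituting these into $1-2P(O)$ gives the stated bounds.

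For the upper bounds (iv), I would verify the constructions directly. In (i), $P=\sigma^2\nu+(1-\sigma^2\nu(O))\delta_0$ has $\Sigma=\sigma^2\sum\nu ww^{\top}\preceq\sigma^2I$. In (ii), write $A=\sum\nu ww^{\top}$. The odd part of $P$ is $a\nu$, since $2Kw\in E$ because $\chi(2Kw)=\chi(w)^{2K}=1$. So $\Sigma_O=aA$, while $\Sigma_E=\frac a{4K^2}(2K)^2A=aA$. Hence $D_2=0$ and $\Sigma=2aA\preceq\sigma^2I$. The masses are nonnegative by the choice of $K$, and the value is $1-2a\nu(O)$.

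To approach the value, I would split into cases. If $\sigma^2\mu<2$, take $a=\sigma^2/2$ (allowed because $A\preceq I$); then $p=\sigma^2\nu(O)/2\to\sigma^2\mu/2<1$, and a finite $K$ exists. If $\sigma^2\mu\ge2$, take $a$ with $a\nu(O)$ slightly below $1$, which is possible since $a\le\sigma^2/2$ gives room $\sigma^2\nu(O)/2\ge1-o(1)$; let $K\to\infty$ so that $p(1+1/4K^2)\le1$. This drives the value to $-1$. Attainment when $\nu$ attains $\mu$ is the same computation with $\epsilon=0$. Complementary slackness is quoted from Lemma~\ref{lem:capacity}.

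The main obstacle is non-attainment for $\sigma^2\mu\ge2$. Value $-1$ would require $P(O)=1$, hence $P(E)=0$ and $\Sigma_E=0$. The Hessian constraint then forces $\Sigma_O=0$, so $P$ is supported on odd vectors $w$ with $ww^{\top}=0$, i.e.\ $w=0$, which is even. That is a contradiction, so the infimum is not attained. Finally, the identity $1-2\min(1,\sigma^2\mu/2)=\max(-1,1-\sigma^2\mu)$ is immediate.
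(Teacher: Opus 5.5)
Your proposal is correct and follows the paper's proof in all essentials: the same reduction to $P(O)$ via $D_1\equiv0$ and $D_2=\Sigma_E-\Sigma_O$, the same trace bound for (iii), the same constructions, and the same non-attainment argument. The one variation is in the lower bounds. You integrate the dual minimizer $M$ of Lemma~\ref{lem:capacity}, which amounts to exhibiting the quadratic certificate $\lambda=1$, $2M$ of Theorem~\ref{thm:prover-duality}; note that $2M$ has trace $2\mu$, so ``rank $\mu$'' is a slip. The paper instead reads the same weak duality from the primal side: the finite restrictions of $P|_O/\sigma^2$ (respectively $P|_O/(\sigma^2/2)$) are feasible for Definition~\ref{def:capacity}. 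That route needs only the definition of $\mu$, not the strong-duality half of Lemma~\ref{lem:capacity}.

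One case is missing. In (i) with $\sigma^2\mu>1$ the claimed value is $-1$, and the construction in (iv) does not cover it. Choose $\nu$ with $\sigma^2\nu(O)\ge1$ and scale it down to a measure $\nu'$ with $\sigma^2\nu'(O)=1$; this is still capacity-feasible. Then take $P=\sigma^2\nu'$, which has $\Sigma(P)\preceq\sigma^2I$ and $\E_P\chi=-1$, as the paper does.
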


\begin{proof}
(i) Since $\sin(2\pi\langle w,t\rangle)=0$ on $\Lat^*$, $D_1(P)=0$ for every $P$, so the gradient constraint is void and $\Phi^{(1)}=\Phi^{(0)}$. Let $P$ be feasible, i.e. $\Sigma_E+\Sigma_O\preceq\sigma^2I$. Then $\nu:=P|_O/\sigma^2$ satisfies $\sum\nu ww^{\top}=\Sigma_O/\sigma^2\preceq I$, and every finite restriction of $\nu$ is feasible for Definition~\ref{def:capacity}, so $P(O)=\sigma^2\nu(O)\le\sigma^2\mu$; also $P(O)\le1$. Hence $\E_P\chi=1-2P(O)\ge1-2\min(1,\sigma^2\mu)$. The distributions of (iv) show that this bound is approached: for $\sigma^2\mu\le1$, $P=\sigma^2\nu+(1-\sigma^2\nu(O))\delta_0$ has $\Sigma(P)=\sigma^2\sum\nu ww^{\top}\preceq\sigma^2I$ and $\E_P\chi=1-2\sigma^2\nu(O)$; for $\sigma^2\mu>1$ scale $\nu$ down to $\nu'$ with $\sigma^2\nu'(O)=1$ (which preserves $\sum\nu'ww^{\top}\preceq I$) and take $P=\sigma^2\nu'$, with $\E_P\chi=-1$.

(ii) Let $P$ be feasible with $D_2(P)=\Sigma_E-\Sigma_O=0$. Then $\Sigma(P)=2\Sigma_O\preceq\sigma^2I$, so every finite restriction of $\nu:=P|_O/(\sigma^2/2)$ is feasible for Definition~\ref{def:capacity}, and hence $P(O)\le\sigma^2\mu/2$. Also $P(O)\le1$. Therefore
\[\E_P\chi=1-2P(O)\ge1-2\min(1,\sigma^2\mu/2).\]
For the reverse inequality choose a finitely supported feasible $\nu$ with mass arbitrarily near $\mu$. Scale $\nu$ so that $p:=a\nu(O)$ is arbitrarily near $\min(1,\sigma^2\mu/2)$ from below and $a\sum\nu(w)ww^{\top}\preceq(\sigma^2/2)I$. For an integer $K$ large enough that $p(1+1/(4K^2))\le1$, define $P$ as in (iv). Since $w\in O$ implies $2Kw\in E$, its odd and even second moments are both $a\sum\nu(w)ww^{\top}$, while its total second moment is at most $\sigma^2I$. Its objective is $1-2p$, proving the equality on letting $p$ tend to its upper limit. If $\sigma^2\mu\ge2$, equality in the lower bound would require $P(O)=1$. This is impossible: $\Sigma_O\ne0$ for a distribution supported on $O$, whereas $D_2=0$ would require a nonzero even second moment and hence positive mass on $E$. Thus the value $-1$ is not attained in this range.

(iii) If $\Sigma_E+\Sigma_O\preceq(1+\delta)\sigma^2I$ and $\norm{\Sigma_E-\Sigma_O}\le\eta_2$, then $2\Sigma_O=\Sigma(P)+(\Sigma_O-\Sigma_E)\preceq((1+\delta)\sigma^2+\eta_2)I$, so $P(O)\ell^2\le\tr\Sigma_O\le\frac n2((1+\delta)\sigma^2+\eta_2)$ and $\E_P\chi=1-2P(O)\ge1-n((1+\delta)\sigma^2+\eta_2)/\ell^2$. Without the Hessian constraint, $P(O)\ell^2\le\tr\Sigma_O\le\tr\Sigma(P)\le n(1+\delta)\sigma^2$.

(iv) Feasibility and the values were verified in (i) and (ii); the support statement, when the capacity is attained, is the complementary slackness of Lemma~\ref{lem:capacity}.
\end{proof}

At a target of order two, the infimum of the spectral-only program is nonpositive when $\mu\ge\frac12\sigma^{-2}=\pi$ at $\sigma^2=1/(2\pi)$, while the corresponding condition for the exact-Hessian program is $\mu\ge\sigma^{-2}=2\pi$: the Hessian test \emph{doubles the capacity the prover needs}. Acceptance by a concrete verifier also depends on its positive threshold, tolerances, and finite-support sampling slack. The Gaussian prover corresponds to $\nu=(2\pi/\rho(\Lat^*))\,\rho|_O$, whose finite restrictions are feasible by Lemma~\ref{lem:transference}, so $\mu\ge\nu(O)=2\pi\rho(O)/\rho(\Lat^*)=\pi(1-f_\Lat(t))$; when the Hessian $\nabla^2f_\Lat(t)$ vanishes, $\Sigma_O=\Sigma_E$ for the Gaussian, so the same holds for $\nu=(4\pi/\rho(\Lat^*))\rho|_O$ and $\mu\ge2\pi(1-f_\Lat(t))$, consistent with (ii).

Completeness at targets of order two is therefore a transference theorem for cosets: a statement that $\dist(t,\Lat)\ge c\sqrt n$ forces the capacity past the relevant threshold with sufficient slack (respectively near $\pi$ and $2\pi$ for the two programs at Gaussian spectral level). It is an analogue of Banaszczyk's transference theorem, in which a large covering radius forces a short dual vector, with the length of the shortest odd dual vector replaced by the spectral capacity of the odd dual vectors; a single odd vector of norm $\ell$ already proves $\dist(t,\Lat)\ge\dist(\langle w,t\rangle,\Z)/\norm w=1/(2\ell)$, and the capacity is its robust, averaged form. By the computation above and Lemma~\ref{lem:facts}(iv), the Gaussian prover proves this coset transference for every $c>1/\sqrt{2\pi}$, while Proposition~\ref{prop:obstruction} below shows that it fails for $c<1/(2e\sqrt\pi)$, and for $c<1/(e\sqrt{2\pi})$ with the Hessian test.

\subsection{Code lattices: the Gaussian prover is not optimal}\label{sec:codelattices}

We now exhibit lattices on which the Gaussian prover fails at distance greater than $0.2735\sqrt n$ while another prover succeeds; in the language of Definition~\ref{def:linearverifier}, $f_L(x)>\Phi(L,x)$ there, so the Gaussian prover is not the optimal prover for the Aharonov--Regev verifier. For a binary linear code $C\le\mathbb F_2^n$ let
\[\Lambda(C):=\{v\in\Z^n:\ v\bmod2\in C\}\qquad(\text{Construction~A, \cite{CS99}}),\]
a lattice with $2\Z^n\subseteq\Lambda(C)\subseteq\Z^n$, $\det\Lambda(C)=2^{n-\dim C}$ and dual $\Lambda(C)^*=\frac12\Lambda(C^\perp)$. For $s>0$ and a word $z\in\mathbb F_2^n$ put
\[L:=s\Lambda(C),\qquad x:=s\,z\quad(\text{$z$ read as a $0/1$ vector in $\R^n$}).\]
Write $\theta_0:=\rho(2s\Z)$, $\theta_1:=\rho(2s\Z+s)$, and
\begin{equation}\label{eq:ab}
a=a(s):=\frac{\theta_1}{\theta_0},\qquad b=b(s):=\frac{1-a}{1+a}=\frac{\rho(\Z/s+1/2s)}{\rho(\Z/s)},
\end{equation}
where the second expression for $b$ is Poisson summation ($\theta_0+\theta_1=\rho(s\Z)=s^{-1}\rho(\Z/s)$ and $\theta_0-\theta_1=\sum_k(-1)^k\rho(sk)=s^{-1}\rho(\Z/s+1/2s)$). Both lie in $(0,1)$. Let $W_S(z):=\sum_{c\in S}z^{\mathrm{wt}(c)}$ be the weight enumerator of a set $S\subseteq\mathbb F_2^n$, $\mathrm{wt}(S):=\min_{c\in S}\mathrm{wt}(c)$, and $H$ the binary entropy function $H(p)=-p\log_2p-(1-p)\log_2(1-p)$ with inverse $H^{-1}:[0,1]\to[0,\frac12]$ (in this subsection $H$ and $R$ denote the entropy function and a code rate, not the dual variable and the Bernstein range of earlier sections).

\begin{lemma}\label{lem:cosetlattice}
With the notation above:
\begin{enumerate}[label=\textup{(\roman*)}]
\item $\dist(x,L)=s\sqrt{\mathrm{wt}(C+z)}$;
\item $\rho(L)=\theta_0^n\,W_C(a)$ and $\rho(L+x)=\theta_0^n\,W_{C+z}(a)$, so $f_L(x)=W_{C+z}(a)/W_C(a)$;
\item \textup{(MacWilliams)} $\displaystyle f_L(x)=\frac{\sum_{d\in C^\perp}(-1)^{\langle z,d\rangle}\,b^{\mathrm{wt}(d)}}{\sum_{d\in C^\perp}b^{\mathrm{wt}(d)}}\ \ge\ 1-2\big(W_{C^\perp}(b)-1\big)$.
\end{enumerate}
\end{lemma}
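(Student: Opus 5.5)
All three parts reduce to coordinatewise computations, because $\Lambda(C)$ is a union of cosets of $2\Z^n$ indexed by $C$, and $2\Z^n$ is a product lattice.

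\emph{Part (i).} Write $L+x=s(\Lambda(C)+z)$. Its points are $s u$ with $u\in\Z^n$ and $u\bmod 2\in C+z$. For such $u$, each coordinate $u_j$ that is odd contributes at least $1$ to $\norm u^2$. The number of odd coordinates is $\mathrm{wt}(u\bmod2)$, which is at least $\mathrm{wt}(C+z)$. This gives $\norm u^2\ge\mathrm{wt}(C+z)$. Equality is attained by the $0/1$ lift of a minimum-weight word of $C+z$. Since $\dist(x,L)$ is the least norm of a point of $L+x$, scaling by $s$ gives (i).

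\emph{Part (ii).} Split $\Lambda(C)=\bigsqcup_{c\in C}(c+2\Z^n)$, again with $c$ read as a $0/1$ vector. The Gaussian factorizes over coordinates. A coordinate with $c_j=0$ ranges over $2s\Z$ and contributes $\theta_0$; a coordinate with $c_j=1$ ranges over $2s\Z+s$ and contributes $\theta_1$. Hence $\rho(s(c+2\Z^n))=\theta_0^{n-\mathrm{wt}(c)}\theta_1^{\mathrm{wt}(c)}=\theta_0^na^{\mathrm{wt}(c)}$. Summing over $c\in C$ gives $\rho(L)=\theta_0^nW_C(a)$. The same computation with $C+z$ in place of $C$ gives $\rho(L+x)$, and the ratio is $f_L(x)$.

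\emph{Part (iii).} I will not prove MacWilliams abstractly. I will use \eqref{eq:f} directly, $f_L(x)=\E_{w\sim D_{L^*}}\cos(2\pi\langle w,x\rangle)$, with $L^*=\frac1{2s}\Lambda(C^\perp)$.
\begin{itemize}
\item Apply part (ii) to $L^*$, with $s$ replaced by $1/2s$ and $C$ by $C^\perp$. For the coset $\frac1{2s}(d+2\Z^n)$ with $d\in C^\perp$, the per-coordinate masses are $\rho(\Z/s)$ and $\rho(\Z/s+1/2s)$. Their ratio is exactly $b$ by the second expression in \eqref{eq:ab}, so the coset has mass $\rho(\Z/s)^nb^{\mathrm{wt}(d)}$.
\item For $w=\frac1{2s}(d+2k)$, the pairing is $\langle w,x\rangle=\frac12\langle d,z\rangle+\langle k,z\rangle$. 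The cosine is therefore constant, equal to $(-1)^{\langle z,d\rangle}$, on each coset.
\item Summing over the cosets then gives the displayed formula for $f_L(x)$.
\end{itemize}

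For the inequality, write $S=W_{C^\perp}(b)$ for the denominator. The word $d=0$ contributes $1$ to the numerator. Every other word contributes at least $-b^{\mathrm{wt}(d)}$, so the numerator is at least $1-(S-1)$. Hence
\[f_L(x)\ge\frac{2-S}{S}\ge 1-2(S-1).\]
The last step is $(2-S)/S-1+2(S-1)=2(S-1)^2/S\ge0$, valid because $S\ge1$.

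The only delicate point is bookkeeping in part (iii): checking that the dual scaling $1/2s$ turns the primal parameters $\theta_0,\theta_1$ into exactly the pair whose ratio is $b$. Once the mass of each dual coset is written in product form, this is immediate from \eqref{eq:ab}.
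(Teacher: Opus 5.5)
Your proof is correct. Parts (i) and (ii) follow the paper's argument: reduce modulo $2\Z^n$ and factor the Gaussian over coordinates. Your final inequality is also the paper's, since $(2-S)/S=1-2(S-1)/S$.

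The genuine difference is how you obtain the formula in (iii).

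The paper stays on the primal side. It takes the two weight enumerators from (ii) and transforms them with the MacWilliams identity for $g(v)=a^{\mathrm{wt}(v)}$. The parameter $b=(1-a)/(1+a)$ then comes out of $\hat g(d)=(1+a)^{n}b^{\mathrm{wt}(d)}$, i.e.\ from the finite Fourier transform on $\mathbb F_2^n$.

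You instead pass to the dual lattice through \eqref{eq:f}. You split $L^*=\frac1{2s}\Lambda(C^\perp)$ into the cosets $\frac1{2s}(d+2\Z^n)$ and note that the character $\cos(2\pi\langle w,x\rangle)$ is constant, equal to $(-1)^{\langle z,d\rangle}$, on each coset. You then read off the coset masses by applying the computation of (ii) to $L^*$. That last step needs the Poisson form $b=\rho(\Z/s+1/2s)/\rho(\Z/s)$ from \eqref{eq:ab}.

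So you trade the discrete Fourier transform on $\mathbb F_2^n$ for two instances of continuous Poisson summation that the paper has already established, in \eqref{eq:f} and in the justification of \eqref{eq:ab}.

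The paper's route is self-contained at the level of codes and does not need the second expression for $b$. Yours makes explicit the even/odd split of $L^*$: the odd dual vectors are exactly the cosets with $\langle z,d\rangle=1$. This is the picture the paper describes informally after Theorem~\ref{thm:cosets} and then uses in the parity prover of Proposition~\ref{prop:parity}.
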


\begin{proof}
(i) $L+x=s\{v\in\Z^n:v\bmod2\in C+z\}$, and the shortest $v\in\Z^n$ with $v\bmod2=c$ has entries in $\{0,\pm1\}$ and norm $\sqrt{\mathrm{wt}(c)}$. (ii) Grouping the vectors of $L$ by their residue $c\in C$,
\[\rho(L)=\sum_{c\in C}\prod_{i=1}^n\rho(2s\Z+sc_i)=\sum_{c\in C}\theta_0^{\,n-\mathrm{wt}(c)}\theta_1^{\,\mathrm{wt}(c)}=\theta_0^n\,W_C(a),\]
and likewise for $L+x$ with $c$ replaced by $c+z$. (iii) For $g(v)=a^{\mathrm{wt}(v)}$ on $\mathbb F_2^n$, expanding the indicator of $C$ in characters gives
\[\sum_{c\in C}g(c+u)=\frac1{|C^\perp|}\sum_{d\in C^\perp}(-1)^{\langle u,d\rangle}\hat g(d),\qquad\hat g(d):=\sum_{v\in\mathbb F_2^n}g(v)(-1)^{\langle v,d\rangle}=(1+a)^{n-\mathrm{wt}(d)}(1-a)^{\mathrm{wt}(d)}\]
(the MacWilliams identity, \cite[Chapter~5]{MS77}). Dividing the identity for $u=z$ by the one for $u=0$ gives the formula. The $d=0$ term of the denominator is $1$, every other term is positive, and the numerator differs from the denominator by at most $2\sum_{d\ne0}b^{\mathrm{wt}(d)}=2(W_{C^\perp}(b)-1)$.
\end{proof}

Two facts about codes complete the picture. First, the sphere-covering bound: for every code $C$ of dimension $k$ there is a word $z$ with $\mathrm{wt}(C+z)\ge nH^{-1}(1-k/n)$, because the Hamming balls of radius $r$ about the $2^k$ codewords cover $\mathbb F_2^n$ only if $2^k\cdot2^{nH(r/n)}\ge2^n$. Second, if $C$ is a uniformly random subspace of dimension $k$, then $C^\perp$ is a uniformly random subspace of dimension $n-k$, so
\begin{equation}\label{eq:randomcode}
\E\big[W_{C^\perp}(b)-1\big]=\sum_{d\ne0}b^{\mathrm{wt}(d)}\,\PP(d\in C^\perp)\ \le\ 2^{-k}(1+b)^n .
\end{equation}

\begin{theorem}[Coset lattices]\label{thm:cosets}
Let $s>0$ and $R\in(0,1)$ satisfy $\log_2(1+b(s))<R$. Then for all sufficiently large $n$ there are a lattice $L\subseteq\R^n$ and a point $x$ with
\[\dist(x,L)\ \ge\ s\sqrt{H^{-1}(1-R)}\,\sqrt n\qquad\text{and}\qquad\rho(L+x)\ \ge\ \tfrac12\rho(L).\]
Consequently
\[c_0\ \ge\ c^*:=\sup\big\{\,s\sqrt{H^{-1}(1-R)}\ :\ s>0,\ \log_2(1+b(s))<R<1\,\big\},\]
where $c_0$ is the Gaussian threshold defined above. A one-variable numerical maximization gives $c^*\approx0.2736$ (near $s=0.5038$); this decimal is an approximation, not a certified value of the supremum. The explicit rational choice $s_0=2519/5000$, $R_0=313/2500$ gives the rigorous lower bound $c^*\ge s_0\sqrt{H^{-1}(1-R_0)}>0.2735$.
\end{theorem}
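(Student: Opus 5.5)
The plan is to use a random linear code and a covering word. Fix $s,R$ with $\log_2(1+b)<R<1$, and put $k=\lceil Rn\rceil$. Let $C$ be a uniformly random $k$-dimensional subspace of $\mathbb F_2^n$. By \eqref{eq:randomcode},
\[\E[W_{C^\perp}(b)-1]\le2^{-k}(1+b)^n\le2^{-n(R-\log_2(1+b))},\]
which is exponentially small. Markov's inequality then gives, for large $n$, a fixed code $C$ of dimension $k$ with $W_{C^\perp}(b)-1\le\frac14$. Lemma~\ref{lem:cosetlattice}(iii) gives $f_L(x)\ge1-2\cdot\frac14=\frac12$ for every word $z$ simultaneously, and this uniformity in $z$ is the point of the construction.

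For this $C$ we pick $z$ by the sphere-covering bound, so that $\mathrm{wt}(C+z)\ge nH^{-1}(1-k/n)$. Lemma~\ref{lem:cosetlattice}(i) then gives $\dist(x,L)=s\sqrt{\mathrm{wt}(C+z)}\ge s\sqrt{H^{-1}(1-k/n)}\sqrt n$. One technicality remains: $k/n$ exceeds $R$ by at most $1/n$. I would handle it by running the argument with some $R'\in(\log_2(1+b),R)$ and $k=\lceil R'n\rceil\le Rn$ for large $n$. Since $H^{-1}$ is increasing, $H^{-1}(1-k/n)\ge H^{-1}(1-R)$, which gives the stated distance bound. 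The mass bound $\rho(L+x)\ge\frac12\rho(L)$ is $f_L(x)\ge\frac12$.

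For the consequence on $c_0$, take any $c<c^*$ and choose $(s,R)$ admissible with $s\sqrt{H^{-1}(1-R)}>c$. This gives, for all large $n$, a pair with $\dist(x,L)\ge c\sqrt n$ but $f_L(x)\ge\frac12$, which is not $2^{-\Omega(n)}$. So $c$ does not belong to the set defining $c_0$. The defining property is monotone in $c$, because a larger distance is a stronger hypothesis, so every $c'$ in that set satisfies $c'\ge c$. Hence $c_0\ge c$ for every $c<c^*$, and therefore $c_0\ge c^*$.

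The main obstacle is the rigorous numerical certificate at $s_0=2519/5000$, $R_0=313/2500$. There are two checks.
\begin{enumerate}
\item \textbf{Admissibility, $\log_2(1+b(s_0))<R_0$.} I would bound $b(s_0)=(\theta_0-\theta_1)/(\theta_0+\theta_1)$ using the dual form $b=\rho(\Z/s+1/2s)/\rho(\Z/s)$. For $1/s\approx1.985$ both theta series converge extremely fast, so truncating after two or three terms with explicit geometric tail bounds gives an interval for $b$. From that interval one checks $1+b<2^{R_0}$. I expect the margin here to be small, so the truncation must be done carefully, with upper bounds on the numerator and lower bounds on the denominator.
\item \textbf{The distance constant, $s_0\sqrt{H^{-1}(1-R_0)}>0.2735$.} Equivalently, I need $H^{-1}(1-R_0)>p^*:=(0.2735/s_0)^2$. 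Since $H$ is increasing on $[0,\frac12]$, this amounts to $H(p^*)<1-R_0$, a single evaluation of logarithms with a guaranteed error bound.
\end{enumerate}
The decimal $c^*\approx0.2736$ is only reported as a numerical approximation and needs no proof.
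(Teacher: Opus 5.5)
Your proposal is correct and follows the paper's proof. The ingredients are the same: a random code $C$ with $\E[W_{C^\perp}(b)-1]$ exponentially small by \eqref{eq:randomcode}, a deep-hole word $z$ from the sphere-covering bound, Lemma~\ref{lem:cosetlattice}(i) and (iii), and interval-arithmetic checks at $(s_0,R_0)$. The differences are cosmetic. The paper takes $k=\lfloor Rn\rfloor$ directly, paying a factor $2$ in $2^{-k}$, instead of introducing your auxiliary $R'$. It also encloses $b(s_0)$ through truncated primal sums $\theta_0,\theta_1$ and computes $H^{-1}(1-R_0)$ by bisection. Your dual theta series and single evaluation of $H(p^*)<1-R_0$ work equally well.
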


\begin{proof}
Let $k=\lfloor Rn\rfloor$ and let $C$ be a uniformly random $k$-dimensional subspace of $\mathbb F_2^n$. By~\eqref{eq:randomcode}, $\E[W_{C^\perp}(b)-1]\le2\cdot2^{-n(R-\log_2(1+b))}\le\frac18$ for large $n$, so with probability at least $\frac12$, $W_{C^\perp}(b)-1\le\frac14$. Fix such a $C$, let $z$ be a word with $\mathrm{wt}(C+z)\ge nH^{-1}(1-k/n)\ge nH^{-1}(1-R)$ (as $k\le Rn$ and $H^{-1}$ is increasing), and set $L=s\Lambda(C)$, $x=sz$. Lemma~\ref{lem:cosetlattice}(i) gives the distance and (iii) gives $f_L(x)\ge1-2\cdot\frac14=\frac12$. Since this works for every strict admissible pair, $c_0\ge c^*$. For the displayed rational pair, truncate both sums $\theta_0=\sum_je^{-\pi(2s_0j)^2}$ and $\theta_1=\sum_je^{-\pi(2s_0j+s_0)^2}$ at $|j|\le5$; every omitted term has exponent at least $\pi(11s_0)^2$ and consecutive omitted terms on each side decrease by a factor at least $e^{-8\pi s_0^2}$, so all omitted terms together are below $4e^{-\pi(11s_0)^2}/(1-e^{-8\pi s_0^2})<10^{-41}$. Interval arithmetic for the truncated sums, with these tail bounds, then gives $0.09060547<b(s_0)<0.09060549$, $0.1251292<\log_2(1+b(s_0))<0.1251295<R_0$, and $0.2947671<H^{-1}(1-R_0)<0.2947673$ (bisection on the increasing function $H$), whence $s_0\sqrt{H^{-1}(1-R_0)}>0.2735254$. A short script reproducing these enclosures in the interval arithmetic of \texttt{mpmath} accompanies the source files. The approximate value of $c^*$ quoted above comes from maximizing $s^2H^{-1}(1-\log_2(1+b(s)))$ numerically; no claim of a certified global maximum is needed here. (For comparison, Siegel's mean value theorem gives heavy cosets of random unimodular lattices at distance $(1-o(1))\sqrt{n/(2\pi e)}=0.2420\sqrt n$, by a simpler version of the argument of Lemma~\ref{lem:randomorder2} below; the code lattices beat this.)
\end{proof}

For the explicit parameters above, $L=0.5038\,\Lambda(C)$ with $C$ of rate approximately $0.1252$; the lattice contains the cubic lattice $2s\Z^n$ with $2s\approx1$, the coset $L+x$ consists of the $2^{Rn}$ nontrivial residues of $C+z$, all of weight at least $0.2947n$, and these carry at least half of the Gaussian mass of $L$ although the nearest point of the coset is at distance more than $0.2735\sqrt n$. In the dual picture, $L^*=\frac1{2s}\Lambda(C^\perp)$, and $f_L(x)$ is the average of the character $(-1)^{\langle z,d\rangle}$ over $L^*$ with Gaussian weights; the cancellation that would make $f_L(x)$ small never happens because the odd dual vectors (those with $\langle z,d\rangle=1$) carry too little Gaussian mass. A prover that samples odd vectors directly does better, and Proposition~\ref{prop:parity} balances their second moments with doubles to pass the Hessian test.

\begin{proposition}[The parity prover]\label{prop:parity}
Put $c_{\mathrm p}:=s_0\sqrt{H^{-1}(1-R_0)}>0.2735$ for the rational pair $s_0,R_0$ in Theorem~\ref{thm:cosets}. For every $c<c_{\mathrm p}$ and all large $n$ there are $L\subseteq\R^n$ and $x$ with $2x\in L$, $\dist(x,L)\ge c\sqrt n$ and $f_L(x)\ge\frac12$, together with a probability distribution $P$ on $\{w\in L^*:\norm w\le0.77\sqrt n\}$ with
\[\Sigma(P)\preceq0.1582\,I\prec\frac1{2\pi}I,\qquad\E_P\cos(2\pi\langle w,x\rangle)=-0.06,\qquad D_1(P)=0,\qquad D_2(P)=0 .\]
Hence $(L,x)$ is certified by the Hessian-augmented family (by the sampling argument in the proof of Theorem~\ref{thm:constant}), while the Gaussian prover fails on it. In particular nothing we know excludes $\cfar^{(2)}<c_0$.
\end{proposition}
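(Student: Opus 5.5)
We reuse the code-lattice construction of Theorem~\ref{thm:cosets}: $L=s_0\Lambda(C)$ with $C$ a random code of rate about $R_0$, and $x=s_0z$ for a deep-hole word $z$. Then $2x=2s_0z\in2s_0\Z^n\subseteq L$, and Lemma~\ref{lem:cosetlattice} gives $f_L(x)\ge\frac12$. Using $\mathrm{wt}(C+z)\ge nH^{-1}(1-k/n)$, we get $\dist(x,L)\ge c\sqrt n$ for every $c<c_{\mathrm p}$ once $n$ is large. The target has order two, so Theorem~\ref{thm:order2} applies. We have $D_1(P)=0$ automatically, and the task is to exhibit a measure on the odd vectors $O=\{w=\frac1{2s_0}d:\ d\in\Lambda(C^\perp),\ \langle z,d\bmod2\rangle=1\}$ with large mass and small second moment. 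The Hessian is then balanced by the doubling trick of Theorem~\ref{thm:order2}(iv).

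The odd measure will be concrete: the uniform distribution on the $0/\pm1$ lifts of the odd dual codewords of a fixed weight $j$. Let $d\in C^\perp$ with $\langle z,d\rangle=1$ and $\mathrm{wt}(d)=j$, and take all $2^j$ sign lifts $\frac1{2s_0}\sum_i\pm d_ie_i$. Averaging over signs kills the off-diagonal terms, and averaging further over a coordinate-symmetric family makes the second moment a multiple of $I$. Since a random code is not symmetric, I would instead bound the diagonal: coordinate $i$ receives weight $\frac{1}{4s_0^2}\Pr(d_i=1)$. We need the odd codewords of weight $j$ to cover coordinates nearly uniformly, which holds with high probability for random $C$ by a second-moment/Chernoff count, since the number of weight-$j$ codewords of $C^\perp$ is about $2^{(1-R_0)n}\binom nj2^{-n}$, exponentially large when $j/n$ exceeds the Gilbert--Varshamov value $H^{-1}(R_0)$. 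Roughly half of them are odd for a generic $z$; if necessary, pick $z$ among deep holes to ensure this, via a counting argument over $z$.

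With $j=\alpha n$, the odd measure of mass $p$ has $\Sigma_O\approx p\,\frac{\alpha}{4s_0^2}I$ and norms $\sqrt{j}/(2s_0)=\sqrt{\alpha n}/(2s_0)$. Choose $\alpha$ slightly above $H^{-1}(R_0)\approx0.0176$ (numerically) so that $\sqrt\alpha/(2s_0)\le0.77$, hence $\alpha\lesssim0.6$, which gives ample room. The stated constants $0.1582$ and $-0.06$ correspond to $p=0.53$. Put $\Sigma_E=\Sigma_O$ using the doubles $2Kw$ with mass $p/(4K^2)$, where $K=1$ suffices. The norm of $2w$ is $\sqrt{\alpha n}/s_0$, which must also be at most $0.77\sqrt n$, so $\alpha\le(0.77s_0)^2\approx0.15$, still admissible. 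The remaining mass goes on $\delta_0$. We get total $\Sigma=2p\frac{\alpha}{4s_0^2}I$, $D_2=\Sigma_E-\Sigma_O=0$, and objective $1-2p$. Solving $1-2p=-0.06$ gives $p=0.53$, and then $\Sigma\preceq0.1582\,I$ needs $\alpha\le0.1582\cdot4s_0^2/1.06\approx0.1515$, consistent with the bound above. Exact equalities $D_2=0$ require the even and odd diagonals to match exactly, not approximately. This is automatic because the doubles are images of the same odd lifts. Exact isotropy is not needed, only $\preceq$, using the max diagonal entry.

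The main obstacle is the uniform-coverage step: showing that for a random $C$ and a suitable deep hole $z$ the odd weight-$j$ dual codewords exist in abundance and cover every coordinate with frequency at most $(1+o(1))j/n$. This is needed because a nonuniform diagonal would inflate $\Sigma$. I would handle it by linearity of expectation over random $C$ (each nonzero $d$ lies in $C^\perp$ with probability $\approx2^{-k}$, pairwise nearly independent), Chebyshev for the counts per coordinate, and a union bound over $n$ coordinates, exploiting the exponential size of the family. The conclusion then follows from the sampling argument of Theorem~\ref{thm:constant}, and the Gaussian prover fails because $f_L(x)\ge\frac12$.
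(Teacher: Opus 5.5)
Your proposal is correct and is essentially the paper's proof. It uses the same code lattice at $(s_0,R_0)$, the sign lifts $\frac1{2s_0}(\varepsilon\circ d)$ of the odd dual codewords of one linear weight, per-coordinate Chebyshev plus a union bound to control the diagonal of $\Sigma(Q)$, and the doubling $P=pQ+\frac p4Q_2+(1-\frac{5p}4)\delta_0$ with $p=0.53$; the paper takes weight $\lfloor0.15n\rfloor$, which lies inside your admissible range.

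The one step to tighten is the choice of $z$. The sphere-covering bound only shows that a deep hole exists, so your plan to intersect the deep holes with the good $z$'s needs a quantified count. Instead, use the slack $c<c_{\mathrm p}$ as the paper does and require only $\mathrm{wt}(C+z)\ge(1-\epsilon)nH^{-1}(1-k/n)$, which a uniformly random $z$ independent of $C$ satisfies with probability $1-2^{-\Omega(n)}$. Then run the variance bound jointly over $(C,z)$, using the exact negative correlation of the events $\{d\in C^\perp,\ \langle d,z\rangle=1\}$ rather than near-independence.
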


\begin{proof}
Take $s=s_0=2519/5000$ and $R=R_0=313/2500$, let $k=\lfloor Rn\rfloor$, let $C$ be a uniformly random $k$-dimensional code, and let $z$ be a uniformly random word, independent of $C$. Fix $\epsilon\in(0,0.01]$ so small that $s_0\sqrt{(1-\epsilon)H^{-1}(1-R_0)}>c$. Three events hold simultaneously with positive probability for large $n$. (1) $W_{C^\perp}(b)-1\le\frac14$, with probability at least $\frac12$ as before. (2) $\mathrm{wt}(C+z)\ge(1-\epsilon)nH^{-1}(1-k/n)$: the coset $C+z$ is a uniformly random coset, and the expected number of its words of weight below this bound is $2^{-(n-k)}\sum_{u<(1-\epsilon)nH^{-1}(1-k/n)}\binom nu=2^{-\Omega(n)}$. (3) Let $\omega:=\lfloor0.15n\rfloor$, let $O'$ be the set of $d\in C^\perp$ with $\mathrm{wt}(d)=\omega$ and $\langle d,z\rangle=1$, and for each coordinate $i$ let $O'_i=\{d\in O':d_i=1\}$. For a fixed nonzero $d$, $\PP(d\in C^\perp,\ \langle d,z\rangle=1)=\frac12\cdot\frac{2^{n-k}-1}{2^n-1}$, and for distinct nonzero $d,d'$ the two events are negatively correlated, because a uniformly random subspace contains two given independent vectors with probability $\frac{(2^{n-k}-1)(2^{n-k}-2)}{(2^n-1)(2^n-2)}\le\big(\frac{2^{n-k}-1}{2^n-1}\big)^2$ and $z$ is independent of $C$. Hence, since a uniformly random $d$ of weight $\omega$ has $d_i=1$ with probability $\omega/n$,
\[\E|O'_i|=\frac\omega n\,\E|O'|,\qquad\E|O'|=\binom n\omega\frac{2^{n-k}-1}{2(2^n-1)}=2^{n(H(0.15)-R)-O(\log n)}=2^{\Omega(n)},\]
and $\operatorname{Var}|O'_i|\le\E|O'_i|$; by Chebyshev's inequality and a union bound over $i$, with probability $1-2^{-\Omega(n)}$ every $|O'_i|$ lies within a factor $1\pm\epsilon/3$ of $\frac\omega n\E|O'|$, and then $|O'|=\frac1\omega\sum_i|O'_i|$ does too, so every $|O'_i|$ lies within a factor $1\pm\epsilon$ of $\frac\omega n|O'|$.

Fix $(C,z)$ satisfying (1)--(3) and let $L=s\Lambda(C)$, $x=sz$; then $2x=2sz\in2s\Z^n\subseteq L$, and by Lemma~\ref{lem:cosetlattice} and (1), (2), $\dist(x,L)\ge c\sqrt n$ and $f_L(x)\ge\frac12$. For $d\in O'$ and a sign pattern $\varepsilon\in\{\pm1\}^n$ let $w_{d,\varepsilon}:=\frac1{2s}(\varepsilon\circ d)$; since $\varepsilon\circ d\equiv d\pmod2$ this lies in $\frac1{2s}\Lambda(C^\perp)=L^*$, and $\langle w_{d,\varepsilon},x\rangle=\frac12\langle\varepsilon\circ d,z\rangle\equiv\frac12\langle d,z\rangle=\frac12\pmod1$, so these are odd vectors, of norm $\sqrt\omega/2s\le0.3844\sqrt n$. Let $Q$ be uniform on $\{w_{d,\varepsilon}\}$. Averaging over $\varepsilon$ kills the off-diagonal terms, so by (3)
\[\Sigma(Q)=\frac1{4s^2}\diag\Big(\frac{|O'_i|}{|O'|}\Big)_i\preceq\frac{(1+\epsilon)\omega}{4s^2n}I\preceq\frac{0.15(1+\epsilon)}{4\cdot0.5038^2}I\preceq0.14775(1+\epsilon)I .\]
Now let $p=0.53$ and $P:=pQ+\frac p4Q_2+(1-\frac{5p}4)\delta_0$, where $Q_2$ is the image of $Q$ under $w\mapsto2w$; the vectors $2w_{d,\varepsilon}=\frac1s(\varepsilon\circ d)$ lie in $\frac1s\Z^n\subseteq L^*$, are even, and have norm at most $0.7688\sqrt n$. As in Theorem~\ref{thm:order2}(ii), $\Sigma_E(P)=\frac p4\cdot4\Sigma(Q)=p\Sigma(Q)=\Sigma_O(P)$, so $D_2(P)=0$; $D_1(P)=0$ because $\sin(2\pi\langle w,x\rangle)=0$ on $L^*$; $\Sigma(P)=2p\Sigma(Q)\preceq1.06\cdot0.14775\cdot1.01\,I\preceq0.1582\,I$; and $\E_P\chi=1-2P(O)=1-2p=-0.06$.
\end{proof}

\subsection{Obstructions from random lattices}

Lemma~\ref{lem:capacity} also gives obstructions: a target of order two that no certificate can refute is one whose odd dual vectors are all long, since $M=I/\ell^2$ is then dual feasible with $\tr M=n/\ell^2$. Random lattices provide such targets. We use Siegel's mean value theorem~\cite{Siegel45}: if $\Lambda$ is a random unimodular lattice, distributed according to the Haar probability measure on $SL_n(\R)/SL_n(\Z)$, then $\E\sum_{v\in\Lambda\setminus\{0\}}g(v)=\int_{\R^n}g$ for every integrable $g\ge0$; the map $\Lambda\mapsto\Lambda^*$ corresponds to $A\mapsto A^{-\top}$ on $SL_n(\R)$ and preserves Haar measure, so $\Lambda^*$ is Haar-distributed as well. We write $V_n=\pi^{n/2}/\Gamma(\frac n2+1)$ for the volume of the unit ball; Stirling's lower bound $\Gamma(\frac n2+1)\ge\sqrt{2\pi}(\frac n2)^{(n+1)/2}e^{-n/2}$ gives $V_n\le(2\pi e/n)^{n/2}/\sqrt{\pi n}$.

\begin{lemma}[Random lattices with a deep hole of order two]\label{lem:randomorder2}
For every $\epsilon\in(0,\frac12)$ and all sufficiently large $n$ there are a lattice $L\subseteq\R^n$ and a point $t$ with $2t\in L$, $t\notin L$, such that, with $r:=(1-\epsilon)\sqrt{n/2\pi e}$,
\[\dist(t,L)\ge r\qquad\text{and}\qquad\norm w\ge r\ \text{ for every odd vector }w\in L^* .\]
\end{lemma}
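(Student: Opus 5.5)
The plan is to take $L$ to be a Haar-random unimodular lattice $\Lambda$ and $t$ a point of order two in $\R^n/\Lambda$, that is, $t=v/2$ with $v\in\Lambda\setminus2\Lambda$. Then $2t\in L$ and $t\notin L$. The two required properties will be obtained by first-moment (Siegel) counting, after averaging over a random choice of $t$ among the $2^n-1$ nontrivial classes of $\tfrac12\Lambda/\Lambda$. Since $V_nr^n\le(2\pi e/n)^{n/2}r^n/\sqrt{\pi n}=(1-\epsilon)^n/\sqrt{\pi n}$, every Siegel count over a ball of radius $r$ is exponentially small. The task is to arrange that each bad event is counted by such an integral, with at most a factor $2^n$ to absorb. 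The factor $2^n$ is absorbed by averaging over the $2^n-1$ choices of $t$.

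\emph{Distance.} We have $\dist(t,\Lambda)<r$ if and only if some $u\in\Lambda$ satisfies $\norm{u-t}<r$. Equivalently, the vector $y=2u-v$ lies in $(v+2\Lambda)$ and has $\norm y<2r$. So, summed over the $2^n-1$ nonzero classes $v+2\Lambda$, the number of bad pairs (class, $y$) is at most $\#\{y\in\Lambda\setminus2\Lambda:\norm y<2r\}\le\#\{y\in\Lambda\setminus0:\norm y<2r\}$. Its Siegel expectation is $V_n(2r)^n=2^nV_nr^n$. After dividing by the $2^n-1$ choices of $t$, the probability that a uniformly random class is too close is at most about $2V_nr^n\to0$.

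\emph{Odd dual vectors.} For $t=v/2$, a dual vector $w\in\Lambda^*$ is odd exactly when $\langle w,v\rangle$ is odd. For fixed $w\in\Lambda^*\setminus2\Lambda^*$, the map $v\mapsto\langle w,v\rangle\bmod2$ is a nonzero character on $\Lambda/2\Lambda$. Hence $w$ is odd for exactly $2^{n-1}$ of the $2^n-1$ nonzero classes, which is a fraction about $\tfrac12$. Vectors $w\in2\Lambda^*$ are never odd. So the expected number, over random $t$, of odd $w$ with $\norm w<r$ is at most about $\tfrac12\#\{w\in\Lambda^*\setminus0:\norm w<r\}$. Because $\Lambda^*$ is also Haar-distributed, as recalled before the lemma, Siegel's theorem bounds the expectation by $\tfrac12V_nr^n\to0$.

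\emph{Conclusion.} Adding the two expectations, the joint probability over $(\Lambda,t)$ that either property fails is $O((1-\epsilon)^n)<1$ for large $n$, so a good pair exists. The only real care needed is in the distance count: the condition $\norm{u-t}<r$ must be recoded as a lattice vector $2u-v$ of norm less than $2r$. This makes the whole $2^n$ loss appear, and the average over the $2^n-1$ classes exactly cancels it. That cancellation is why the radius is $\sqrt{n/2\pi e}$ rather than something larger. One also has to check that the two counts run over disjoint or harmless overcounts. Overcounting only helps, since we need upper bounds on expectations, so I expect no further obstacle. Measurability of the uniform choice of class is routine, because it is a finite choice.
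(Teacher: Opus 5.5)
Your argument is correct, and it is the paper's proof read through duality. The paper takes $L=\{v\in\Lambda:\langle v,u_0\rangle\ \text{even}\}$ for a random nonzero class $u_0\in\Lambda^*/2\Lambda^*$ and $t\in\Lambda\setminus L$. This is your pair with $\Lambda$ and $\Lambda^*$ exchanged: your coset $\Lambda+v/2$ becomes its odd dual coset $\Lambda^*+u_0/2$, and your odd dual vectors become its coset $\Lambda\setminus L$. So the same two Siegel counts appear with their roles swapped, namely the halving count with the $2^n$ cancellation and the parity-character count. The remaining difference is cosmetic. Subtracting the doubles in $2\Lambda$ makes the coset count exactly $V_nr^n$ instead of your bound $2V_nr^n$, and this does not matter for the conclusion.
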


\begin{proof}
Let $\Lambda$ be a Haar-random unimodular lattice and, given $\Lambda$, let $u_0$ be a uniformly random nonzero class of $\Lambda^*/2\Lambda^*$. Put $L:=\{v\in\Lambda:\langle v,u_0\rangle\ \text{even}\}$, an index-two sublattice (the parity of $\langle v,u_0\rangle$ is a nonzero homomorphism $\Lambda\to\Z/2$), and $t:=v_0$ for any $v_0\in\Lambda\setminus L$. Then $2t\in2\Lambda\subseteq L$, the coset $t+L$ is $\Lambda\setminus L$, so
\[\dist(t,L)=\min\{\norm v:\ v\in\Lambda,\ \langle v,u_0\rangle\ \text{odd}\},\]
and $L^*=\Lambda^*\cup(\Lambda^*+u_0/2)$, where $u_0/2\in L^*\setminus\Lambda^*$; the odd vectors of $L^*$ are $\Lambda^*+u_0/2$, since $2\langle u+u_0/2,v_0\rangle=2\langle u,v_0\rangle+\langle u_0,v_0\rangle$ is odd. Note $V_nr^n\le(1-\epsilon)^n/\sqrt{\pi n}\le(1-\epsilon)^n$.

\emph{The hole is deep.} Vectors of $2\Lambda$ pair evenly with $u_0$. For a fixed $v\in\Lambda\setminus2\Lambda$ the parity of $\langle v,u_0\rangle$ is a nonzero linear functional of the class $u_0$, so it is odd with probability $2^{n-1}/(2^n-1)\le1$. By Siegel's theorem the expected number of $v\in\Lambda$ with $0<\norm v\le r$ and $\langle v,u_0\rangle$ odd is at most $V_nr^n$.

\emph{The odd dual vectors are long.} We have $\min\norm{\Lambda^*+u_0/2}=\frac12\min\norm{u_0+2\Lambda^*}$. For a uniformly random nonzero class, the expected number of vectors of $u_0+2\Lambda^*$ of norm at most $2r$ is
\[\frac1{2^n-1}\,\E\big|\{y\in\Lambda^*\setminus2\Lambda^*:\norm y\le2r\}\big|=\frac{V_n(2r)^n-V_nr^n}{2^n-1}=V_nr^n,\]
because $\Lambda^*$ is Haar-distributed, so the expected number of nonzero vectors of $\Lambda^*$ of norm at most $2r$ is $V_n(2r)^n$, of which those in $2\Lambda^*$ are the doubles of the nonzero vectors of $\Lambda^*$ of norm at most $r$, expected number $V_nr^n$.

By Markov's inequality both counts are zero with probability at least $1-2(1-\epsilon)^n>0$, and then $\dist(t,L)\ge r$ and every odd vector of $L^*$ has norm at least $r$.
\end{proof}

\begin{proposition}[Obstructions]\label{prop:obstruction}
$\cfar^{(2)}\ge\dfrac1{e\sqrt{2\pi}}=0.1468$ and $\cfar^{(1)}\ge\cfar^{(0)}\ge\dfrac1{2e\sqrt\pi}=0.1038$. Consequently
\[\cAR^{(2)}\ \ge\ \sqrt{\frac8{3\pi}}\cdot\frac1{e\sqrt{2\pi}}\ =\ \frac{2}{e\pi\sqrt3}\ =\ 0.1352,\qquad\cAR^{(1)}\ge\frac1{e\pi}=0.1171,\qquad\cAR^{(0)}\ge\frac{\sqrt2}{e\pi}=0.1656 .\]
\end{proposition}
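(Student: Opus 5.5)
The plan is to use the random order-two targets of Lemma~\ref{lem:randomorder2} as instances that no feasible distribution can certify, and to bound their value from below by Theorem~\ref{thm:order2}(iii). First I would fix $\epsilon\in(0,\frac12)$ and take, for large $n$, the pair $(L,t)$ of Lemma~\ref{lem:randomorder2}. Put $r=(1-\epsilon)\sqrt{n/2\pi e}$. Then $\dist(t,L)\ge r$ and every odd dual vector has norm $\ell\ge r$. For any $c<(1-\epsilon)/\sqrt{2\pi e}$, rescaling $L$ (Remark~\ref{rem:normalization}) by a factor $\lambda=c\sqrt n/r$ produces a target at distance at least $c\sqrt n$ whose odd dual vectors have norm at least $r/\lambda=r^2/(c\sqrt n)$.

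Next I would apply Theorem~\ref{thm:order2}(iii) at spectral level $(1+\delta_n)\sigma^2$ with $\sigma^2=1/2\pi$. With the Hessian tolerance $\eta_{2,n}$, every feasible $P$ has objective at least
\[1-\frac{n\big((1+\delta_n)\sigma^2+\eta_{2,n}\big)c^2n}{r^4}.\]
The choice $c=\tau r^2/n$ gives $r^4=c^2n^2/\tau^2$, so this bound becomes $1-\tau^2((1+\delta_n)/2\pi+\eta_{2,n})$. I would keep the ratio $r^2/n=(1-\epsilon)^2/(2\pi e)$ as the free parameter. The objective stays bounded away from $0$, and hence exceeds $\tau_n\to0$, as long as $c^2n^2/r^4\cdot(1/2\pi)<1$. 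This condition says $c<\sqrt{2\pi}\,r^2/n=(1-\epsilon)^2/(e\sqrt{2\pi})$. Letting $\epsilon\to0$ shows that no such $c$ is $2$-far-certifiable, so $\cfar^{(2)}\ge1/(e\sqrt{2\pi})$. The support bound $\norm w\le n^2$ only shrinks the feasible set, so it does not affect this argument.

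Without the Hessian test the factor $2$ in (iii) returns. The condition becomes $2c^2n^2/(2\pi r^4)<1$, that is $c<\sqrt\pi\,r^2/n=(1-\epsilon)^2/(2e\sqrt\pi)$. This gives $\cfar^{(0)}\ge1/(2e\sqrt\pi)$. The inequality $\cfar^{(1)}\ge\cfar^{(0)}$ is already part of Definition~\ref{def:far}; it also follows from the fact that $D_1=0$ at order-two targets. The bounds on $\cAR^{(k)}$ then follow from Theorem~\ref{thm:constant}, dividing by $\varrho_2=\sqrt{3\pi/8}$, $\varrho_1=\sqrt{\pi/4}$ and $\varrho_0=\sqrt{\pi/8}$ respectively. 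This arithmetic produces $2/(e\pi\sqrt3)$, $1/(e\pi)$ and $\sqrt2/(e\pi)$.

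The main obstacle is bookkeeping the scaling correctly, namely checking that distance and odd-vector length trade off as $r\cdot r$ under dilation. The bound that matters is the product $\dist(t,L)\cdot\ell\ge r^2$, which is scale-invariant, and I would state the argument directly in terms of it. A second point to check is that the slacks $\delta_n,\eta_{2,n}\to0$ are harmless, which holds because the inequality is strict once $\epsilon>0$ is fixed.
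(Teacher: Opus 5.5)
Your proposal is correct and follows the paper's proof: the random order-two targets of Lemma~\ref{lem:randomorder2}, a dilation, Theorem~\ref{thm:order2}(iii) with and without the Hessian constraint, and then \eqref{eq:cAR}. The only difference is the normalization: you fix the target distance at $c\sqrt n$, while the paper fixes the odd-vector length at $(1+\epsilon)\sqrt{n/2\pi}$, and both trade off through the scale-invariant product $\dist(t,L)\cdot\ell\ge r^2$. The detour through ``$c=\tau r^2/n$'' reuses the threshold symbol $\tau$ and is unnecessary, since the condition $c^2n^2/(2\pi r^4)<1$ alone carries the argument.
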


\begin{proof}
Fix $\epsilon$ and take $L,t,r$ as in Lemma~\ref{lem:randomorder2}. For the first statement rescale: $L'=\lambda L$ with $\lambda=r\sqrt{2\pi}/((1+\epsilon)\sqrt n)$, so that the odd vectors of $L'^*=\lambda^{-1}L^*$ have norm at least $\ell:=r/\lambda=(1+\epsilon)\sqrt{n/2\pi}$, while
\[\dist(\lambda t,L')\ \ge\ \lambda r=\frac{r^2\sqrt{2\pi}}{(1+\epsilon)\sqrt n}=\frac{(1-\epsilon)^2}{1+\epsilon}\cdot\frac{\sqrt n}{e\sqrt{2\pi}} .\]
By Theorem~\ref{thm:order2}(iii) with $\sigma^2=1/2\pi$, every $P$ on $L'^*$ with $\Sigma(P)\preceq(1+\delta_n)\sigma^2I$ and $\norm{D_2(P)}\le\eta_{2,n}$ has
\[\E_P\cos(2\pi\langle w,\lambda t\rangle)\ \ge\ 1-\frac{n\big((1+\delta_n)/2\pi+\eta_{2,n}\big)}{(1+\epsilon)^2n/2\pi}=1-\frac{1+\delta_n+2\pi\eta_{2,n}}{(1+\epsilon)^2}\ >\ \tau_n\]
for all large $n$, whatever the slacks $\tau_n,\delta_n,\eta_{2,n}\to0$. So no $c<\frac{(1-\epsilon)^2}{1+\epsilon}\cdot\frac1{e\sqrt{2\pi}}$ is $2$-far-certifiable; let $\epsilon\to0$. For the second statement rescale instead by $\lambda=r\sqrt\pi/((1+\epsilon)\sqrt n)$, so that $\ell\ge(1+\epsilon)\sqrt{n/\pi}$ and $\dist(\lambda t,L')\ge\frac{(1-\epsilon)^2}{1+\epsilon}\cdot\frac{\sqrt n}{2e\sqrt\pi}$; the bound without the Hessian constraint in Theorem~\ref{thm:order2}(iii) gives $\E_P\cos(2\pi\langle w,\lambda t\rangle)\ge1-\frac{1+\delta_n}{(1+\epsilon)^2}>\tau_n$. The last claims are~\eqref{eq:cAR}.
\end{proof}

\subsection{Where the far side stands}

Collecting Lemma~\ref{lem:gaussianprover}, Theorem~\ref{thm:constant}, Theorem~\ref{thm:cosets} and Proposition~\ref{prop:obstruction}:
\[\begin{gathered}
0.1468=\frac1{e\sqrt{2\pi}}\ \le\ \cfar^{(2)}\ \le\ c_0\ \le\ \frac1{\sqrt{2\pi}}=0.3989,\\
c_0\ge c^*>0.2735\qquad(c^*\approx0.2736\text{ numerically}),
\end{gathered}\]
\[\cAR^{(2)}=\sqrt{\frac8{3\pi}}\,\cfar^{(2)}\ \in\ [\,0.1352,\ 0.3676\,].\]
The upper end is Theorem~\ref{thm:main}. The Gaussian prover alone cannot give a constant below $c^*/\varrho_2>0.2520$ (using $c^*>0.2735254$ from the proof of Theorem~\ref{thm:cosets}), since it fails on the coset lattices of Theorem~\ref{thm:cosets}; going beyond this bound requires a different prover, and the balanced parity prover of Proposition~\ref{prop:parity} is one, on one family of lattices. A better prover must integrate the dual inequality~\eqref{eq:dualfeas} against something other than $D_{\Lat^*}$. In the other direction, raising $\cfar^{(2)}$ means constructing dual solutions at targets farther than $0.1468\sqrt n$; at targets of order two the best one is the trace-minimal ellipsoid excluding the odd dual vectors (Lemma~\ref{lem:capacity} and Theorem~\ref{thm:order2}), and random lattices give $0.1468$.

\section{Concluding remarks}\label{sec:conclusion}

The linear program over distributions on the dual lattice is the natural object behind the Aharonov--Regev verifier: its value at close targets gives soundness, its value at far targets gives completeness, and its dual is an envelope inequality on the close side and a quadratic certificate on the far side. On the close side the theory is complete for tests on the derivatives of the cosine sum at the target, and it yields the constant $2/(\pi\sqrt3)$ of Theorem~\ref{thm:main}; line tests at multiples of the target would lower the constant by about two percent more (Remark~\ref{rem:otherinstances}), numerically and without a closed form.

The far side is where the theory points; Section~\ref{sec:far} collects what is known. We regard the following as the main open questions. Is $\cfar^{(2)}<1/\sqrt{2\pi}$, i.e. is there a prover that beats Banaszczyk's threshold on \emph{every} lattice? The dual formulation reduces this to excluding quadratic certificates of small trace, and the optimal distributions at order-two targets suggest provers that weight short dual vectors by how far $\langle w,t\rangle$ is from an integer rather than by Gaussian mass. What is the value of the Gaussian threshold $c_0\in[c^*,1/\sqrt{2\pi}]$? And is there a polynomial-time checkable certificate for the near-lattice behaviour of $f_W$ beyond second moments and the derivatives at the target, which is what approaching $\sqrt{n/\log n}$ would require?

\appendix
\section{A finite-bit verifier}\label{app:precision}

This appendix implements the Hessian-augmented verifier of Theorem~\ref{thm:main} in the bit model, in dimensions $n\ge8$; smaller dimensions, and more generally those below $n_0(c)$, are decided by Kannan's algorithm. The input $(B,t,r)$ has rational entries encoded in binary, with $B$ nonsingular and $r>0$; let $\ell_{\mathrm{in}}$ be its bit length. All decisions below are exact comparisons of rational numbers or rational symmetric matrices; transcendental quantities are enclosed by rational intervals before any decision is made.

\subsection{Rational scale and computable constants}

Keep the parameters of the proof of Theorem~\ref{thm:main},
\begin{equation}\label{eq:finite-parameters}
\begin{aligned}
\tau_n&=\frac1n,\qquad \delta_n=\frac{0.4}{\sqrt n},\qquad
S_n=\frac{1+\delta_n}{2\pi},\\
E_{1,n}&=\frac1{\sqrt n},\qquad E_{2,n}=\frac{4.5}{\sqrt n},\qquad
N_n=100n^2\lceil\log_2 n\rceil ,
\end{aligned}
\end{equation}
so that $S_n=\sigma^2$, $E_{1,n}=\eta_1$ and $E_{2,n}=\eta_2$ there. The close radius $r_{2,n}$ of~\eqref{eq:rk} for these parameters is
\[r_{2,n}=\frac{-b_n+\sqrt{b_n^2+4A_nC_n}}{2A_n},\qquad A_n=S_n+\frac{E_{2,n}}{2\pi^3},\quad b_n=\frac{E_{1,n}}{4\pi},\quad C_n=\frac3{16}-\frac{5\tau_n}{8\pi},\]
and for $n\ge8$ the numbers $A_n$, $C_n$ and $r_{2,n}$ lie in fixed compact subintervals of $(0,\infty)$. Put $m_n=n^{-4}$. Enclose $\pi$ by Machin's identity
\[\frac\pi4=4\arctan\frac15-\arctan\frac1{239}\]
and the alternating Taylor series of $\arctan$, whose first omitted term bounds the error; enclose $\sqrt n$ and the square root in $r_{2,n}$ by rational bisection; and evaluate the formula for $r_{2,n}$ in rational interval arithmetic. With $O(\log n)$ bits of precision the resulting interval has width less than $m_n$, and we let $q_n$ be its left endpoint, so that
\begin{equation}\label{eq:rationalradius}
q_n\in\mathbb Q,\qquad 0<r_{2,n}-q_n<m_n .
\end{equation}
In the same way choose rationals $S_n^-$, $E_{1,n}^-$, and $E_{2,n}^-$ with
\begin{equation}\label{eq:lowerthresholds}
0\le S_n-S_n^-<m_n/8,\quad
0\le E_{1,n}-E_{1,n}^-<m_n/8,\quad
0\le E_{2,n}-E_{2,n}^-<m_n/8.
\end{equation}
The rational scale is $q_n/r$, so $\Lat'=(q_n/r)B\Z^n$, $t'=(q_n/r)t$, and the certificate vector represented by $z_i$ is
\[w_i=\frac r{q_n}B^{-\top}z_i .\]
In particular the phase is the rational number
\begin{equation}\label{eq:rationalphase}
\langle w_i,t'\rangle=z_i^{\top}B^{-1}t,
\end{equation}
which is reduced modulo $1$ to $\varphi_i\in[0,1)$ by exact rational arithmetic. The matrix $\Sigma_W=N_n^{-1}\sum_iw_iw_i^{\top}$ is rational and is formed exactly. A certificate is a list of exactly $N_n$ integer vectors $z_i\in\Z^n$ with entries of absolute value at most $n^24^{\ell_{\mathrm{in}}}$; the verifier rejects every other string.

\subsection{Guarded acceptance tests}

Write $F_W=f_W(t')$, $G_W=\nabla f_W(t')$, and $H_W=\nabla^2f_W(t')$. Let $M_W$ be the rational number $1+\max_i\norm{w_i}_1\ge1+\max_i\norm{w_i}$ and $\xi=m_n/(2400M_W^2)$. Compute a rational $\tilde\pi$ with $|\tilde\pi-\pi|\le\xi$, and rationals $\tilde c_i$, $\tilde s_i$ within $\xi$ of $\cos2\tilde\pi\varphi_i$ and $\sin2\tilde\pi\varphi_i$, by Taylor polynomials with the remainder bound, and put
\[\widetilde F_W=\frac1{N_n}\sum_i\tilde c_i,\qquad\widetilde G_W=-\frac{2\tilde\pi}{N_n}\sum_i\tilde s_iw_i,\qquad\widetilde H_W=-\frac{4\tilde\pi^2}{N_n}\sum_i\tilde c_i\,w_iw_i^{\top},\]
a rational number, vector and symmetric matrix. Since $|\cos2\tilde\pi\varphi-\cos2\pi\varphi|\le2|\tilde\pi-\pi|$ for $\varphi\in[0,1)$, each $\tilde c_i$ and $\tilde s_i$ is within $3\xi$ of $\cos2\pi\varphi_i$, respectively $\sin2\pi\varphi_i$, and therefore
\[|\widetilde F_W-F_W|\le3\xi,\qquad\norm{\widetilde G_W-G_W}\le21\xi M_W,\qquad\norm{\widetilde H_W-H_W}\le145\xi M_W^2,\]
using $|\tilde\pi^2-\pi^2|\le(2\pi+\xi)\xi$. Hence
\begin{equation}\label{eq:approxerrors}
|\widetilde F_W-F_W|\le m_n/16,\qquad
\norm{\widetilde G_W-G_W}\le m_n/16,\qquad
\norm{\widetilde H_W-H_W}\le m_n/16 .
\end{equation}
The bit length of $\xi^{-1}$ is $O(\log n+\log M_W)$, polynomial in the input and certificate lengths, and so are the costs of the Taylor polynomials and of the rational vector and matrix operations.

The finite-bit verifier accepts precisely when the certificate has the format above and the following four rational tests succeed:
\begin{align}
(S_n^--m_n/2)I-\Sigma_W&\succeq0,\label{eq:guard1}\\
\widetilde F_W&<\tau_n-m_n/2,\label{eq:guard2}\\
\norm{\widetilde G_W}&\le E_{1,n}^--m_n/2,\label{eq:guard3}\\
(E_{2,n}^--m_n/2)I\pm\widetilde H_W&\succeq0.\label{eq:guard4}
\end{align}
The vector test compares squared norms, so it is an exact rational comparison. A rational symmetric matrix $A$ is positive semidefinite if and only if all coefficients of $\det(xI+A)$ are nonnegative: the forward implication follows by expanding $\prod_j(x+\lambda_j)$ with $\lambda_j\ge0$, while a negative eigenvalue would give a positive root, impossible for a polynomial with nonnegative coefficients. The characteristic polynomial is computable in polynomial time over $\mathbb Q$, so~\eqref{eq:guard1} and the two conditions~\eqref{eq:guard4} are three deterministic polynomial-time tests.

\begin{proposition}[Soundness of the finite-bit verifier]\label{prop:bit-sound}
Let $n\ge8$. For every rational input and every certificate, the verifier~\eqref{eq:guard1}--\eqref{eq:guard4} rejects when $\dist(t,B\Z^n)\le r$.
\end{proposition}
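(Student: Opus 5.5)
The plan is to reduce to Proposition~\ref{prop:sound} (more precisely, to the envelope argument in its proof), run at the rational scale $q_n$ and with the guarded thresholds in place of the exact ones.

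\emph{Setup.} Suppose $\dist(t,B\Z^n)\le r$ and that some certificate passes \eqref{eq:guard1}--\eqref{eq:guard4}. After rescaling by $q_n/r$ we get $\dist(t',\Lat')\le q_n<r_{2,n}$ by \eqref{eq:rationalradius}. Each $w_i=(r/q_n)B^{-\top}z_i$ lies in $\Lat'^*$, so test (a) holds automatically.

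\emph{Transferring the guarded tests to the true quantities.} Combine \eqref{eq:approxerrors} with \eqref{eq:lowerthresholds}:
\begin{itemize}
\item \eqref{eq:guard1} gives $\Sigma_W\preceq S_n^--m_n/2\preceq S_n$, which is test (b) with $\sigma^2=S_n$;
\item \eqref{eq:guard2} gives $F_W<\tau_n-m_n/2+m_n/16<\tau_n$;
\item \eqref{eq:guard3} gives $\norm{G_W}\le E_{1,n}^--m_n/2+m_n/16<E_{1,n}$;
\item \eqref{eq:guard4} gives $-(E^-_{2,n}-m_n/2)I\preceq\widetilde H_W\preceq(E^-_{2,n}-m_n/2)I$, hence $\norm{H_W}\le E_{2,n}$.
\end{itemize}
So $W$ passes tests (a), (b), (d), (e) of $\mathcal V^{(2)}_{\tau_n,\sigma,\eta}$ with $\sigma^2=S_n$, $\eta_1=E_{1,n}$, $\eta_2=E_{2,n}$, and $f_W(t')<\tau_n$.

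\emph{Conclusion.} The radius $r_{2,n}$ in \eqref{eq:rk} was defined from exactly these parameters. Proposition~\ref{prop:sound} then forces $f_W(t')\ge\tau_n$ whenever $\dist(t',\Lat')\le r_{2,n}$. This contradicts the strict inequality obtained from \eqref{eq:guard2}, so the verifier rejects.

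\emph{Checks.} One point to confirm is that the proof of Proposition~\ref{prop:sound} uses only the phase reduction of Lemma~\ref{lem:phase} and the envelope \eqref{eq:env2}. It does not use how the rescaling factor was chosen, so replacing $r_{2,n}/r$ by $q_n/r$ is harmless: we only need $\norm e\le q_n\le r_{2,n}$, and the left side in that proof is increasing in $\norm e$. A second point is that the Hessian test in the paper is stated with the operator norm, and \eqref{eq:guard4} is exactly that norm bound applied to $\widetilde H_W$. The main bookkeeping obstacle is establishing the error bounds \eqref{eq:approxerrors} from the choice of $\xi$. Those bounds are claimed in the text preceding the statement, so we take them as given, and the remaining comparisons are pure margin arithmetic ($m_n/2-m_n/16>0$).
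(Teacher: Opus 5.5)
Your proposal is correct and follows the paper's proof: the margins in \eqref{eq:lowerthresholds} and \eqref{eq:approxerrors} turn acceptance into the exact tests of $\mathcal V^{(2)}$ at the rescaled pair, with $f_W(t')<\tau_n$. Then $\dist(t',\Lat')\le q_n<r_{2,n}$ and Proposition~\ref{prop:sound} give $f_W(t')\ge\tau_n$, a contradiction. Your observation that Proposition~\ref{prop:sound} does not depend on the choice of rescaling factor is exactly what the paper uses implicitly.
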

\begin{proof}
By~\eqref{eq:lowerthresholds} and~\eqref{eq:approxerrors}, acceptance implies the analytical tests $\Sigma_W\preceq S_nI$, $F_W<\tau_n$, $\norm{G_W}\le E_{1,n}$, and $\norm{H_W}\le E_{2,n}$ of $\mathcal V^{(2)}$ at the rescaled pair $(\Lat',t')$. If $\dist(t,B\Z^n)\le r$, then $\dist(t',\Lat')\le q_n<r_{2,n}$, and Proposition~\ref{prop:sound}, applied with the close radius $r_{2,n}$, gives $F_W\ge\tau_n$, a contradiction.
\end{proof}

\begin{proposition}[Completeness and certificate size]\label{prop:bit-complete}
Let $c>2/(\pi\sqrt3)$ and $n\ge n_0(c)$, the cutoff of Table~\ref{tab:n0}. Every rational input with $\dist(t,B\Z^n)>c\sqrt n\,r$ has a certificate accepted by~\eqref{eq:guard1}--\eqref{eq:guard4}, of bit length polynomial in $n$ and $\ell_{\mathrm{in}}$.
\end{proposition}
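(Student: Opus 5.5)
The plan is to reuse the completeness argument in the proof of Theorem~\ref{thm:main}. That proof was already set up for a rescaling by any $q\in(r_2(n)-n^{-4},r_2(n)]$, and it produces certificates passing tests (b)--(e) with margin at least $n^{-4}=m_n$ each. By~\eqref{eq:rationalradius} the rational scale $q_n$ lies in this range. So for a NO instance we have $\dist(t',\Lat')>cq_n\sqrt n\ge c''\sqrt n$ with $c''=c(r_2(n)-n^{-4})$, and since $n\ge n_0(c)$ the conditions~\eqref{eq:n0conds} hold. Hence, with positive probability, $N_n$ independent samples $w_1,\dots,w_N$ from $D_{\Lat'^*}$ satisfy all of the following at once: the event $\mathcal E$ that $\norm{w_i}\le\sqrt n$ for all $i$; $\Sigma_W\preceq(S_n-m_n)I$; $F_W\le\tau_n-m_n$; $\norm{G_W}\le E_{1,n}-m_n$; and $\norm{H_W}\le E_{2,n}-m_n$. (Test (b) and test (e) were both proved on $\mathcal E$, so intersecting with $\mathcal E$ costs nothing beyond the failure probabilities already summed there.) Fix one such list.

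Next, transfer the margins to the guarded tests. By~\eqref{eq:lowerthresholds} and~\eqref{eq:approxerrors}:
\[S_n^--m_n/2>S_n-m_n/8-m_n/2\ge S_n-m_n,\]
so~\eqref{eq:guard1} holds. Similarly,
\[\widetilde F_W\le F_W+m_n/16\le\tau_n-m_n+m_n/16<\tau_n-m_n/2,\]
which gives~\eqref{eq:guard2}. For the gradient,
\[\norm{\widetilde G_W}\le E_{1,n}-m_n+m_n/16<E_{1,n}^--m_n/2,\]
since $E_{1,n}^->E_{1,n}-m_n/8$; this is~\eqref{eq:guard3}. The two conditions in~\eqref{eq:guard4} follow in the same way from $\norm{\widetilde H_W}\le E_{2,n}-15m_n/16$. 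All of this is arithmetic with the constants $1/8$, $1/16$ and $1/2$.

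Then check the format, namely that the integer coefficient vectors $z_i=(q_n/r)B^{\top}w_i$ have entries of absolute value at most $n^24^{\ell_{\mathrm{in}}}$. On $\mathcal E$ we have $\norm{w_i}\le\sqrt n$, so
\[\norm{z_i}_\infty\le (q_n/r)\,\norm{B^{\top}}\,\sqrt n .\]
Here $q_n\le r_{2,n}\le 2$, and $\norm{B}/r$ is bounded by $n\cdot2^{2\ell_{\mathrm{in}}}$ by bounding the entries of $B$ and the inverse of $r$ via their encoding lengths. Together this gives at most $2n^{3/2}4^{\ell_{\mathrm{in}}}\le n^24^{\ell_{\mathrm{in}}}$ for $n\ge8$. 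This step is the one I expect to be the main obstacle: the stated bound leaves little room, so the entry bound for $B$ and the bound on $1/r$ must be charged carefully to the input encoding. If the constant does not fit, I would reorganize the estimate as $|z_{ij}|\le(q_n/r)\sqrt n\max_j\norm{b_j}$ and bound numerator and denominator sizes separately. The certificate then consists of $N_n=100n^2\lceil\log_2n\rceil$ vectors with $O(\ell_{\mathrm{in}}+\log n)$-bit entries, which is polynomial in $n$ and $\ell_{\mathrm{in}}$, and the verifier accepts it.
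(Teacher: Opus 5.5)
Your proposal is correct and follows the paper's proof essentially step for step. You reuse the margin-$n^{-4}$ certificates on the event $\mathcal E$ from the proof of Theorem~\ref{thm:main} at the rational scale $q_n$, transfer the margins to \eqref{eq:guard1}--\eqref{eq:guard4} via \eqref{eq:lowerthresholds} and \eqref{eq:approxerrors}, and bound the entries of $z_i=(q_n/r)B^{\top}w_i$ by $2n^{3/2}4^{\ell_{\mathrm{in}}}\le n^24^{\ell_{\mathrm{in}}}$. The entry bound you flagged as the main obstacle closes exactly as you computed, so no reorganization is needed.
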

\begin{proof}
The verifier rescales by $q_n/r$ with $q_n\in(r_{2,n}-n^{-4},r_{2,n})$, and $\Sigma_W$, $F_W$, $G_W$, $H_W$ are the quantities tested by $\mathcal V^{(2)}$ for this rescaling. By the proof of Theorem~\ref{thm:main}, there is a list $W$ of $N_n$ vectors of $\Lat'^*$ with $\norm{w_i}\le\sqrt n$ (the event $\mathcal E$) such that
\[\Sigma_W\preceq(S_n-m_n)I,\qquad F_W<\tau_n-m_n,\qquad\norm{G_W}\le E_{1,n}-m_n,\qquad\norm{H_W}\le E_{2,n}-m_n .\]
By~\eqref{eq:lowerthresholds} and~\eqref{eq:approxerrors}, $S_n^--m_n/2>S_n-m_n$, $\widetilde F_W<\tau_n-\frac{15}{16}m_n<\tau_n-m_n/2$, $\norm{\widetilde G_W}\le E_{1,n}-\frac{15}{16}m_n<E_{1,n}^--m_n/2$ and $\norm{\widetilde H_W}\le E_{2,n}-\frac{15}{16}m_n<E_{2,n}^--m_n/2$, so all four tests succeed. The integer vectors $z_i=(q_n/r)B^{\top}w_i$ satisfy $\norm{z_i}\le(q_n/r)\norm B\sqrt n$, where $q_n<2$, $1/r\le2^{\ell_{\mathrm{in}}}$ and $\norm B\le n2^{\ell_{\mathrm{in}}}$ because every input entry and the inverse of every nonzero input entry have absolute value at most $2^{\ell_{\mathrm{in}}}$; so their entries are at most $2n^{3/2}4^{\ell_{\mathrm{in}}}\le n^24^{\ell_{\mathrm{in}}}$ in absolute value, the certificate has the required format, and its length is $O(N_nn(\ell_{\mathrm{in}}+\log n))$ bits.
\end{proof}

Theorem~\ref{thm:main} follows from Propositions~\ref{prop:bit-sound} and~\ref{prop:bit-complete}, the $\NP$ witnesses for YES instances, Kannan's algorithm in dimensions below $n_0(c)$, and the $\mathrm{GapSVP}$ transfer in the proof of Theorem~\ref{thm:main}.

\end{document}